\newcommand\ce{\coloneq}%\colonequals without MnSymbol
\newtheoremstyle{sl}% name
  {1.25\medskipamount}%      Space above
  {1.25\medskipamount}%      Space below
  {\slshape}%         Body font
  {}%         Indent amount (empty = no indent, \parindent = para indent)
  {\bfseries}% Thm head font
  {.}%        Punctuation after thm head
  {.5em}%     Space after thm head: " " = normal interword space;
\theoremstyle{sl}
\newtheorem{theorem}{Theorem}
\newtheorem{lemma}[theorem]{Lemma}
\newtheorem{corollary}[theorem]{Corollary}
\newtheorem{definition}[theorem]{Definition}
\newtheoremstyle{defn}% name
  {1.\medskipamount}%      Space above
  {1.\medskipamount}%      Space below
  {\normalfont}%         Body font
  {}%         Indent amount (empty = no indent, \parindent = para indent)
  {\bfseries}% Thm head font
  {.}%        Punctuation after thm head
  {.5em}%     Space after thm head: " " = normal interword space;
\theoremstyle{defn}
\newtheorem{remark}[theorem]{Remark}
\newlength\tmp
\newcommand\twopart[2]{%
	\settowidth\tmp{#1}%
	#1\mbox{\parbox[t]{\linewidth - \tmp}{\raggedright{}#2}}%
}
\newcommand\ie{i.\kern.1eme.}
\newcommand\eg{e.\kern.1emg.}
\renewcommand\subsectionmark[1]{}
\title{\bfseries Lazy Search Trees}
\author{Bryce Sandlund$^1$ and Sebastian Wild$^2$}
\date{$^1$David R. Cheriton School of Computer Science, University of Waterloo\\
         $^2$Department of Computer Science, University of Liverpool\\[2ex]
\small\today}
\begin{document}
\maketitle
\thispagestyle{empty}
%\sloppy

\begin{abstract}\sloppy
We introduce the lazy search tree data structure. The lazy search tree is a comparison-based data structure on the pointer machine that supports order-based operations such as rank, select, membership, predecessor, successor, minimum, and maximum while providing dynamic operations insert, delete, change-key, split, and merge. We analyze the performance of our data structure based on a partition of current elements into a set of \emph{gaps} $\{\Delta_i\}$ based on rank. A query falls into a particular gap and $\emph{splits}$ the gap into two new gaps at a rank $r$ associated with the query operation. If we define $B = \sum_i |\Delta_i| \log_2(n/|\Delta_i|)$, our performance over a sequence of $n$ insertions and $q$ distinct queries is $O(B + \min(n \log \log n, n \log q))$. We show $B$ is a lower bound.

Effectively, we reduce the insertion time of binary search trees from $\Theta(\log n)$ to $O(\min(\log(n/|\Delta_i|) + \log \log |\Delta_i|, \; \log q))$, where $\Delta_i$ is the gap in which the inserted element falls. Over a sequence of $n$ insertions and $q$ queries, a time bound of $O(n \log q + q \log n)$ holds; better bounds are possible when queries are non-uniformly distributed. As an extreme case of non-uniformity, if all queries are for the minimum element, the lazy search tree performs as a priority queue with $O(\log \log n)$ time insert and decrease-key operations. The same data structure supports queries for \emph{any} rank, interpolating between binary search trees and efficient priority queues.

Lazy search trees can be implemented to operate mostly on arrays, requiring only $O(\min(q, n))$ pointers, suggesting smaller memory footprint, better constant factors, and better cache performance compared to many existing efficient priority queues or binary search trees.
Via direct reduction, our data structure also supports the efficient access theorems of the splay tree, providing a powerful data structure for non-uniform element access, both when the number of accesses is small and large.

\end{abstract}

\clearpage
\pagenumbering{arabic}

\section{Introduction}

We consider data structures supporting order-based operations such as rank, select, membership, predecessor, successor, minimum, and maximum while providing dynamic operations insert, delete, change-key, split, and merge. The classic solution is the binary search tree (BST), perhaps the most fundamental data structure in computer science. The original unbalanced structure is credited to papers by Booth and Colin~\cite{Booth60}, Douglas~\cite{Douglas59}, Windley~\cite{Windley60}, and Hibbard~\cite{Hibbard62} in the early 1960's. Since then, a plethora of \emph{balanced} binary search tree data structures have been proposed~\cite{Adelson62,Bayer72,Bayer72b,Andersson89,Galperin93,Sleator85,Seidel96,NIEVER72}, notable examples including AVL trees~\cite{Adelson62}, red-black trees~\cite{Bayer72}, and splay trees~\cite{Sleator85}. A balanced binary search tree is a staple data structure included in nearly all major programming language's standard libraries and nearly every undergraduate computer science curriculum. The data structure is the dynamic equivalent of binary search in an array, allowing searches to be performed on a changing set of keys at nearly the same cost. Extending to multiple dimensions, the binary search tree is the base data structure on which range trees~\cite{Bentley79}, segment trees~\cite{Bentley77}, interval trees~\cite{Edelsbrunner80,McCreight80}, $k$d-trees~\cite{Bentley75}, and priority search trees~\cite{McCreight85} are all built.

The theory community has long focused on developing binary search trees with efficient \emph{query} times. Although $\Omega(\log n)$ is the worst-case time complexity of a query, on non-uniform access sequences binary search trees can perform better than logarithmic time per query by, for example, storing recently accessed elements closer to the root. The splay tree was devised as a particularly powerful data structure for this purpose~\cite{Sleator85}, achieving desirable access theorems such as static optimality, working set, scanning theorem, static finger, and dynamic finger~\cite{Sleator85,Cole2000a,Cole2000b}. The most famous performance statement about the splay tree, however, is the unproven dynamic optimality conjecture, which claims that the performance of the splay tree is within a constant factor of any binary search tree on any sufficiently long access sequence, subsuming all other access theorems. Proving this conjecture is widely considered one of the most important open problems in theoretical computer science, receiving vast attention by data structure researchers~\cite{Allen78,Demaine09,Demaine07,Iacono16,Bose2020,Sleator83,Wilber89,Kozma2019,Chalermsook15,Iacono01,Badoiu06}. Despite ultimately remaining unsolved for nearly four decades, this topic continues to receive extensive treatment~\cite{Iacono16,Bose2020,Chalermsook15,Levy19,Badoiu06}. %

Although widely considered for the task in literature, the binary search tree is not the most efficient data structure for the standard dictionary abstract data type: in practice, dictionaries are almost always implemented by hash tables, which support $O(1)$ time insert, delete, and look-up in expectation~\cite{Fredman84,Pagh04}. The advantage of binary search trees, over hash tables, is that they support \emph{order-based} operations. We call dictionaries of this type \emph{sorted dictionaries}, to differentiate them from the simpler data structures supporting only membership queries.

If we limit the order-based operations required of our sorted dictionary to queries for the minimum or maximum element (or both), a number of alternative solutions to the binary search tree have been developed, known as priority queues. The first of which was the binary heap, invented in 1964 for the heapsort algorithm~\cite{Williams64}. The binary heap achieves asymptotic complexity equivalent to a binary search tree, though due to the storage of data in an array and fast average-case complexity, it is typically the most efficient priority queue in practice. Later, the invention of the binomial heap showed that the merging of two arbitrary priority queues could be supported efficiently~\cite{Vuillemin78,Brown78}, thus proving that the smaller operation set of a priority queue allows more efficient runtimes. The extent to which priority queues can outperform binary search trees was fully realized with the invention of Fibonacci heaps, which showed insertion, merge, and an additional decrease-key operation can all be supported in $O(1)$ amortized time~\cite{Fredman87}. Since then, a number of priority queues with running times close to or matching Fibonacci heaps have been developed~\cite{Fredman86,Chan09,Brodal12,Elmasry09,Haeupler11,Brodal96,Hansen15}. We refer to such priority queues with $o(\log n)$ insertion and decrease-key costs as \textit{efficient} priority queues, to distinguish them from their predecessors and typically simpler counterparts with $O(\log n)$ insertion and/or decrease-key cost.

The history of efficient priority queues contrasts that of binary search trees. Efficient priority queues have been developed for the case when the number of queries is significantly less than the number of insertions or updates. On the other hand, research on binary search trees has focused on long sequences of element \emph{access}. %
Indeed, the dynamic optimality conjecture starts with the assumption that $n$ elements are already present in the binary search tree, placing any performance improvements by considering insertion cost entirely outside of the model. However, the theory of efficient priority queues shows that on some operation sequences, the efficiency gains due to considering insertion cost can be as much as a $\Theta(\log n)$ factor, showing an as-of-yet untapped area of potential optimization for data structures supporting the operations of a binary search tree. Further, aside from the theoretically-appealing possibility of the unification of the theories of efficient priority queues and binary search trees, the practicality of improved insertion performance is arguably greater than that of improved access times. 
For the purpose of maintaining keys in a database, for example, an insert-efficient data structure can provide superior runtimes when the number of insertions dominates the number of queries, a scenario that is certainly the case for some applications~\cite{ONeil96,Brodal03} and is, perhaps, more likely in general.
Yet in spite of these observations, almost no research has been conducted that seriously explores this frontier~\cite{Bose13}.

We attempt to bridge this gap. We seek a general theory of comparison-based sorted dictionaries that encompasses efficient priority queues and binary search trees, providing the operational flexibility of the latter with the efficiency of the former, when possible.
We do not restrict ourselves to any particular BST or heap model; while these models with their stronger lower bounds are theoretically informative, for the algorithm designer these lower bounds in artificially constrained models are merely indications of what \emph{not} to try. If we believe in the long-term goal of improving algorithms and data structures in practice~-- an objective we think will be shared by the theoretical computer science community at large~-- we must also seek the comparison with lower bounds in a more permissive model of computation.%

We present \emph{lazy search trees}. The lazy search tree is the first data structure to support the general operations of a binary search tree while providing superior insertion time when permitted by query distribution. We show that the theory of efficient priority queues can be generalized to support queries for any rank, via a connection with the multiple selection problem. Instead of sorting elements upon insertion, as does a binary search tree, the lazy search
delays sorting to be completed incrementally while queries are answered. %
 A binary search tree and an efficient priority queue are special cases of our data structure that result when queries are frequent and uniformly distributed or only for the minimum or maximum element, respectively. While previous work has considered binary search trees in a ``lazy" setting (known as ``deferred data structures'')~\cite{Karp88,CHING90} and multiple selection in a dynamic setting~\cite{Barbay15,BARBAY16}, no existing attempts fully distinguish between insertion and query operations, severely limiting the generality of their approaches. The model we consider gives all existing results as corollaries, unifying several research directions and providing more efficient runtimes in many cases, all with the use of a single data structure.
Before we can precisely state our results, we must formalize the model in which they are attained.

\subsection{Model and Results}
\label{sec:model}

We consider comparison-based data structures on the pointer machine. While we suggest the use of arrays in the implementation of our data structure in practice, constant time array access is not needed for our results. Limiting operations to a pointer machine has been seen as an important property in the study of efficient priority queues, particularly in the invention of strict Fibonacci heaps~\cite{Brodal12} compared to an earlier data structure with the same worst-case time complexities~\cite{Brodal96}.

We consider data structures supporting the following operations on a dynamic multiset $S$ with 
(current) size $n = |S|$. We call such data structures \emph{sorted dictionaries}:
\begin{itemize}
	\item \texttt{Construction($S$)} $\ce$ Construct a sorted dictionary on the set $S$.
	\item \texttt{Insert($e$)} $\ce$ Add element $e=(k,v)$ to $S$, using key $k$ for comparisons; 
		(this increments $n$).
	\item \texttt{RankBasedQuery($r$)} $\ce$ Perform a rank-based query pertaining to rank $r$ on $S$.
	\item \texttt{Delete(ptr)} $\ce$ Delete the element pointed to by \texttt{ptr} from $S$; (this decrements $n$).
	\item \texttt{ChangeKey(ptr,\,$k'$)} $\ce$ Change the key of the element pointed to by \texttt{ptr} to $k'$.
	\item \twopart{\texttt{Split($r$)} $\ce$ }{Split $S$ at rank $r$, returning two sorted dictionaries $T_1$ and $T_2$ of $r$ and $n-r$ elements, respectively, such that for all $x \in T_1$, $y \in T_2$, $x \leq y$.}
	\item \twopart{\texttt{Merge($T_1$,\,$T_2$)} $\ce$ }{Merge sorted dictionaries $T_1$ and $T_2$ and return the result, given that for all $x \in T_1$, $y \in T_2$, $x \leq y$.}
\end{itemize}

We formalize what queries are possible within the stated operation \texttt{RankBasedQuery($r$)} in \wref{sec:prelim}. For now, we informally define a rank-based query as any query computable in $O(\log n)$ time on a (possibly augmented) binary search tree and in $O(n)$ time on an unsorted array. Operations rank, select, contains, successor, predecessor, minimum, and maximum fit our definition. To each operation, we associate a rank $r$: for membership and rank queries, $r$ is the rank of the queried element (in the case of duplicate elements, an implementation can break ties arbitrarily), and for select, successor, and predecessor queries, $r$ is the rank of the element returned; minimum and maximum queries are special cases of select with $r = 1$ and $r = n$, respectively.

The idea of lazy search trees is to maintain a partition of current elements in the data structure into what we will call \emph{gaps}. We maintain a set of $m$ gaps $\{\Delta_i\}$, $1 \leq i \leq m$, where a gap $\Delta_i$ contains a bag of elements. Gaps satisfy a total order, so that for any elements $x \in \Delta_i$ and $y \in \Delta_{i+1}$, $x \leq y$. Internally, we will maintain structure within a gap, but the interface of the data structure and the complexity of the operations is based on the distribution of elements into gaps, assuming nothing about the order of elements within a gap. Intuitively, binary search trees fit into our framework by restricting $|\Delta_i| = 1$, so each element is in a gap of its own, and we will see that priority queues correspond to a single gap $\Delta_1$ which contains all elements. Multiple selection corresponds to gaps where each selected rank marks a separation between adjacent gaps.

To insert an element $e = (k, v)$, where $k$ is its key and $v$ its value, we find a gap $\Delta_i$ in which it belongs without violating the total order of gaps (if $x \leq k$ for all $x \in \Delta_i$ and $k \leq y$ for all $y \in \Delta_{i+1}$, we may place $e$ in either $\Delta_i$ or $\Delta_{i+1}$; implementations can make either choice). Deletions remove an element from a gap; if the gap is now empty we can remove the gap. When we perform a query, we first narrow the search down to the gap $\Delta_i$ in which the query rank $r$ falls (formally, $\sum_{j=1}^{i-1} |\Delta_j| < r \leq \sum_{j=1}^i |\Delta_j|$). We then answer the query using the elements of $\Delta_i$ and \emph{restructure} the gaps in the process. 
We split gap $\Delta_i$ into two gaps $\Delta'_i$ and $\Delta'_{i+1}$ such that the total order on gaps is satisfied and the rank $r$ element is either the largest in gap $\Delta'_i$ or the smallest in gap $\Delta'_{i+1}$; specifically, either $|\Delta'_i| + \sum_{j=1}^{i-1} |\Delta_j| = r$ or $|\Delta'_i| + \sum_{j=1}^{i-1} |\Delta_j| = r-1$. (Again, implementations can take either choice. We will assume a maximum query to take the latter choice and all other queries the former. More on the choice of $r$ for a given query is discussed in \wref{sec:prelim}. Our analysis will assume two new gaps replace a former gap as a result of each query. Duplicate queries or queries that fall in a gap of size one follow similarly, in $O(\log n)$ time.) We allow duplicate insertions.

Our performance theorem is the following.

\begin{theorem}[Lazy search tree runtimes]
\label{thm:main}
Let $n$ be the total number of elements currently in the data structure and let $\{\Delta_i\}$ be defined as above (thus $\sum_{i=1}^m |\Delta_i| = n$). Let $q$ denote the total number of queries. Lazy search trees support the operations of a sorted dictionary on a dynamic set $S$ in the following runtimes:
\begin{itemize}
	\item {\normalfont \texttt{Construction($S$)}} 
		in $O(n)$ worst-case time, where $|S| = n$.
	\item {\normalfont \texttt{Insert($e$)}} 
		in $O(\min(\log(n/|\Delta_i|) + \log \log |\Delta_i|,\: \log q))$ worst-case time%
		\footnote{\label{fn:note1}
			To simplify formulas, we distinguish between $\log_2(x)$, the binary logarithm for any $x > 0$, 
			and $\log(x)$, which we define as $\max(\log_2(x),1)$.
		}%
		, where $e = (k,v)$ is such that $k \in \Delta_i$.
	\item {\normalfont \texttt{RankBasedQuery($r$)}} 
		in $O(x \log c + \log n)$ amortized time, where the larger resulting gap from the split is of size $cx$ and the other  gap is of size $x$.
	\item {\normalfont \texttt{Delete(ptr)}} 
		in $O(\log n)$ worst-case time.
	\item {\normalfont \texttt{ChangeKey(ptr,\,$k'$)}} 
		in $O(\min(\log q,\, \log \log |\Delta_i|))$ worst-case time, where the element pointed to by $\mathtt{ptr}$, $e=(k,v)$, has $k \in \Delta_i$ and $k'$ moves $e$ closer to its closest query rank%
		\footnote{\label{fn:note2}
			The closest query rank of $e$ is the closest boundary of $\Delta_i$ that was created in response to a query. For gaps $\Delta_i$ with $1\neq i \neq m$, this is the boundary of $\Delta_i$ that is closer with respect to the rank of $k$. Gaps $\Delta_1$ and $\Delta_m$ may follow similarly to $i \neq 1, m$ if a minimum or maximum has been extracted. With a single gap $\Delta_1$, increase-key is supported efficiently if maximums have been removed and decrease-key is supported efficiently if minimums have been removed. If both have been removed, the gap functions as in the general case for $i \neq 1, m$. Intuitively, this is configured to support the behavior of decrease-key/increase-key without special casing when the data structure is used as a min-heap/max-heap.

		} in $\Delta_i$;
		otherwise, {\normalfont \texttt{ChangeKey(ptr,\,$k'$)}} takes $O(\log n)$ worst-case time.
	\item {\normalfont \texttt{Split($r$)}} 
		in time according to {\normalfont \texttt{RankBasedQuery($r$)}}.
	\item {\normalfont \texttt{Merge($T_1$,\,$T_2$)}} 
		in $O(\log n)$ worst-case time.
\end{itemize}
Define $B = \sum_{i=1}^m |\Delta_i| \log_2(n/|\Delta_i|)$. Then over a series of insertions and queries with no duplicate queries, the total complexity is $O(B + \min(n \log \log n, n \log q))$.
\end{theorem}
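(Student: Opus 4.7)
I would decompose the proof into two stages: first establish the per-operation bounds, then stitch them together with an amortized argument for the aggregate bound $O(B + \min(n \log \log n, n \log q))$. The data structure has two levels: an outer \emph{gap tree} whose $m$ leaves correspond to the current gaps $\Delta_i$, plus one \emph{internal gap structure} for each gap holding its elements without committing to any internal order.

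To obtain the per-operation bounds, I would maintain the gap tree simultaneously in two views: (a) as a balanced tree on its $m \leq q+1$ leaves (depth $O(\log q)$), and (b) as a biased/weight-balanced tree on the weights $|\Delta_i|$, so that reaching $\Delta_i$ from the root costs $O(\log(n/|\Delta_i|))$. An \texttt{Insert} locates the target gap using the better of the two views and then inserts into the internal gap structure in $O(\log\log|\Delta_i|)$ time. \texttt{Delete} and \texttt{Merge} use standard balanced-tree techniques for $O(\log n)$ time, and \texttt{Split} is reduced to \texttt{RankBasedQuery}. A \texttt{RankBasedQuery} descends to the target gap in $O(\log n)$ time and then splits the gap via deterministic linear-time selection into pieces of sizes $x$ and $cx$; the raw $\Theta((1+c)x)$ cost is amortized to $O(x\log c+\log n)$ using credits placed during insertion. \texttt{ChangeKey} toward the nearer recently-created boundary is handled inside the internal gap structure in $O(\log\log|\Delta_i|)$ time by treating it as a decrease-key against that boundary.

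To obtain the aggregate bound, I would use a potential $\Phi = \Theta\bigl(\sum_i |\Delta_i| \log_2(n/|\Delta_i|)\bigr)$. An entropy-of-split calculation shows that splitting a gap of size $(1+c)x$ into pieces of sizes $x$ and $cx$ decreases $\Phi$ by $\Theta(x\log c)$, which exactly pays for the amortized selection work; the $O(\log n)$ additive term per query contributes $O(q\log n)$ overall, dominated by the other terms. For insertions, I would charge each element once at the end: the key range of the gap containing an element only narrows over time, so a final-gap charging argument bounds the insertion-time locate cost of an element ending in $\Delta_i$ by $O(\log(n/|\Delta_i|))$ up to the $\log\log$ correction. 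Summing yields $O(B + n\log\log n)$ from the biased view and $O(n\log q)$ from the balanced view; since we always pay the per-insertion minimum, the total is $O(B + \min(n\log\log n, n\log q))$.

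The main obstacle is the internal gap structure: proving that insertions into a gap of size $s$ really can be performed in $O(\log\log s)$ amortized time while leaving the gap amenable to $O(s)$-time selection and supporting change-key toward the nearer created boundary. I would implement this recursively, building on top of a gap of size $s$ a summary of size $O(\log s)$ and recursing by applying the lazy search tree to the summary. The delicate parts are verifying that the recursion bottoms out cleanly, that the inner structure's amortized potential is compatible with the outer $\Phi$, and that bucket promotions triggered by \texttt{ChangeKey} and by large splits can be charged to the same potential without double counting. Once this recursive building block is in hand, the remaining stitching of the per-operation bounds against $\Phi$ is routine.
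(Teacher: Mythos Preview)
Your high-level two-level decomposition (a weighted/biased tree over gaps plus an internal per-gap structure) matches the paper, and your final-gap charging argument for summing the insertion costs to $O(B+n\log\log n)$ is sound. However, the core of your query analysis does not go through, and this is precisely the technical heart of the result.

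First, a sign issue: with $\Phi=\Theta\bigl(\sum_i |\Delta_i|\log_2(n/|\Delta_i|)\bigr)=\Theta(B)$, splitting a gap \emph{increases} $\Phi$ by $\Theta(x\log c)$, it does not decrease it (this is exactly Lemma~\ref{lem:querytime} in the paper). You could flip to $\Phi=\sum_i |\Delta_i|\log_2|\Delta_i|$, which does drop by $\Theta(x\log c)$ on a split, but that still does not rescue the argument. The raw cost you state for a query is $\Theta((1{+}c)x)$, i.e., linear selection on the \emph{entire} gap. When $c$ is large (say $x=1$, $c=\Theta(n)$) the released potential is only $\Theta(\log n)$ while the raw work is $\Theta(n)$; and ``credits placed during insertion'' cannot cover the shortfall either. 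Concretely: insert $n$ elements, then query ranks $1,2,3,\ldots,\sqrt n$ in order. Each query has $x=1$ and raw cost $\Theta(n)$ under your scheme, so the total raw work is $\Theta(n\sqrt n)$, whereas the claimed amortized total is $O(\sqrt n\,\log n)$ and all insertion credits together are $O(n\log\log n)$. No potential of size $O(B)=O(n\log n)$ can bridge that gap.

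What the paper does instead is ensure the raw query cost is \emph{not} linear in the gap size. Each gap is partitioned into $O(\log|\Delta_i|)$ \emph{intervals} whose sizes grow roughly geometrically away from the two gap boundaries (maintained by a merging rule). A query only does linear-time selection on the single interval it lands in, and a credit invariant of the form $|\mathcal I_{i,j}|\le c(\mathcal I_{i,j})+o(\mathcal I_{i,j})$ (credits plus number of elements between the interval and its nearer boundary) guarantees that this interval's size minus its stored credits is at most $x$. The nontrivial part of the proof is then re-establishing this invariant on the $O(1)$ newly created intervals and on the old intervals whose ``outside count'' changed, which is where the $O(x\log c)$ amortized charge actually arises. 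Your recursive summary-of-size-$\log s$ idea targets the $O(\log\log|\Delta_i|)$ insert cost, but it does not give you the property you need here: that a query close to a \emph{previous} query rank touches only a small interval. That locality---not a global entropy potential---is what makes the amortization work.
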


We can also bound the number of pointers needed in the data structure.

\begin{theorem}[Pointers]
	\label{thm:qbounds}
An array-based lazy search tree implementation requires $O(\min(q, n))$ pointers.
\end{theorem}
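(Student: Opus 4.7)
The plan is to bound the number of gaps by $\min(q,n)+O(1)$ and then argue that, in an array-based implementation, each gap contributes only $O(1)$ pointers to the total count, so that all other storage is consolidated into arrays.

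First I would establish the combinatorial bound $m = O(\min(q,n))$ on the number $m$ of gaps. Each gap is non-empty by definition (empty gaps are discarded on deletion), so $m \le n$. For the bound $m \le q+1$, I would argue inductively: the data structure begins with a single gap $\Delta_1$ containing all inserted elements, insertions never create a new gap (they are placed into an existing $\Delta_i$), deletions only ever merge or remove gaps, and each \texttt{RankBasedQuery} (or \texttt{Split}) splits exactly one gap into two, increasing $m$ by at most one. After $q$ queries we therefore have $m \le q+1$.

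Next I would describe the array-based implementation. Globally, the $m$ gaps are linked by a balanced search tree (or equivalent pointer structure) whose nodes store the separating keys and pointers to the children and to the gap-local data; this uses $O(m)$ pointers. Within a single gap $\Delta_i$, the internal substructure used to support the insertion and query bounds of \wref{thm:main} is stored in a constant number of arrays per gap (the ``bag'' storing the elements and whatever auxiliary arrays the internal priority-queue-like structure needs), with only $O(1)$ pointers per gap holding the arrays together and connecting them to the global tree node. Thus each gap contributes $O(1)$ pointers regardless of $|\Delta_i|$.

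Summing across gaps gives $O(m) = O(\min(q,n))$ pointers in total. The main obstacle is the claim that the \emph{internal} structure of each gap can be realized in arrays with only $O(1)$ pointers overhead, rather than the $\Theta(|\Delta_i|)$ pointers one would use in a naïve pointer-based priority queue; this is what forces the implementation to rely on an array-based priority-queue variant inside each gap, and is the technical point where I expect the argument to rest most heavily on the specific implementation details developed earlier in the paper.
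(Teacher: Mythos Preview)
Your bound $m \le q+1$ on the number of gaps is fine, but the core claim that each gap contributes only $O(1)$ pointers is not justified and in fact is false for the implementation described in the paper. A gap $\Delta_i$ is not a single array: it is partitioned into up to $\Theta(\log|\Delta_i|)$ \emph{intervals} (\wref{lem:numintervals}), each of which is itself stored as a linked list of arrays because intervals are merged in $O(1)$ time by concatenating their lists (\wref{rule:merge} and the representation in \wref{sec:ds}). So the pointer count inside a gap is not a constant independent of its history, and summing $O(1)$ over the gaps does not give the result.

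The paper's argument counts something different: the total number of \emph{intervals ever created}. Each query creates at most six new intervals (the split of $\mathcal I_{i,j}$ described in \wref{sec:queryalg}), and no other operation creates intervals; hence at most $O(q)$ intervals are ever created, and trivially at most $n$. Every interval is born as a single contiguous array, and merging only links existing arrays without allocating new ones, so the number of array segments in the whole structure is bounded by the number of intervals ever created. That gives $O(\min(q,n))$ pointers for the interval layer, and the gap data structure adds another $O(m)=O(\min(q,n))$. The key quantity is therefore ``intervals created'' rather than ``gaps present''; your proposal identifies the right order of magnitude for gaps but misses the interval layer, which is where the nontrivial bookkeeping lives.
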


By reducing multiple selection to the sorted dictionary problem, we can show the following lower bound.

\begin{theorem}[Lower bound]
\label{thm:lb}
Suppose we process a sequence of operations resulting in gaps $\{\Delta_i\}$. Again define $B = \sum_{i=1}^m |\Delta_i| \log_2(n/|\Delta_i|)$. Then this sequence of operations requires $B-O(n)$ comparisons and $\Omega(B + n)$ time in the worst case.
\end{theorem}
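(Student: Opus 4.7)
The plan is to reduce multiple selection to our operation sequence and invoke the classical comparison lower bound for multiple selection. Fix a target gap structure $\{\Delta_i\}_{i=1}^m$ with $n = \sum_i |\Delta_i|$, and let $R_i = |\Delta_1| + \cdots + |\Delta_i|$ denote the cumulative ranks, $R_0 = 0$. Consider the concrete operation sequence that (i) inserts $n$ adversarially chosen elements in an arbitrary order and (ii) issues a \texttt{RankBasedQuery}$(R_i)$ (taken to be a select query) for each $i = 1,\dots,m-1$. By the semantics of \texttt{RankBasedQuery} described in \wref{sec:model}, each such query splits the gap containing rank $R_i$ at exactly that rank, so the final gap structure is exactly the prescribed $\{\Delta_i\}$.

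Next, I would invoke the well-known comparison lower bound for the offline multiple selection problem: any comparison-based algorithm that, given $n$ unknown elements, identifies the elements at target ranks $R_1 < \cdots < R_{m-1}$ must perform at least $B - O(n)$ comparisons in the worst case, where $B = \sum_i |\Delta_i| \log_2(n/|\Delta_i|)$. This bound follows from an information-theoretic / entropy argument: after the algorithm terminates, any total order consistent with the recorded comparisons must at least partition the inputs into the gaps $\Delta_1,\dots,\Delta_m$, and there are $\binom{n}{|\Delta_1|,\ldots,|\Delta_m|}$ such partitions, giving $\log_2 \binom{n}{|\Delta_1|,\ldots,|\Delta_m|} = B - O(n)$ bits of information that must be extracted from comparisons.

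To port this to the sorted-dictionary setting, I would argue that any correct execution of the chosen operation sequence is itself a valid multiple selection certificate. Concretely, after all queries have been answered, the comparison transcript must suffice to place every element into its correct gap, for otherwise an adversary could re-label two elements consistently with the transcript and force one of the \texttt{RankBasedQuery} answers to be wrong. Hence the number of comparisons is at least $B - O(n)$. Combined with the trivial $\Omega(n)$ cost of handling each insertion and the fact that each comparison takes $\Omega(1)$ time on the pointer machine, this yields the claimed $\Omega(B + n)$ time lower bound.

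The main obstacle is formalizing the last step — namely, that no ``free" information beyond the recorded comparisons can be used by a sorted dictionary to certify the gap structure. This is handled by a standard adversary argument: fix the comparison-model algorithm, maintain the set of inputs still consistent with all answers given so far, and show that if fewer than $B - O(n)$ comparisons have been made, two distinct consistent inputs exist that disagree on some $R_i$-th rank element, contradicting the correctness of the query that split at $R_i$. Duplicates, interleavings, and change-key/delete operations present no real difficulty because we are free to pick the worst case sequence as above, in which such operations do not appear.
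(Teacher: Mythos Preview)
Your proposal is correct and follows essentially the same approach as the paper: reduce multiple selection to the sorted-dictionary problem via the insert-all-then-select operation sequence, invoke the $B - O(n)$ comparison lower bound for multiple selection, and add the trivial $\Omega(n)$ term. The paper simply cites the multiple selection lower bound from~\cite{Dobkin81} rather than re-deriving it via the multinomial-coefficient entropy argument you sketch, and it does not spell out the adversary step you flag as an obstacle; your added detail there is sound but not a departure in strategy.
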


\wref{thm:lb} indicates that lazy search trees are at most an additive $O(\min(n \log \log n, n \log q))$ term from optimality over a series of insertions and distinct queries. This gives a lower bound on the per-operation complexity of {\normalfont \texttt{RankBasedQuery($r$)} of $\Omega(x \log c)$; the bound can be extended to $\Omega(x \log c + \log n)$ if we amortize the total work required of splitting gaps to each individual query operation. A lower bound of $\Omega(\min(\log(n/|\Delta_i|), \log m))$ can be established on insertion complexity via information theory. We describe all lower bounds in \wref{sec:bounds}.

We give specific examples of how lazy search trees can be used and how to analyze its complexity according to \wref{thm:main} in the following subsection.

\subsection{Example Scenarios}
\label{examples}

Below, we give examples of how the performance of \wref{thm:main} is realized in different operation sequences. While tailor-made data structures for many of these applications are available, lazy search trees provide a \emph{single} data structure that seamlessly adapts to the actual usage pattern while achieving optimal or near-optimal performance for all scenarios in a uniform way.
\begin{enumerate}
	\item \textbf{Few Queries:} The bound $B = \sum_{i=1}^{m} |\Delta_i| \log_2 (n/|\Delta_i|)$ satisfies $B = O(n \log q + q \log n)$. In the worst case, queries are uniformly distributed, and the lower bound $B = \Theta(n \log q + q \log n)$. Over a sequence of insertions and queries without duplicate queries, our performance is optimal $O(n \log q + q \log n)$. If $q = n^\epsilon$ for constant $\epsilon > 0$, lazy search trees improve upon binary search trees by a factor $1/\epsilon$. If $q = O(\log^c n)$ for some $c$, lazy search trees serve the operation sequence in $O(cn \log \log n)$ time and if $q = O(1)$, lazy search trees serve the operation sequence in linear time. Although it is not very difficult to modify a previous ``deferred data structure" to answer a sequence of $n$ insertions and $q$ queries in $O(n \log q + q \log n)$ time (see \wref{sec:deferred-data-structures}), 
	to the best of our knowledge, such a result has not appeared in the literature.
	
	\item \textbf{Clustered Queries:} Suppose the operation sequence consists of $q/k$ ``range queries'', each requesting $k$ consecutive keys, with interspersed insertions following a uniform distribution. Here, $B = O(n \log (q/k) + q \log n)$, where $q$ is the total number of keys requested.
	If the queried ranges are uniformly distributed, $B = \Theta(n \log (q/k) + q \log n)$, with better results possible if the range queries are non-uniform. Our performance on this operation sequence is $O(B + \min(n \log \log n, n \log q))$, tight with the lower bound if $k = \Theta(1)$ or $q/k = \Omega(\log n)$. Similarly to Scenario 1, we pay $O(n \log (q/k))$ in total for the first query of each batch; however, each successive query in a batch costs only $O(\log n)$ time as the smaller resulting gap of the query contains only a single element. We will see in \wref{sec:bounds} that we must indeed pay $\Omega(\log n)$ amortized time per query in the worst case; again our advantage is to reduce insertion costs.
	Note that if an element is inserted within the elements of a previously queried batch, these insertions take $O(\log n)$ time. However, assuming a uniform distribution of element insertion throughout, this occurs on only an $O(q/n)$ fraction of insertions in expectation, at total cost $O(n \cdot q/n \cdot \log n) = O(q \log n)$.
	Other insertions only need an overall $O(n \log (q/k) + \min(n \log \log n, n \log q))$ time.

	\item \textbf{Selectable Priority Queue:} If every query is for a minimum element, each query takes $O(\log n)$ time and separates the minimum element into its own gap and all other elements into another single gap. Removal of the minimum destroys the gap containing the minimum element, leaving the data structure with a single gap $\Delta_1$. All inserted elements fall into this single gap, implying insertions take $O(\log \log n)$ time. Further, the \texttt{ChangeKey(ptr,\,$k'$)} operation supports decrease-key in $O(\log \log n)$ time, since all queries (and thus the closest query) are for rank $1$. Queries for other ranks are also supported, though if queried, these ranks are introduced into the analysis, creating more gaps and potentially slowing future insertion and decrease-key operations, though speeding up future selections. The cost of a selection is $O(x \log c + \log n)$ amortized time, where $x$ is the distance from the rank selected to the nearest gap boundary (which was created at the rank of a previous selection) and $c = |\Delta_i|/x - 1$, where the selection falls in gap $\Delta_i$. If no selections have previously occurred, $x$ is the smaller of the rank or $n$ minus the rank selected and $c = n/x - 1$.

	Interestingly, finding the $k$th smallest element in a binary min-heap can be done in $O(k)$ time~\cite{Frederickson93}, yet we claim our runtime optimal! The reason is that neither runtime dominates in an amortized sense over the course of $n$ insertions. Our lower bound indicates that $\Omega(B + n)$ time must be taken over the course of multiple selections on $n$ elements in the worst case. %
	In Frederickson's algorithm, the speed is achievable because a binary heap is more structured than an unprocessed set of $n$ elements and only a single selection is performed; the ability to perform further selection on the resulting pieces is not supported.
	On close examination, lazy search trees can be made to answer the selection query alone without creating additional gaps in $O(x + \log n)$ amortized time or only $O(x)$ time given a pointer to the gap in which the query falls (such modification requires fulfilling \wref[Rules]{rule:merge} and~\ref{inv:credits} on category $A$ intervals in \wref{sec:queryanalysis}).

	\item \textbf{Double-Ended Priority Queue:} If every query is for the minimum or maximum element, again each query takes $O(\log n)$ time and will separate either the minimum or maximum element into its own gap and all other elements into another single gap. The new gap is destroyed when the minimum or maximum is extracted. As there is only one gap $\Delta_1$ when insertions occur, insertions take $O(\log \log n)$ time. In this case, our data structure natively supports an $O(\log \log n)$ time decrease-key operation for keys of rank $n/2$ or less and an $O(\log \log n)$ time increase-key operation for keys of rank greater than $n/2$. Further flexibility of the change-key operation is discussed in \wref{sec:changekeyanalysis}.
	
	\item \textbf{Online Dynamic Multiple Selection:} Suppose the data structure is first constructed on $n$ elements. (A close analysis of insert in \wref{sec:insertanalysis} shows that alternatively, we can construct the data structure on an empty set and achieve $O(1)$ time insertion before a query is performed.) After construction, a set of ranks $\{r_i\}$ are selected, specified online and in any order. Lazy search trees will support this selection in $O(B)$ time, where $B = \sum_{i=1}^m |\Delta_i| \log_2 (n/|\Delta_i|)$ is the lower bound for the multiple selection problem~\cite{KMMS}. We can further support additional insertions, deletions and queries. Data structures for online dynamic multiple selection were previously known~\cite{Barbay15,BARBAY16}, but the way we handle dynamism is more efficient, allowing for all the use cases mentioned here. We discuss this in \wref{sec:related}.
	
	\item \textbf{Split By Rank:} Lazy search trees can function as a data structure for repeated splitting by rank, supporting construction on an initial set of $n$ elements in $O(n)$ time, insertion into a piece of size $n$ in $O(\log \log n)$ time, and all splitting within a constant factor of the information-theoretic lower bound. Here, the idea is that we would like to support the operations insert and split at rank $r$, returning two pieces of a data structure of the same form. In a sense, this is a generalization of priority queues, where instead of extracting the minimum, we may extract the $k$ smallest elements, retaining the ability to perform further extractions on either of the two pieces returned. %
	As in scenario 3, the cost of splitting is $O(x \log c + \log n)$, where $x$ is the number of elements in the smaller resulting piece of the split, and we define $c$ so that the number of elements in the larger resulting piece of the split is $cx$. Again, $O(x \log c + \log n)$ is optimal. Note that we could also use an $O(\log \log n)$ time change-key operation for this application, though this time complexity only applies when elements are moved closer to the nearest split rank. If repeatedly extracting the $k$ smallest elements is desired, this corresponds to an $O(\log \log n)$ time decrease-key operation.
	
	\item \textbf{Incremental Quicksort:} A version of our data structure can perform splits internally via selecting random pivots with expected time complexity matching the bounds given in \wref{thm:main}. (We initially describe a version using exact selection, which is conceptually simpler but less practical.) The data structure can then be used to extract the $q$ smallest elements in sorted order, online in $q$, via an incremental quicksort. Here, $B = \Theta(q \log n)$ and our overall time complexity is $O(n + q \log n)$, which is optimal up to constant factors\footnote{Note that $n + q \log n = \Theta(n + q \log q)$. If the $q \log n$ term dominates, $q = \Omega(n/\log n)$ and so $\log n = \Theta(\log q)$.}. Previous algorithms for incremental sorting are known~\cite{Paredes06,Navarro10,Regla15,Aydin15}; however, our algorithm is extremely flexible, progressively sorting any part of the array in optimal time $O(B + n)$ while also supporting insertion, deletion, and efficient change-key. The heap operations insert and decrease-key are performed in $O(\log \log n)$ time instead of $O(\log n)$, compared to existing heaps based on incremental sorting~\cite{Navarro08,Navarro10}; see also~\cite{Edelkamp12,Brodal09}. Our data structure also uses only $O(\min(q, n))$ pointers, providing many of the same advantages of sorting-based heaps%
	. A more-complicated priority queue based on similar ideas to ours achieves Fibonacci heap amortized complexity with only a single extra word of space~\cite{Mortensen05}.
\end{enumerate}

We discuss the advantages and disadvantages of our model and data structure in the following subsections.

\subsection{Advantages}
\label{pro}

The advantages of lazy search trees are as follows:
\begin{enumerate}
	\item Superior runtimes to binary search trees can be achieved when queries are infrequent or non-uniformly distributed.
	\item A larger operation set, with the notable exception of efficient general merging, is made possible when used as a priority queue, supporting operations within an additive $O(n \log \log n)$ term of optimality, in our model.
	\item Lazy search trees can be implemented to use only $O(\min(q, n))$ pointers, operating mostly on arrays. This suggests smaller memory footprint, better constant factors, and better cache performance compared to many existing efficient priority queues or binary search trees. Our data structure is not built on the heap-ordered tree blueprint followed by many existing efficient priority queues~\cite{Fredman87,Fredman86,Chan09,Brodal12,Elmasry09,Haeupler11,Brodal96,Hansen15}. Instead, we develop a simple scheme based on unordered lists that may of independent interest. In particular, we are hopeful our data structure or adaptations thereof may provide a theoretically-efficient priority queue that gets around the practical inefficiencies associated with Fibonacci heaps~\cite{Fredman87} and its derivatives.

	\item While not a corollary of the model we consider, lazy search trees can be made to satisfy all performance theorems with regards to access time satisfied by splay trees. In this way, lazy search trees can be a powerful alternative to the splay tree. Locality of access can decrease both access and insertion times. This is discussed in \wref{sec:splay}.
\end{enumerate}

\subsection{Disadvantages}
\label{sec:con}

The weaknesses of lazy search trees are as follows:
\begin{enumerate}
	\item Our gap-based model requires inserted elements be placed in a gap immediately instead of delaying all insertion work until deemed truly necessary by query operations. In particular, a more powerful model would ensure that the number of comparisons performed on an inserted element depends only on the queries executed \emph{after} that element is inserted. There are operation sequences where this can make a $\Theta(\log n)$ factor difference in overall time complexity, but it is not clear whether this property is important on operation sequences arising in applications.
	
	\item We currently do not know whether the additive $O(\min( n \log q, n\log \log n))$ term in the complexity described in \wref{thm:main} over a sequence of insertions and queries is necessary. Fibonacci heaps and its variants show better performance is achievable in the priority queue setting.
	In \wref{sec:average}, we show the (essentially) $O(\log \log |\Delta_i|)$ terms for insertion and change-key can be improved to a small constant factor if the (new) rank of the element is drawn uniformly at random from valid ranks in $\Delta_i$. As a priority queue, this corresponds with operation sequences in which binary heaps~\cite{Williams64} provide constant time insertion.

	 \item The \emph{worst-case} complexity of a single \texttt{RankBasedQuery($r$)} can be $O(n)$. Further, unlike amortized search trees like the splay tree~\cite{Sleator85}, the average case complexity is not necessarily $O(\log n)$. By delaying sorting, our lower bound indicates that we may need to spend $\Theta(n)$ time to answer a query that splits a gap of size $|\Delta_i| = \Theta(n)$ into pieces of size $x$ and $cx$ for $c = \Theta(1)$. Further, aside from an initial $O(\log n)$ time search, the rest of the time spent during query is on writes, so that over the course of the operation sequence the number of writes is $\Theta(B + n)$. In this sense, our algorithm functions more similarly to a lazy quicksort than a red-black tree~\cite{Bayer72}, which requires only $\Theta(n)$ writes regardless of operation sequence.
\end{enumerate}

\subsection{Paper Organization}
We organize the remainder of the paper as follows. In the following section, \wref{sec:related}, we discuss related work. In \wref{sec:technical}, we give a high-level overview of the technical challenge. In \wref{sec:prelim}, we formalize the definition of the queries we support. In \wref{sec:bounds}, we discuss lower bounds in our gap-based model. In \wref{sec:ds}, we show how lazy search trees perform insertions, queries, deletions, and change-key operations. We analyze the costs of these operations in \wref{sec:analysis}. In \wref{sec:bulk}, we explain how binary search tree bulk-update operations split and merge can be performed on lazy search trees. We show in \wref{sec:average} that the complexity of insertion and change-key can be improved with a weak average-case assumption. In \wref{sec:random}, we show that exact selection in our query algorithm can be replaced with randomized pivoting while achieving the same expected time complexity. In \wref{sec:splay}, we show how splay trees can be used with lazy search trees and show that lazy search trees can be made to support efficient access theorems. We give concluding remarks, open problems, and briefly discuss a proof-of-concept implementation in \wref{sec:conclude}.

\section{Related Work}
\label{sec:related}

Lazy search trees unify several distinct research fields. The two largest, as previously discussed, are the design of efficient priority queues and balanced binary search trees. We achieved our result by developing an efficient priority queue and lazy binary search tree simultaneously. There are no directly comparable results to our work, but research in \emph{deferred data structures} and \emph{online dynamic multiple selection} comes closest. We further discuss differences between dynamic optimality and our work.

\subsection{Deferred Data Structures}
\label{sec:deferred-data-structures}

To our knowledge, the idea of deferred data structures was first proposed by Karp, Motwani, and Raghavan in 1988~\cite{Karp88}. Similar ideas have existed in slightly different forms for different problems~\cite{Smid89,Borodin81,Brodal11,Barbay2019,Barbay17,Ar02,Gum01,Aggarwal91}. 
The term ``deferred data structure'' has been used more generally for delaying processing of data until queries make it necessary, but we focus on works for one-dimensional data here, as it directly pertains to the problem we consider.

Karp, Motwani and Raghavan~\cite{Karp88} study the problem of answering membership queries on a static, unordered set of $n$ elements in the comparison model. One solution is to construct a binary search tree of the data in $O(n \log n)$ time and then answer each query in $O(\log n)$ time. This is not optimal if the number of queries is small. Alternatively, we could answer each query in $O(n)$ time, but this is clearly not optimal if the number of queries is large. Karp et al. determine the lower bound of $\Omega((n+q) \log (\min(n,q))) = \Omega(n \log q + q \log n)$ time to answer $q$ queries on a static set of $n$ elements in the worst case and develop a data structure that achieves this complexity.

This work was extended in 1990 to a dynamic model. Ching, Melhorn, and Smid show that $q'$ membership queries, insertions, and deletions on an initial set of $n_0$ unordered elements can be answered in $O(q' \log (n_0+q') + (n_0+q') \log q') = O(q' \log n_0 + n_0 \log q')$ time~\cite{CHING90}. When membership, insertion, and deletion are considered as the same type of operation, this bound is optimal. 

It is not very difficult 
(although not explicitly done in~\cite{CHING90}) to modify the result of Ching et al.\ to obtain a data structure supporting $n$ insertions and $q''$ membership or deletion operations in $O(q'' \log n + n \log q'')$ time, the runtime we achieve for uniform queries. We will see in \wref{sec:technical} that the technical difficulty of our result is to achieve the fine-grained complexity based on the query-rank distribution. For more work in one-dimensional deferred data structures, see~\cite{Smid89,Borodin81,Brodal11,Barbay2019,Barbay17,Gum01}.

\subsection{Online Dynamic Multiple Selection}

The optimality of Karp et al.~\cite{Karp88} and Ching et al.~\cite{CHING90} is in a model where the ranks requested of each query are not taken into account. In the multiple selection problem, solutions have been developed that consider this information in the analysis. Suppose we wish to select the elements of ranks $r_1 < r_2 < \cdots < r_q$ amongst a set of $n$ unordered elements. Define $r_0 = 0$, $r_{q+1} = n$, and $\Delta_i$ as the set of elements of rank greater than $r_{i-1}$ and at most $r_i$. Then $|\Delta_i| = r_i - r_{i-1}$ and as in \wref{thm:main}, $B = \sum_{i=1}^m |\Delta_i| \log_2(n/|\Delta_i|)$. The information-theoretic lower bound for multiple selection is $B-O(n)$ comparisons~\cite{Dobkin81}. Solutions have been developed that achieve $O(B + n)$ time complexity~\cite{Dobkin81} or $B + o(B) + O(n)$ comparison complexity~\cite{KMMS}.

The differences between the multiple selection problem and deferred data structuring for one-dimensional data are minor. Typically, deferred data structures are designed for online queries, whereas initial work in multiple selection considered the setting when all query ranks are given at the same time as the unsorted data. Solutions to the multiple selection problem where the ranks $r_1, \ldots, r_q$ are given online and in any order have also been studied, however~\cite{Barbay13}. Barbay et al.~\cite{Barbay15,BARBAY16} further extend this model to a dynamic setting: They consider online dynamic multiple selection where every insertion is preceded by a search for the inserted element. Deletions are ultimately performed in $O(\log n)$ time. Their data structure uses $B + o(B) + O(n + q'\log n)$ comparisons, where $q'$ is the number of search, insert, and delete operations. %
The crucial difference between our solution and that of Barbay et al.~\cite{Barbay15,BARBAY16} is how we handle insertions. Their analysis assumes every insertion is preceded by a search and therefore insertion must take $\Omega(\log n)$ time. Thus, for their result to be meaningful (\ie, allow $o(n \log n)$ performance), the algorithm must start with an initial set of $n_0 = n\pm o(n)$ elements. While Barbay et al.\ focus on online dynamic multiple selection algorithms with near-optimal comparison complexity, the focus of lazy search trees is on generality. We achieve similar complexity as a data structure for online multiple selection while also achieving near-optimal performance as a priority queue. We discuss the technical challenges in achieving this generality in \wref{sec:technical}.

\subsection{Dynamic Optimality}

As mentioned, the dynamic optimality conjecture has received vast attention in the past four decades~\cite{Allen78,Demaine09,Demaine07,Iacono16,Bose2020,Sleator83,Wilber89,Kozma2019,Chalermsook15}. The original statement conjectures that the performance of the splay tree is within a constant factor of the performance of any binary search tree on any sufficiently long access sequence~\cite{Sleator85}. To formalize this statement, in particular the notion of ``any binary search tree'', the BST model of computation has been introduced, forcing the data structure to take the form of a binary tree with access from the root and tree rotations for updates. Dynamic optimality is enticing because it conjectures splay trees~\cite{Sleator85} and a related ``greedy BST''~\cite{Demaine09} to be within a constant factor of optimality on \emph{any} sufficiently long access sequence. This \emph{per-instance} optimality~\cite{Fagin2003} is more powerful than the sense of optimality used in less restricted models, where it is often unattainable. Any sorting algorithm, for example, must take $\Omega(n \log n)$ time in the \emph{worst case}, but on any particular input permutation, an algorithm designed to first check for that specific permutation can sort it in $O(n)$ time: simply apply the inverse permutation and check if the resulting order is monotonic. %

The bounds we give in \wref{sec:bounds} are w.\,r.\,t.\ the \emph{worst case} over operation sequences based on distribution of gaps $\{\Delta_i\}$,
but hold for \emph{any} comparison-based data structure. 
Hence, lazy search trees achieve a weaker notion of optimality compared to dynamic optimality, 
but do so against a vastly larger class of algorithms.

Since splay trees, greedy BSTs, and lazy search trees are all implementations of sorted dictionaries and conjectured dynamically optimal, it is insightful to contrast the access theorems of dynamically-optimal BSTs with the improvements given in \wref{thm:main}. 
Superficially, the two notions are orthogonal, with dynamic optimality allowing only queries, and
our bound becoming interesting mostly when insertions and queries are mixed.
On the other hand, the form of performance improvements achievable are indeed quite similar, 
as the following property shows.

\begin{definition}[Static Optimality~\cite{Knuth73,Allen78,Sleator85}]
\label{def:statico}
	Let $S$ denote the set of elements in the data structure and let $q_x$ denote the number of times element $x$ is accessed in a sequence of $m$ accesses. Assume every element is accessed at least once. 
	A data structure is said to achieve static optimality if the cost to perform any such access sequence is
	\[
	O(m + \sum_{x \in S} q_x \log(m/q_x)).
	\]
\end{definition}
Historically, research into optimal binary search trees started with this notion of static optimality,
and both splay trees and greedy BSTs have been shown to be statically optimal~\cite{Sleator85,Fox2011}.
Contrast the bound given in \wref{def:statico} with the bound $O(B + n)$, where again we define $B = \sum_{i=1}^m |\Delta_i| \log_2(n/|\Delta_i|)$. If we replace $q_x$ and $m$ in \wref{def:statico} with $|\Delta_i|$ and $n$, respectively, they are exactly the same: the savings for query costs arising from repeated accesses with nonuniform access probabilities equal the savings for insertion costs when query ranks are nonuniform.

\section{Technical Overview}
\label{sec:technical}

This research started with the goal of generalizing a data structure that supports $n$ insertions and $q \leq n$ rank-based queries in $O(n \log q)$ time. Via a reduction from multiple selection, $\Omega(n \log q)$ comparisons are necessary in the worst case. However, by applying the fine-grained analysis based on rank distribution previously employed in the multiple selection literature~\cite{Dobkin81}, a new theory which generalizes efficient priority queues and binary search trees is made possible.

As will be discussed in \wref{sec:bounds}, to achieve optimality on sequences of insertion and distinct queries with regards to the fine-grained multiple selection lower bound, insertion into gap $\Delta_i$ should take $O(\log(n/|\Delta_i|))$ time. A query which splits a gap $\Delta_i$ into two gaps of sizes $x$ and $cx$ ($c \geq 1$), respectively, should take $O(x \log c + \log n)$ time. These complexities are the main goals for the design of the data structure.

The high-level idea will be to maintain elements in a gap $\Delta_i$ in an auxiliary data structure (the \emph{interval data structure} of \wref{sec:ds}). All such auxiliary data structures are then stored in a biased search tree so that access to the $i$th gap $\Delta_i$ is supported in $O(\log(n/|\Delta_i|))$ time. This matches desired insertion complexity and is within the $O(\log n)$ term of query complexity. The main technical difficulty is to support efficient insertion and repeated splitting of the auxiliary data structure.

Our high-level organization is similar to the selectable sloppy heap of Dumitrescu~\cite{Dumitrescu19}. The difference is that while the selectable sloppy heap keeps fixed quantile groups in a balanced search tree and utilizes essentially a linked-list as the auxiliary data structure, in our case the sets of elements stored are dependent on previous query ranks, the search tree is biased, and we require a more sophisticated auxiliary data structure.

Indeed, in the priority queue case, the biased search tree has a single element $\Delta_1$, and all operations take place within the auxiliary data structure. Thus, we ideally would like to support $O(1)$ insertion and $O(x\log c)$ split into parts of size $x$ and $cx$ ($c \geq 1$) in the auxiliary data structure. If the number of elements in the auxiliary data structure is $|\Delta_i|$, we can imagine finding the minimum or maximum as a split with $x = 1$ and $c = |\Delta_i|-1$, taking $O(\log |\Delta_i|)$ time. However, the ability to split at any rank in optimal time complexity is not an operation typically considered for priority queues. Most efficient priority queues store elements in heap-ordered trees, providing efficient access to the minimum element but otherwise imposing intentionally little structure so that insertion, decrease-key, and merging can all be performed efficiently.

Our solution is to group elements within the auxiliary data structure in the following way. We separate elements into groups (``intervals'') of unsorted elements, but the elements between each group satisfy a total order. Our groups are of exponentially increasing size as distance to the gap boundary increases. Within a gap $\Delta_i$, we maintain $O(\log |\Delta_i|)$ such groups. Binary search then allows insertion and key change in $O(\log \log |\Delta_i|)$ time. While not $O(1)$, the structure created by separating elements in this way allows us to split the data structure in about $O(x)$ time, where $x$ is the distance from the split point to the closest query rank. Unfortunately, several complications remain.

Consider if we enforce the exponentially-increasing group sizes in the natural way in data structure design. That is, we select constants $c_1\le c_2$ such that as we get farther from the gap boundary, the next group is at least a factor $c_1 > 1$ larger than the previous but at most a factor $c_2$. We can maintain this invariant while supporting insertion and deletion, but splitting is problematic. After splitting, we must continue to use both pieces as a data structure of the same form. However, in the larger piece, the $x$ elements removed require restructuring not only the new closest group to the gap boundary but could require a cascading change on all groups. Since the elements of each group are unstructured, this cascading change could take $\Omega(|\Delta_i|)$ time.

Thus, we must use a more flexible notion of ``exponentially increasing" that does not require significant restructuring after a split. This is complicated by guaranteeing fast insertion and fast splits in the future. In particular, after a split, if the larger piece is again split close to where the previous split occurred, we must support this operation quickly, despite avoiding the previous cascading change that would guarantee this performance. Further, to provide fast insertion, we must keep the number of groups at $O(\log |\Delta_i|)$, but after a split, the best way to guarantee fast future splits is to create more groups.

We will show that it is possible to resolve all these issues and support desired operations efficiently by applying amortized analysis with a careful choice of structure invariants. While we do not achieve $O(1)$ insertion and decrease-key cost, our data structure is competitive as an efficient priority queue while having to solve the more complicated issues around efficient repeated arbitrary splitting.%

\section{Rank-Based Queries}
\label{sec:prelim}

We formalize operation \texttt{RankBasedQuery($r$)} as follows. We first describe what we call an \emph{aggregate function}.

\begin{definition}[Aggregate function]
	Let $S$ be a multiset of comparable elements and let $f(S)$ be a function\footnote{We do not actually require a strict function $f(S) = y$ for a set $S$, but rather can let the aggregate function depend on the queries that dynamically change that set. In particular, we can (initially) map $f(S) = \min S$ or $f(S) = \max S$ and change this value to decrease/increase monotonically as $S$ is updated, even if the minimum/maximum is removed.} computed on those elements. Suppose $S'$ is such that $S'$ differs from $S$ by the addition or removal of element $x$. Let $n = \max(|S|, |S'|)$. Then $f$ is an \textit{aggregate function} maintainable in $g(n)$ time if $f(S')$ can be computed from $f(S)$ and $x$ in $g(n)$ time.
\end{definition}

We focus on aggregates with $g(n) = O(1)$, though in principle any $g(n)$ can be supported with appropriate changes to overall runtime complexity. We formalize \emph{rank-based queries} as follows.

\begin{definition}[Rank-based query]
	\label{def:rankbasedquery}
	Call a query on a multiset of comparable elements $S$ such that $|S| = n$ a \textit{rank-based query} pertaining to rank $r$ if the following two conditions are satisfied:
	\begin{enumerate}
		\item Consider if $S$ is split into two sets $X$ and $Y$ such that for all $x \in X$, $y \in Y$, $x \leq y$. It must be possible, based on an aggregate function $f$ on $X$ and $Y$, to reduce the query to a query that can be answered considering only the elements of $X$ or only the elements of $Y$. The query rank $r$ should be such that if the query is reduced to $X$, $r \leq |X|$, and if the query is reduced to $Y$, $|X| < r$.
		\item It must be possible to answer the query on $S$ in $O(n)$ time.
	\end{enumerate}
\end{definition}

Critical to our analysis is the rank $r$ associated with each \texttt{RankBasedQuery($r$)} operation. We associate with each operation a rank $r$ which must be contained in each subproblem according to a recursion based on \wref{def:rankbasedquery}. Amongst a set of unsorted elements, $r$ can be chosen arbitrarily, but whichever rank is chosen will affect the restructuring and change the complexity of future operations. Implementation decisions may change $r$ to be $r-1$ or $r+1$; such one-off errors do not have measured effect on complexity, as long as the extract minimum or extract maximum queries result in a single gap $\Delta_1$.

The following well-studied operations fit our definition of rank-based query with ranks $r$ as described;
the aggregate function is either the cardinality of the set or a range of keys for the set.
\begin{itemize}
	\item \texttt{Rank($k$)} $\ce$ Determine the rank of key $k$. Rank $r$ is the rank of $k$ in $S$.
	\item \texttt{Select($r$)} $\ce$ Select the element of rank $r$ in $S$. Rank $r$ is the rank selected.
	\item \twopart{\texttt{Contains($k$)} $\ce$ }{Determine if an element $(k,v)$ is represented in $S$ (and if so, returns $v$). Rank $r$ is the rank of $k$ in $S$.}
	\item \twopart{\texttt{Successor($k$)} $\ce$ }{Determine the successor of key $k$ in $S$. Rank $r$ is the rank of the successor of~$k$.}
	\item \twopart{\texttt{Predecessor($k$)} $\ce$ }{Determine the predecessor of key $k$ in $S$. Rank $r$ is the rank of the predecessor of $k$.}
	\item \texttt{Minimum()} $\ce$ Return a minimum element of $S$. Rank $r$ is $1$.
	\item \texttt{Maximum()} $\ce$ Return a maximum element of $S$. Rank $r$ is $n$.
\end{itemize}
On edge cases where the successor or predecessor does not exist, we can define $r$ to be $n$ or $1$, respectively. Similarly, in the case $(k,v)$ is represented in $S$ on a \texttt{Rank($k$)} or \texttt{Contains($k$)} query, we must pick a tie-breaking rule for rank $r$ returned consistent with the implemented recursion following \wref{def:rankbasedquery}.

\section{Lower and Upper Bounds}
\label{sec:bounds}

The balanced binary search tree is the most well-known solution to the sorted dictionary problem. It achieves $O(\log n)$ time for a rank-based query and $O(\log n)$ time for all dynamic operations. Via a reduction from sorting, for a sequence of $n$ arbitrary operations, $\Omega(n \log n)$ comparisons and thus $\Omega(n \log n)$ time is necessary in the worst case.

However, this time complexity can be improved by strengthening our model. The performance theorems of the splay tree~\cite{Sleator85} show that although $\Omega(q \log n)$ time is necessary on a sequence of $q$ arbitrary queries on $n$ elements, many access sequences can be answered in $o(q \log n)$ time. Our model treats sequences of element \emph{insertions} similarly to the splay tree's treatment of sequences of element access. Although $\Omega(n \log n)$ time is necessary on a sequence of $n$ insert or query operations, on many operation sequences, $o(n \log n)$ time complexity is possible, as the theory of efficient priority queues demonstrates.

Our complexities are based on the distribution of elements into the set of gaps $\{\Delta_i\}$. We can derive a lower bound on a sequence of operations resulting in a set of gaps $\{\Delta_i\}$ via reducing multiple selection to the sorted dictionary problem. We prove \wref{thm:lb} below.

\begin{proof}[Proof of \wref{thm:lb}{}]
We reduce multiple selection to the sorted dictionary problem. The input of multiple selection is a set of $n$ elements and ranks $r_1 < r_2 < \cdots < r_q$. We are required to report the elements of the desired ranks. We reduce this to the sorted dictionary problem by inserting all $n$ elements in any order and then querying for the desired ranks $r_1, \ldots, r_q$, again in any order.

Define $r_0 = 0$, $r_{q+1} = n$, and $\Delta_i$ as the set of elements of rank greater than $r_{i-1}$ and at most $r_i$. (This definition coincides with the gaps resulting in our data structure when query rank $r$ falls in the new gap $\Delta'_i$, described in \wref{sec:model}.) Then $|\Delta_i| = r_i - r_{i-1}$ and as in \wref{thm:main}, $B = \sum_{i=1}^m |\Delta_i| \log_2(n/|\Delta_i|)$. Note that here, $m = q+1$. The information-theoretic lower bound for multiple selection is $B-O(n)$ comparisons~\cite{Dobkin81}. Since any data structure must spend $\Omega(n)$ time to read the input, this also gives a lower bound of $\Omega(B + n)$ time. This implies the sorted dictionary problem resulting in a set of gaps $\{\Delta_i\}$ must use at least $B-O(n)$ comparisons and take $\Omega(B + n)$ time, in the worst case.
\end{proof}

\begin{remark}[Multiple selection inputs]
For the operation sequence from the proof of \wref{thm:lb},
\wref{thm:main} states our performance as $O(B + \min(n \log q, n\log\log n))$. A closer examination of our data structure in \wref{sec:insertanalysis} shows we actually achieve $O(B+n)$ complexity on such sequences, since insertions performed before any queries actually take $O(1)$ time.
\end{remark}

To achieve the performance stated in \wref{thm:lb} on any operation sequence, we will first consider how the bound $\Omega(B + n)$ changes with insertions and queries. This will dictate the allotted (amortized) time we can spend per operation to achieve an optimal complexity over the entire operation sequence.

We give the following regarding insertion time; recall our convention from \wpref{fn:note1} that
$\log(x) = \max(\log_2(x),1)$ and $\log_2$ is the binary logarithm.

\begin{lemma}[Influence of insert on lower bound]
	\label{lem:inserttime}
	Suppose we insert an element into gap $\Delta_i$. Then the bound $\Omega(B + n)$ increases by $\Omega(\log(n/|\Delta_i|))$.
\end{lemma}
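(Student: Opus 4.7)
The plan is to compute the change in $B+n$ directly and show it dominates $\log(n/|\Delta_i|)$ up to constants. My first step is to rewrite the quantity in entropy form,
\[
B \;=\; n\log_2 n - H, \qquad H := \sum_j |\Delta_j|\log_2 |\Delta_j|,
\]
which is useful because an insertion into gap $\Delta_i$ leaves every other summand of $H$ untouched, while the dependence of the remaining summands of the original $B$ on $n$ is neatly packaged into $n\log_2 n$. Writing $k := |\Delta_i|$ and $\phi(m) := m\log_2 m$, the change in $B$ reduces to
\[
\Delta B \;=\; \bigl(\phi(n{+}1) - \phi(n)\bigr) \,-\, \bigl(\phi(k{+}1) - \phi(k)\bigr).
\]

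Next I would apply the identity $\phi(m{+}1) - \phi(m) = \log_2(m{+}1) + m\log_2(1 + 1/m)$ to both brackets. The logarithmic pieces combine into $\log_2\!\tfrac{n+1}{k+1}$, while the residual simplifies to $\log_2\!\bigl((1+1/n)^n / (1+1/k)^k\bigr)$. Since the classical sequence $(1+1/m)^m$ is non-decreasing in $m \ge 1$ and $n \ge k$, this residual is non-negative, so $\Delta B \ge \log_2((n{+}1)/(k{+}1))$. An insertion also increments $n$ by one, yielding the clean bound $\Delta(B+n) \ge \log_2((n{+}1)/(k{+}1)) + 1$.

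The final step matches this against the paper's convention $\log(x) = \max(\log_2 x,\, 1)$. When $k \le n/2$, the crude estimate $(n{+}1)/(k{+}1) \ge n/(2k)$ (valid for $k \ge 1$) gives $\log_2((n{+}1)/(k{+}1)) \ge \log_2(n/k) - 1$, so together with the standalone $+1$ we recover $\Delta(B+n) \ge \log_2(n/k) = \log(n/|\Delta_i|)$. When $k > n/2$, the target is merely $\log(n/|\Delta_i|) = 1$, and the $+1$ alone suffices since $\log_2((n{+}1)/(k{+}1)) \ge 0$ whenever $k \le n$. Combining the two regimes gives $\Delta(B+n) \ge \log(n/|\Delta_i|) = \Omega(\log(n/|\Delta_i|))$, as required.

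I do not expect a serious obstacle: the algebra is elementary, and the one ingredient beyond arithmetic is the monotonicity of $(1+1/m)^m$. The only subtle decision is the choice of entropy reformulation; keeping $B$ in its original $\sum_j |\Delta_j|\log_2(n/|\Delta_j|)$ form leaves an extra $(n-k)\log_2(1+1/n)$ cross-term from the unchanged gaps, which is non-negative but clutters the estimate. Passing to $n\log_2 n - H$ absorbs that bookkeeping cleanly into a single application of the $\phi$-increment identity, which is the move that makes the proof essentially one line.
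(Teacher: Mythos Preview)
Your proof is correct. The approach differs from the paper's in a pleasant way. The paper argues in two stages: first it freezes $n$ and bounds the change in the single summand $|\Delta_i|\log_2(n/|\Delta_i|)$ via the derivative of $f(x)=x\log_2(n/x)$, obtaining $\log_2(n/|\Delta_i|)-\log_2 e$; then it observes separately that bumping $n$ by one only increases $B$ and contributes an $\Omega(1)$. Your entropy reformulation $B=n\log_2 n-H$ handles both effects at once and replaces the calculus step with the discrete monotonicity of $(1+1/m)^m$, yielding the slightly sharper and cleaner inequality $\Delta B\ge\log_2\!\bigl((n{+}1)/(k{+}1)\bigr)$. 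The paper's decomposition is perhaps more transparent about \emph{where} the $\Omega(1)$ slack comes from (the $+n$ in $B+n$), while your route avoids the derivative entirely and packages the cross-term from the other gaps into $\phi(n{+}1)-\phi(n)$ rather than tracking it explicitly. Either way the argument is a few lines; your version is a touch more self-contained.
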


\begin{proof}
	The insertion simultaneously increases $|\Delta_i|$ and $n$, but we will consider the effect of these changes
	separately. We first keep $n$ unchanged and 
	consider how $B$ changes in gap $\Delta_i$. Before insertion, the contribution to $B$ for gap $\Delta_i$ is $|\Delta_i| \log_2 (n/|\Delta_i|)$; after the insertion it is $(|\Delta_i|+1) \log_2 (n/(|\Delta_i|+1))$. Therefore, the change is
	\begin{equation}
	\label{dB}
	(|\Delta_i|+1) \log_2 (n/(|\Delta_i|+1)) - |\Delta_i| \log_2 (n/|\Delta_i|).
	\end{equation}
	Consider the function $f(x) = x\log_2(n/x)$, where we treat $n$ as a constant. Then \eqref{dB} is at least the minimum value of the derivative $f'(x)$ with $x \in [|\Delta_i|, |\Delta_i|+1]$. The derivative of $f(x)$ is $f'(x) = -\log_2(e) + \log_2(n/x)$. This gives that the change in $B$ is at least $-\log_2(e) + \log_2(n/|\Delta_i|)$.
	
	Now consider the effect of making $n$ one larger. This will only increase $B$; by the bound $\Omega(B + n)$, this change is (at least) $\Omega(1)$. We may therefore arrive at an increase of $\Omega(\log_2(n/|\Delta_i|)+1) = \Omega(\log(n/|\Delta_i|))$.
\end{proof}

\wref{lem:inserttime} implies that optimal insertion complexity is $\Omega(\log(n/|\Delta_i|))$.
This bound is using the fact the change in the set of gaps $\{\Delta_i\}$ resulting from an insertion corresponds to a multiple selection problem with lower bound greater by $\Omega(\log(n/|\Delta_i|))$. Since the multiple selection problem itself has insertions preceding queries, this lower bound is in some sense artificial. However, we can alternatively consider the problem of determining in which gap an inserted element falls. Here, information theory dictates complexities of $\Omega(\log m)$ if each gap is weighted equally or $\Omega(\log(n/|\Delta_i|))$ if gap $\Delta_i$ is weighted with weight $|\Delta_i|$~\cite{Bent85}. The latter corresponds with the change in $B$ noted above.

We now give the following regarding query time.

\begin{lemma}[Influence of query on lower bound]
	\label{lem:querytime}
	Suppose a query splits a gap $\Delta_i$ into two gaps of size $x$ and $cx$, respectively, with $c \geq 1$. Then the bound $\Omega(B + n)$ increases by $\Omega(x\log c)$.
\end{lemma}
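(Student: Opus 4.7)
The plan is to compute $\Delta B$, the change in $B$ caused by the query, directly and explicitly. Since Theorem~\ref{thm:lb} yields the $\Omega(B+n)$ lower bound for every achievable gap configuration, and since $n$ is unchanged by a query, it suffices to show that $B$ itself increases by $\Omega(x \log c)$. No other structural argument is required: the multiple-selection lower bound does all of the heavy lifting, and this lemma is purely an analytic statement about how the quantity $B$ responds to splitting one gap.

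First I would isolate the contribution of $\Delta_i$ to $B$. Before the split, $|\Delta_i| = (c+1)x$, so that contribution is $(c+1)x \log_2(n/((c+1)x))$; after the split it becomes $x \log_2(n/x) + cx \log_2(n/(cx))$. Expanding both sides with $\log_2(ab) = \log_2 a + \log_2 b$, the $\log_2 n$ terms carry coefficient $x + cx - (c+1)x = 0$, and similarly the $\log_2 x$ terms cancel. What remains is the clean expression
\[
\Delta B \;=\; x\bigl[(c+1)\log_2(c+1) - c\log_2 c\bigr].
\]

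The remaining task is to bound $g(c) \coloneq (c+1)\log_2(c+1) - c\log_2 c$ from below by $\Omega(\log c)$, using the paper's convention $\log c = \max(\log_2 c, 1)$. I would proceed by writing
\[
g(c) - \log_2 c \;=\; (c+1)\log_2(c+1) - (c+1)\log_2 c \;=\; (c+1)\log_2\!\bigl(1 + 1/c\bigr),
\]
and then applying the elementary inequality $\ln(1+t) \geq t/(1+t)$ at $t = 1/c$ to conclude $g(c) \geq \log_2 c + 1/\ln 2$. Combined with the observation that $g'(c) = \log_2((c+1)/c) > 0$ so that $g$ is increasing from $g(1) = 2$, this handles the regime $c < 2$ (where $\log c = 1$ and $g(c) \geq 2$) and the regime $c \geq 2$ (where $g(c) \geq \log_2 c$) uniformly, giving $\Delta B = \Omega(x \log c)$.

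The computation is routine algebra, so the main obstacle is purely bookkeeping plus the boundary case $c \to 1$: at $c = 1$ the term $\log_2 c$ vanishes, so one cannot simply write $g(c) \geq \log_2 c$ and stop; the additive $1/\ln 2$ (or equivalently the monotonicity of $g$) is what rescues the bound under the $\max(\log_2 c, 1)$ convention. No derivative-based approximation as in Lemma~\ref{lem:inserttime} is needed here because $\Delta B$ simplifies exactly rather than only up to lower-order terms.
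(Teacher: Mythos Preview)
Your proof is correct and follows essentially the same approach as the paper: both compute $\Delta B$ directly, simplify to $x\bigl[(c+1)\log_2(c+1) - c\log_2 c\bigr]$, and then bound the bracket below by $\Omega(\log c)$. The only difference is in that last step, where the paper observes more simply that $(c+1)\log_2(c+1) - c\log_2 c > \log_2(c+1)$ (equivalent to $c\log_2\frac{c+1}{c} > 0$), which already dominates $\max(\log_2 c,1)$ for all $c\ge 1$ and so avoids your separate treatment of the $c<2$ regime.
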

\begin{proof}
	The change in $B$ is 
	\begin{equation}
	\label{dB2}
	x \log_2\left(\frac{n}{x}\right) + cx \log_2\left(\frac{n}{cx}\right) - (c+1)x \log_2\left(\frac{n}{(c+1)x}\right).
	\end{equation}
	By manipulating logarithms and canceling terms, we can rearrange \eqref{dB2} to $x((c+1) \log_2 (c+1) - c \log_2 c)$, which is greater than $x \log_2 (c+1)$. Thus the increase in $\Omega(B + n)$ is $\Omega(x \log c)$.
\end{proof}

\wref{lem:querytime} gives a lower bound of $\Omega(x \log c)$ per rank-based query operation. Here, the bound is not artificial in any sense: insertions precede queries in the reduction of multiple selection to the sorted dictionary problem. We must spend time $\Omega(x \log c)$ to answer the query as more queries may follow and the total complexity must be $\Omega(B + n)$ in the worst case.

We can improve the query lower bound by considering the effect on $B$ over a sequence of gap-splitting operations. Consider the overall bound $B = \sum_{i=1}^m |\Delta_i| \log_2(n/|\Delta_i|)$. It can be seen that $B = \Omega(m \log n)$. Therefore, we can afford amortized $O(\log n)$ time whenever a new gap is created, even if it is a split say with $x = 1$, $c = 1$.%

Consider the lower bound given by the set of gaps $\{\Delta_i\}$ in \wref{thm:lb} combined with the above insight that queries must take $\Omega(\log n)$ time. If query distribution is not considered, the worst case is that $|\Delta_i| = \Theta(n/q)$ for all $i$. Then $B + q \log n = \Omega(n \log q + q \log n)$. This coincides with the lower bound given in~\cite{Karp88}.

It is worth noting that both \wref{lem:inserttime} and \wref{lem:querytime} can also be proven by information-theoretic arguments, without appealing to the algebraic bound $B$ given in multiple selection.  The number of comparisons to identify the $x$ largest elements in a set of $(c+1)x$ elements is $\log{ \binom{(c+1)x}x}$, which is $\Omega(x \log c)$. A similar argument can be made that increasing $n$ by $1$ and category $\Delta_i$ by $1$ implies the number of comparisons required of the underlying selection problem increases by $\Omega(\log(n/\Delta_i))$.

\section{Data Structure}
\label{sec:ds}

We are now ready to discuss the details of lazy search trees. The high-level idea was discussed in \wref{sec:technical}. The data structure as developed is relatively simple, though it requires a somewhat tricky amortized time analysis given in the following section.

We split the data structure into two levels. At the top level, we build a data structure on the set of gaps $\{\Delta_i\}$. In the second level, actual elements are organized into a set of \emph{intervals} within a gap. Given a gap $\Delta_i$, intervals within $\Delta_i$ are labeled $\mathcal{I}_{i,1}, \mathcal{I}_{i,2}, \ldots, \mathcal{I}_{i,\ell_i}$, with $\ell_i$ the number of intervals in gap $\Delta_i$. The organization of elements of a gap into intervals is similar to the organization of elements into a gap. Intervals partition a gap by rank, so that for elements $x \in \mathcal{I}_{i,j}$, $y \in \mathcal{I}_{i,j+1}$, $x \leq y$. Elements within an interval are unordered. By convention, we will consider both gaps and intervals to be ordered from left to right in increasing rank. A graphical sketch of the high-level decomposition is given in \wref{fig:decomposition}.

\begin{figure}[htbp]
	\def\intsep{0.08}
	\newcommand\drawint[4][1]{% [scale] 1:x 2:len 3:label
		\draw[semithick,|-|] (#2,0) -- 
			node[below=2pt,font=\scriptsize,scale=.75] {\scalebox{#1}[1]{$\mathcal{I}_{#4}$}} 
			++(#3-\intsep,0) ;
	}
	\begin{tikzpicture}[scale=.32]
		\begin{scope}
			\drawint[.7]{0} 1{1,1}
			\drawint    {1} 2{1,2}
			\drawint    {3} 4{1,3}
			\drawint    {7} 4{1,4}
			\drawint    {11}2{1,5}
			\drawint[.7]{13}1{1,6}
		\end{scope}
		\begin{scope}[shift={(14+.5,0)}]
			\drawint[.7]{0} 1{2,1}
			\drawint    {1} 2{2,2}
			\drawint    {3} 3{2,3}
			\drawint    {6} 6{2,4}
			\drawint    {12}5{2,5}
			\drawint    {17}3{2,6}
			\drawint    {20}2{2,7}
			\drawint[.7]{22}1{2,8}
		\end{scope}
		\begin{scope}[shift={(14+23+1,0)}]
			\drawint[.7]{0} 1{3,1}
			\drawint    {1} 5{3,2}
			\drawint    {6} 3{3,3}
			\drawint[.7]{9} 1{3,4}
		\end{scope}
	    \foreach \x/\l in {0,14.5,14+23+1,14+23+10+1.5} {
			\draw[thick,gray] (\x-.25,-1) -- (\x-.25,3) ;
		}
		\foreach \f/\t/\l in {0/14/1,14.5/37.5/2,38/48/3} {
			\draw[decoration={brace,amplitude=5pt},decorate]
				(\f,1.5) -- node[above=3pt,font=\scriptsize] {$\Delta_{\l}$} ++(\t-\f,0) ;
		}
		\node[font=\scriptsize,anchor=east] at (-0.5,2.3) {Gaps:} ;
		\node[overlay,font=\scriptsize,anchor=east] at (-0.5,0) {Intervals:} ;
	\end{tikzpicture}
\caption{The two-level decomposition into gaps $\{\Delta_i\}$ and intervals $\{\mathcal{I}_{i,j}\}$.}
  \label{fig:decomposition}
\end{figure}

\subsection{The Gap Data Structure}

We will use the following data structure for the top level.

\begin{lemma}[Gap Data Structure]
\label{lem:gapds}
	There is a data structure for the set of gaps $\{\Delta_i\}$ that supports the following operations in the given worst-case time complexities. Note that $\sum_{i=1}^m |\Delta_i| = n$.
	\begin{enumerate}
	  \item Given an element $e=(k,v)$, determine the index $i$ such that $k \in \Delta_i$, in $O(\log(n/|\Delta_i|))$ time.
	  \item Given a $\Delta_i$, increase or decrease the size of $\Delta_i$ by $1$, adjusting $n$ accordingly, in $O(\log(n/|\Delta_i|))$ time.
	  \item Remove $\Delta_i$ from the set, in $O(\log n)$ time.
	  \item Add a new $\Delta_i$ to the set, in $O(\log n)$ time.
	\end{enumerate}
	It is also possible to store aggregate functions within the data structure (on subtrees), as required by some queries that fit \wref{def:rankbasedquery}.
\end{lemma}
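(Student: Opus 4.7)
The plan is to instantiate the gap data structure as a biased search tree, storing each gap $\Delta_i$ as a leaf of weight $w_i = |\Delta_i|$ and keeping separator keys at internal nodes to delimit adjacent gaps. The total weight at the root is $W = \sum_i w_i = n$, so the biased access bound $O(\log(W/w_i))$ matches exactly the required $O(\log(n/|\Delta_i|))$. Concretely, I would use the biased $2$-$3$ trees of Bent, Sleator, and Tarjan (or an equivalent weight-balanced structure), whose defining property is that a leaf of weight $w$ can be located from the root in $O(\log(W/w))$ time and that local rebalancing can be carried out in time proportional to the depth of the affected leaf.

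For operation 1, I follow the biased search from the root, using the separator keys at internal nodes as in a standard BST; the biased structure guarantees the search terminates at the leaf for $\Delta_i$ in $O(\log(n/|\Delta_i|))$ time. For operation 2, I locate $\Delta_i$ (or take it as input via a pointer kept alongside the leaf) and apply the biased tree's weight-change primitive, which updates $w_i$ and the aggregated weights on the root-to-leaf path, rebalancing along that path in $O(\log(n/|\Delta_i|))$ time; the root weight is $n$, so $n$ is maintained implicitly. For operations 3 and 4, I use the standard insertion and deletion of the biased tree, which run in $O(\log W) = O(\log n)$ worst-case time regardless of the weight of the inserted or removed leaf.

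To support aggregate functions as required by \wref{def:rankbasedquery}, I augment each internal node with the aggregate over the leaves in its subtree. Since aggregates are by assumption maintainable in $O(1)$ time per elementary update, recomputing them along the root-to-leaf path of any modification preserves all the time bounds stated above, and a query for the aggregate of any prefix or range of gaps can be answered by a top-down traversal in $O(\log n)$ time.

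The main obstacle I expect is verifying that a unit weight change can be absorbed in worst-case (not just amortized) $O(\log(n/|\Delta_i|))$ time, since in some variants of biased trees a weight crossing a threshold (for instance a power of two) can trigger restructuring outside the local path. I would address this by following the weight-change analysis of Bent, Sleator, and Tarjan, in which a $\pm 1$ change to a leaf of weight $w$ only causes rebalancing within levels near depth $\log(W/w)$ of the tree, and therefore fits the required bound; failing that, I would fall back to an amortized statement and absorb the difference into the amortized analysis developed in \wref{sec:analysis}, which already treats the data structure amortized-globally. The remaining operations reduce to standard balanced-tree manipulations and should pose no difficulty.
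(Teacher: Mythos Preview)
Your approach is essentially identical to the paper's: both instantiate the gap structure as a biased search tree of Bent, Sleator, and Tarjan with leaf weights $w_i = |\Delta_i|$, so that $W=n$ and the biased access bound gives $O(\log(n/|\Delta_i|))$. The paper resolves exactly the concern you flag about worst-case weight change by specifically choosing the \emph{globally}-biased $2,b$ trees and citing their Theorem~11, which guarantees worst-case $O(\log(W/w_i))$ for a weight change and $O(\log W)$ for insertion and deletion; with that choice your fallback to an amortized argument is unnecessary.
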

\begin{proof}
We can use, for example, a globally-biased $2, b$ tree~\cite{Bent85}. We assign gap $\Delta_i$ the weight $w_i = |\Delta_i|$; the sum of weights, $W$, is thus equal to $n$. Access to gap $\Delta_i$, operation~1, is handled in $O(\log(n/|\Delta_i|))$ worst-case time~\cite[Thm.\,1]{Bent85}. By~\cite[Thm.\,11]{Bent85}, operation~2 is handled via weight change in $O(\log(n/|\Delta_i|))$ worst-case time. Again by~\cite[Thm.\,11]{Bent85}, operations~3 and~4 are handled in $O(\log n)$ worst-case time or better.
\end{proof}

\begin{remark}[Alternative implementations]
A variety of biased search trees can be used as the data structure of \wref{lem:gapds}. In \wref{sec:splay}, we suggest splay trees for that purpose, which greatly simplifies implementation at the cost of making the runtimes amortized. What is more, we show that efficient access properties of the data structure of \wref{lem:gapds} can be inherited by the lazy search tree, hence the (orthogonal) efficiency gains for insertions in lazy search trees and for structured access sequences in splay trees can be had simultaneously.
%We also show that a suitable implementation can treat the gap data structure as a merely conceptual abstraction that is not itself implemented; instead it can operate directly on intervals.
%That further reduces the overhead of an implementation.
%
%
\end{remark}

The top level data structure allows us to access a gap in the desired time complexity for insertion. However, we must also support efficient queries. In particular, we need to be able to split a gap $\Delta_i$ into two gaps of size $x$ and $cx$ ($c \geq 1$) in amortized time $O(x \log c)$. We must build additional structure amongst the elements in a gap to support such an operation efficiently. At the cost of this organization, in the worst case we pay an additional $O(\log \log |\Delta_i|)$ time on insertion and key-changing operations.

\subsection{The Interval Data Structure}
\label{sec:intervalds}

We now discuss the data structure for the intervals. 
Given a gap $\Delta_i$, intervals $\mathcal{I}_{i,1}, \mathcal{I}_{i,2}, \ldots, \mathcal{I}_{i,\ell_i}$ are contained within it and maintained in a data structure as follows. 
We maintain with each interval the two splitting keys $(k_l,k_r)$ that separate this interval from its predecessor and successor (using $-\infty$ and $+\infty$ for the outermost ones), respectively;
the interval only contains elements $e=(k,v)$ with $k_l \le k \le k_r$.
We store intervals in sorted order in an array (see \wref{rem:avoid-arrays}), sorted with respect to $(k_l,k_r)$.
We can then find an interval containing a given key~$k$, \ie, with $k_l \le k \le k_r$,
using binary search in $O(\log \ell_i)$ time.

\begin{remark}[Handling duplicate keys]
	\label{rem:duplicate-keys-search}
	Recall that we allow repeated insertions, \ie, elements with the same key $k$.
	As detailed in \wref{sec:queryalg}, intervals separated by a splitting key $k$ can then both contain
	elements with key $k$. To guide the binary search in these cases,
	we maintain for each interval the number of elements with keys equal to the splitting keys $k_l$ and $k_r$.
\end{remark}

As we will see below, the number of intervals in one gap is always $O(\log n)$, 
and only changes during a query, so we can afford to update this array on query in linear time.

\begin{remark}[Avoiding arrays]
\label{rem:avoid-arrays}
Note that, to stay within the pointer-machine model, we can choose to arrange the intervals within any gap in a balanced binary search tree, thus providing the binary search without array accesses.
This also allows the ability to add new intervals efficiently.
In practice, however, binary search on an array is likely to be preferred.
\end{remark}

We conceptually split the intervals into two groups: intervals on the \textit{left} side and intervals on the \textit{right} side. An interval is defined to be in one of the two groups by the following convention.
\begin{enumerate}[label=(\Alph*),font=\bfseries]
\item \label{rule:left-right}
	\textbf{Left and right intervals:} An interval $\mathcal{I}_{i,j}$ in gap $\Delta_i$ is on the \emph{left side} if the closest query rank (edge of gap $\Delta_i$ if queries have occurred on both sides of $\Delta_i$) is to the left. Symmetrically, an interval $\mathcal{I}_{i,j}$ is on the \emph{right side} if the closest query rank is on the right. An interval with an equal number of elements in $\Delta_i$ on its left and right sides can be defined to be on the left or right side arbitrarily.
\end{enumerate}

Recall the definition of closest query rank stated in \wref{fn:note2}. The closest query rank is the closest boundary of gap $\Delta_i$ that was created in response to a query.

We balance the sizes of the intervals within a gap according to the following rule:
\begin{enumerate}[label=(\Alph*),start=2,font=\bfseries]
\item \label{rule:merge}
	\textbf{Merging intervals:}
	Let $\mathcal{I}_{i,j}$ be an interval on the left side, not rightmost of left side intervals. We merge $\mathcal{I}_{i,j}$ into adjacent interval to the right, $\mathcal{I}_{i,j+1}$, if the number of elements left of $\mathcal{I}_{i,j}$ in $\Delta_i$ equals or exceeds $|\mathcal{I}_{i,j}| + |\mathcal{I}_{i,j+1}|$. We do the same, reflected, for intervals on the right side.
\end{enumerate}
The above rule was carefully chosen to satisfy several components of our analysis. As mentioned, we must be able to answer a query for a rank $r$ near the edges of $\Delta_i$ efficiently. This implies we need small intervals near the edges of gap $\Delta_i$, since the elements of each interval are unordered. However, we must also ensure the number of intervals within a gap does not become too large, since we must determine into which interval an inserted element falls at a time cost outside of the increase in $B$ as dictated in \wref{lem:inserttime}. We end up using the structure dictated by \wref{rule:merge} directly in our analysis of query complexity, particularly in \wref{sec:ensure}.

Note that \wref{rule:merge} causes the loss of information. Before a merge, intervals $\mathcal{I}_{i,j}$ and $\mathcal{I}_{i,j+1}$ are such that for any $x \in \mathcal{I}_{i,j}$ and $y \in \mathcal{I}_{i,j+1}$, $x \leq y$. After the merge, this information is lost. Surprisingly, this does not seem to impact our analysis. Once we pay the initial $O(\log \ell_i)$ cost to insert an element via binary search, the merging of intervals happens seldom enough that no additional cost need be incurred.

\wref{rule:merge} ensures the following.

\begin{lemma}[Few intervals]
\label{lem:numintervals}
Within a gap $\Delta_i$, there are at most $4 \log(|\Delta_i|)$ intervals.
\end{lemma}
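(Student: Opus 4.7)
The plan is to bound the number of left-side intervals and the number of right-side intervals separately, each by $2\log(|\Delta_i|)$, and then sum. I focus on the left side; the right side is symmetric by Rule~\ref{rule:left-right}.

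Let $\ell$ denote the number of left-side intervals in $\Delta_i$, labelled $\mathcal{I}_{i,1},\ldots,\mathcal{I}_{i,\ell}$ in left-to-right order, and for $1 \le j \le \ell$ set $L_j = \sum_{k=1}^{j-1} |\mathcal{I}_{i,k}|$, the number of elements of $\Delta_i$ strictly to the left of $\mathcal{I}_{i,j}$. The contrapositive of Rule~\ref{rule:merge} is the key: for each $j < \ell$ (so $\mathcal{I}_{i,j}$ is a left-side interval that is not the rightmost such), the merge did \emph{not} trigger, so $L_j < |\mathcal{I}_{i,j}| + |\mathcal{I}_{i,j+1}|$; otherwise $\mathcal{I}_{i,j}$ would already have been merged into $\mathcal{I}_{i,j+1}$.

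Combining this with the telescoping identity $L_{j+2} = L_j + |\mathcal{I}_{i,j}| + |\mathcal{I}_{i,j+1}|$ yields the doubling recurrence
\[
L_{j+2} \;>\; 2L_j \qquad \text{for all } j \le \ell-1.
\]
Iterating from the base cases $L_2 \ge 1$ (since $\mathcal{I}_{i,1}$ is nonempty) and $L_3 \ge 2$, I obtain $L_{2k} > 2^{k-1}$ and $L_{2k+1} > 2^k$ whenever the subscripts stay within $\{1,\ldots,\ell\}$. The trivial upper bound $L_j \le |\Delta_i|$ then forces $\ell \le 2\log_2 |\Delta_i| + O(1)$.

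The symmetric argument gives the same bound for right-side intervals, and summing yields at most $4\log(|\Delta_i|)$ intervals in total. The only subtlety to be careful with is the handling of the base cases and small values of $|\Delta_i|$, where the convention $\log(x) = \max(\log_2 x,1)$ from \wpref{fn:note1} absorbs the additive constant; everything else is immediate from the doubling recurrence.
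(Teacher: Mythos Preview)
Your proof is correct and follows essentially the same approach as the paper: both arguments exploit that Rule~\ref{rule:merge} forces the cumulative left-side mass to (at least) double every two intervals, yielding at most $2\log_2|\Delta_i|$ intervals per side. The paper phrases this via the extremal size sequence $1,1,1,2,2,4,4,\ldots$ and the bound $(\sqrt2)^{k+2}$ on the first $k$ intervals, whereas you work directly with the prefix sums $L_j$ and the recurrence $L_{j+2}>2L_j$; one small caveat is that the $\log$ convention does not by itself absorb an additive $O(1)$ for large $|\Delta_i|$, but a careful run of your recurrence (using $L_{\ell+1}\le|\Delta_i|$) already gives $\ell\le 2\log_2|\Delta_i|$ without the extra constant.
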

\begin{proof}
First consider intervals on the left side. Let intervals $\mathcal{I}_{i, j+1}$ and $\mathcal{I}_{i, j+2}$ be on the left side. It must be that the number of elements in intervals $\mathcal{I}_{i, j+1}$ and $\mathcal{I}_{i, j+2}$ together is equal to or greater than the number of elements in the first $j$ intervals, by \wref{rule:merge}. 
Indeed, the worst-case sequence of interval sizes is $1,1,1,2,2,4,4,8,8,16,16,\ldots$, obtained recursively as $a_1=a_2=1$ and $a_j = a_1+\cdots+a_{j-2}+1-a_{j-1}$.
It follows that with every two intervals, the total number of elements at least doubles;
indeed we can show that the first $k$ intervals contain at least $(\sqrt 2)^{k+2}$ elements,
therefore $n$ elements are spread over at most $2 \log_2 n -2$ intervals.
To count intervals on the left resp.\ right side in $\Delta_i$, we observe that the maximal
number of intervals occurs if half of the elements are on either side, so there
can be at most $2 \cdot (2\log_2 (|\Delta_i|/2)-2) \le 4\log(|\Delta_i|)$ intervals in gap $\Delta_i$.
\end{proof}

For ease of implementation, we will invoke \wref{rule:merge} only when a \emph{query} occurs in gap $\Delta_i$. In the following subsection, we will see that insertion does not increase the number of intervals in a gap, therefore \wref{lem:numintervals} will still hold at all times even though \wref{rule:merge} might temporarily be violated after insertions. We can invoke \wref{rule:merge} in $O(\log |\Delta_i|)$ time during a query, since $|\Delta_i| \leq n$ and we can afford $O(\log n)$ time per query.

\subsection{Representation of Intervals}

It remains to describe how a single interval is represented internally.
Our analysis will require that merging two intervals can be done in $O(1)$ time and further that deletion from an interval can be performed in $O(1)$ time ($O(\log n)$ time actually suffices for $O(\log n)$ time delete overall, but on many operation sequences the faster interval deletion will yield better runtimes). Therefore, the container in which elements reside in intervals should support such behavior. An ordinary linked list certainly suffices; however, we can limit the number of pointers used in our data structure by representing intervals as a linked list of arrays. Whenever an interval is constructed, it can be constructed as a single (expandable) array. As intervals merge, we perform the operation in $O(1)$ time by merging the two linked lists of arrays. Deletions can be performed lazily, shrinking the array when a constant fraction of the entries have been deleted.

We analyze the number of pointers required of this method and the resulting improved bounds on insertion and key change in \wref{sec:qbounds}. If we choose not to take advantage of this directly, we can alternatively replace standard linked lists with linked list/array hybrids such as unrolled linked lists~\cite{Shao94}, which will likely outperform standard linked lists in practice.

\subsection{Insertion}
\label{sec:insertalg}

Insertion of an element $e=(k,v)$ can be succinctly described as follows. We first determine the gap $\Delta_i$ such that $k \in \Delta_i$, according to the data structure of \wref{lem:gapds}. We then binary search the $O(\log |\Delta_i|)$ intervals (by maintaining ``router'' keys separating the intervals) within $\Delta_i$ to find the interval $\mathcal{I}_{i,j}$ such that $k \in \mathcal{I}_{i,j}$. We increase the size of $\Delta_i$ by one in the gap data structure.

\begin{remark}[A single data structure]
\label{rem:work-over-intervals}
The attentive reader may wonder why we must first perform a binary search for gap $\Delta_i$ and then perform another binary search for interval $\mathcal{I}_{i,j}$ within $\Delta_i$. It seems likely these two binary searches can be compressed into one, and indeed, this intuition is correct. If preferred, we can use the data structure of \wref{lem:gapds} directly on intervals within gaps, so that weight $|\Delta_i|$ is evenly distributed over intervals $\mathcal{I}_{i,1}, \mathcal{I}_{i,2}, \ldots, \mathcal{I}_{i,\ell_i}$. (Alternatively, assigning weight $|\Delta_i|/\ell_i + |\mathcal{I}_{i,j}|$ to interval $\mathcal{I}_{i,j}$ can provide better runtimes in average case settings.) Unfortunately, doing so means only an $O(\log n)$ time change-key operation can be supported (unless the data structure is augmented further), and (small) weight changes must be performed on the full set of intervals within gap $\Delta_i$ on insertion and deletion. While such a data structure is possible, we find the current presentation more elegant and simpler to implement.
\end{remark}

\begin{remark}[Lazy insert]
One seemingly-obvious way to improve insertion complexity, improving perhaps either of the first two disadvantages listed in \wref{sec:con}, is to \emph{insert lazily}. That is, instead of performing a full insert of $e = (k,v)$ through the gap data structure and then again through the interval data structure, we keep a buffer at each node of the respective BSTs with all the elements that require processing at a later time. While this can improve overall time complexity on some simple operation sequences, it seems difficult to make this strategy efficient overall, when insertions, deletions and queries can be mixed arbitrarily. 

So while improving either of the two disadvantages listed in \wref{sec:con} (and indeed, an improvement in one may imply an improvement in the other) would likely utilize aspects of lazy insertion, we do not currently see a way to achieve this by maintaining buffers on nodes of the BSTs we use.
\end{remark}

\subsection{Query}
\label{sec:queryalg}

To answer a query with associated rank $r$, we proceed as follows. We again determine the gap~$\Delta_i$ such that $r \in \Delta_i$ according to the process described in \wref{def:rankbasedquery} on the data structure of \wref{lem:gapds}. %
While we could now choose to rebalance the intervals of $\Delta_i$ via \wref{rule:merge}, our analysis will not require application of \wref{rule:merge} until the \emph{end} of the query procedure.
We recurse into the interval $\mathcal{I}_{i,j}$ such that $r \in \mathcal{I}_{i,j}$, again using the process described in \wref{def:rankbasedquery} on the intervals of~$\Delta_i$ (this may use aggregate information stored in the data structure for intervals).

We proceed to process $\mathcal{I}_{i,j}$ by answering the query on $\mathcal{I}_{i,j}$ and replacing interval $\mathcal{I}_{i,j}$ with smaller intervals. 
First,
we partition $\mathcal{I}_{i,j}$ into sets $L$ and $R$, such that all elements in $L$ are less than or equal to all elements in $R$ and there are $r$ elements in the entire data structure which are either in $L$ or in an interval or gap left of $L$. 
This can typically be done in $O(|\mathcal{I}_{i,j}|)$ time using the result of the query itself; otherwise, linear-time selection suffices~\cite{Blum73}.

We further partition $L$ into two sets of equal size $L_l$ and $L_r$, again using linear-time selection, such that all elements in $L_l$ are smaller than or equal to elements in $L_r$; if $|L|$ is odd, we give the extra element to $L_l$ (unsurprisingly, this is not important). 
We then apply the same procedure \emph{one more time} to $L_r$, again splitting into equal-sized intervals.
Recursing further is not necessary.
We do the same, reflected, for set $R$; 
after a total of 5 partitioning steps the interval splitting terminates. 
An example is shown in \wref{fig:split}.

\begin{figure}[htbp]
	\def\intsep{0.08}
	\newcommand\drawint[4][1]{% [scale] 1:x 2:len 3:label
		\draw[semithick,|-|] (#2,0) -- 
		node[below=0pt,font=\scriptsize] {\scalebox{#1}[1]{$#4$}} 
		++(#3-\intsep,0) ;
	}
	\scalebox{1.2}{%
		\begin{tikzpicture}[scale=.4]
		\foreach \x in {-.1,6,19.1-\intsep} {
			\draw[densely dotted,thin] (\x,0) -- (\x,-6) ;
		}
		
		\drawint{-.1}{19.2}{|\mathcal I_{i,j}|=19}
		\node[font=\small] at (3*19/4,1) {interval $\mathcal I_{i,j}$} ;
		\draw[thick,gray] (6,-.5) -- ++ (0,1) 
		node[above,black,font=\scriptsize,align=center] 
		{query\\rank\\$r=6$} ;
		\begin{scope}[shift={(0,-5)}]
		\begin{scope}[shift={(-0.1,0)}]
		\drawint{0}{3}{3}
		\drawint{3}{2}{2}
		\drawint{5}{1}{1}
		\end{scope}
		\begin{scope}[shift={(0.1,0)}]
		\drawint{6}{3}{3}
		\drawint{9}{3}{3}
		\drawint{12}{7}{7}
		\end{scope}
		\end{scope}
		\node[scale=2,rotate=-90] at (19/2,-2.75) {$\Rightarrow$};
		
		\foreach \f/\t/\l in {0/6/L,6/19/R} {
			\draw[decoration={brace,amplitude=5pt,mirror},decorate]
			(\f+.1,-6.75) -- node[below=5pt,font=\scriptsize]{$\l$} ++(\t-\f-.2,0);
		}
		\end{tikzpicture}%
	}
	\caption{An interval $\mathcal{I}_{i,j}$ is split and replaced with a set of intervals.}
	\label{fig:split}
\end{figure}

\begin{remark}[Variants of interval replacement]
There is some flexibility in designing this interval-replacement procedure;
the critical property needed for our result is the following;
(details of which will become clear in \wref{sec:queryanalysis}):
(1) It yields at most $O(\log |\Delta_i|)$ intervals in gap $\Delta_i$ (typically by application of \wref{rule:merge}), (2) it satisfies an invariant involving a credit system~-- \wtpref{inv:credits}~-- and 
(3) splitting takes time~$O(|\mathcal{I}_{i,j}|)$.
In \wref{sec:random}, we show that exact median selection (when splitting $L$, $L_r$, $R$, and $R_l$) 
can be replaced with pivoting on a randomly chosen element. 
On a set of $n$ elements, this requires only $n$ comparisons 
instead of the at least $1.5n$ required by median-finding in expectation~\cite{Cunto89},
and it is substantially faster in practice.
\end{remark}

After splitting the interval $\mathcal I_{i,j}$ as described above,
we answer the query itself and update the gap and interval data structures as follows. We create two new gaps $\Delta'_i$ and $\Delta'_{i+1}$ out of the intervals of gap $\Delta_i$ including those created from sets $L$ and $R$. Intervals that fall left of the query rank $r$ are placed in gap $\Delta'_i$, and intervals that fall right of the query rank $r$ are placed in gap $\Delta'_{i+1}$. We update the data structure of \wref{lem:gapds} with the addition of gaps $\Delta'_i$ and $\Delta'_{i+1}$ and removal of gap~$\Delta_i$. Finally, we apply \wref{rule:merge} to gaps $\Delta'_i$ and~$\Delta'_{i+1}$.

\subsection{Deletion}

To delete an element $e=(k,v)$ pointed to by a given pointer \texttt{ptr}, we first remove $e$ from the interval $\mathcal{I}_{i,j}$ such that $k \in \mathcal{I}_{i,j}$. 
If $e$ was the only element in $\mathcal{I}_{i,j}$, we remove interval $\mathcal{I}_{i,j}$ from gap $\Delta_i$ (we can do so lazily, when \wref{rule:merge} is next run on gap $\Delta_i$). 
Then we decrement $\Delta_i$ in the gap data structure of \wref{lem:gapds};
if that leaves an empty gap, we remove $\Delta_i$ from the gap data structure.

\subsection{Change-Key}
\label{sec:changekeyalg}

The change-key operation can be performed as follows. Suppose we wish to change the key of element $e=(k,v)$, given by pointer \texttt{ptr}, to $k'$, and that $e$ currently resides in interval $\mathcal{I}_{i,j}$ in gap $\Delta_i$. We first check if $k'$ falls in $\Delta_i$ or if $e$ should be moved to a different gap. If the latter, we can do so as in deletion of $e$ and re-insertion of $(k',v)$. If the former, we first remove $e$ from $\mathcal{I}_{i,j}$. If necessary, we (lazily) delete $\mathcal{I}_{i,j}$ from $\Delta_i$ if $\mathcal{I}_{i,j}$ now has no elements. We then binary search the $O(\log |\Delta_i|)$ intervals of $\Delta_i$ and place $e$ into the new interval in which it belongs.

Note that although this operation can be performed to change the key of $e$ to anything, \wref{thm:main} only guarantees runtimes faster than $O(\log n)$ when $e$ moves closer to its nearest query rank within gap $\Delta_i$. Efficient runtimes are indeed possible in a variety of circumstances; this is explored in more detail in \wref{sec:changekeyanalysis}.

\section{Analysis}
\label{sec:analysis}

We use an amortized analysis~\cite{Tarjan85}. We will use a potential function with a built-in credit system.
Recall that our desired insertion complexity is about $O(\log(n/|\Delta_i|))$ time. On a query that splits a gap into two gaps of size $x$ and $cx$, we attempt to do so in (amortized) $O(\log n + x \log c)$ time. 
We require several definitions before we may proceed.

We distinguish between $0$-sided, $1$-sided, and $2$-sided gaps. A $2$-sided gap is a gap $\Delta_i$ such that queries have been performed on either side of $\Delta_i$; thus, intervals in $\Delta_i$ are split into intervals on the left side and intervals on the right side. This is the typical case. A $1$-sided gap $\Delta_i$ is such that queries have only been performed on one side of the gap; thus, intervals are all on the side towards the query rank in $\Delta_i$. There can be at most two $1$-sided gaps at any point in time. In the priority queue case, there is a single $1$-sided gap. The final category is a $0$-sided gap; when the data structure has performed no queries, all elements are represented in a single interval in a single $0$-sided gap.

We now give the following functional definitions.
\begin{align*}
	c(\mathcal{I}_{i,j}) &\;\ce\; \text{\# of \emph{credits} associated with interval $\mathcal{I}_{i,j}$.}\\
	o(\mathcal{I}_{i,j}) &\;\ce\; \text{\# of elements \emph{outside} $\mathcal{I}_{i,j}$ in $\Delta_i$, i.\,e.,}\\ 
	&\phantom{{}\;\ce\;{}}\text{\# of elements in $\Delta_i$ that are left (right) of $\mathcal{I}_{i,j}$ if $\mathcal{I}_{i,j}$ is on the left (right) side.}\\
	M &\;\ce\; \text{total \# of elements in $0$-sided or $1$-sided gaps.}
\end{align*}

As previously mentioned, intervals are defined to be on either the left or right side according to \wpref{rule:left-right}. For an interval $\mathcal{I}_{i,j}$ in a $2$-sided gap $\Delta_i$, $o(\mathcal{I}_{i,j})$ hence is the minimum number of elements either to the left (less than) or to the right (greater than) $\mathcal{I}_{i,j}$ in gap $\Delta_i$.

The rules for assigning credits are as follows: 
A newly created interval has no credits associated with it.
During a merge, the credits associated with both intervals involved in the merge may be discarded; they are not needed.
When an interval $\mathcal I_{i,j}$ is split upon a query, it is destroyed and new intervals 
(with no credits) are created from it; by destroying $\mathcal I_{i,j}$, the $c(\mathcal I_{i,j})$ credits associated with it are released.

We use the following potential function:
\[
\Phi \;=\; 10M \;+\; 4\sum_{\mathclap{\substack{1 \leq i \leq m,\\1 \leq j \leq \ell_i}}} \: 
	c(\mathcal{I}_{i,j}).
\]
Credits accumulated when an operation is cheaper than 
its amortized bound increase $\Phi$; 
in this way, we use credits to pay for work that will need to be performed in the future. 
We do so by maintaining the following invariant:
\begin{enumerate}[label=(\Alph*),start=3,font=\bfseries]
\item \label{inv:credits}
	\textbf{Credit invariant:}
	Let $\mathcal{I}_{i,j}$ be an interval. 
	Then $|\mathcal{I}_{i,j}| \le c(\mathcal{I}_{i,j}) + o(\mathcal{I}_{i,j})$.
\end{enumerate}
\begin{remark}[Intuition behind \wref{inv:credits}]
	The intuition behind \wref{inv:credits} is that the cost of splitting $\mathcal{I}_{i,j}$ is entirely paid for by the credits associated with $\mathcal{I}_{i,j}$ and by outside elements,
\ie, either released potential or by the distance to previous queries causing a corresponding increase in $B$.
	The intervals constructed from the elements of $\mathcal{I}_{i,j}$ are constructed in such a way that they satisfy \wref{inv:credits} at cost a constant fraction of the cost of splitting $\mathcal{I}_{i,j}$.
\end{remark}
\begin{remark}[Alternative potential function]
	It is possible to remove the credits in our potential function and \wref{inv:credits} and instead use the potential function
\[
\Phi \;=\; 10M \;+\; 4\sum_{\mathclap{\substack{1 \leq i \leq m,\\1 \leq j \leq \ell_i}}} \: 
\max(|\mathcal{I}_{i,j}| - o(\mathcal{I}_{i,j}), 0).
\]
We opt for the current method as we believe it is easier to work with.
\end{remark}
Observe that before any elements are inserted, $\Phi = 0$, and we have a single $0$-sided gap with one interval containing no elements. Thus \wref{inv:credits} is vacuously true. We proceed with an amortized analysis of the operations. For our amortization arguments, we assume the potential function to be adjusted to the largest constant in the $O(\cdot)$ notation necessary for the complexity of our operations. 
In the interest of legibility, we will drop this constant and compare outside of $O(\cdot)$ notation, as is standard in amortized complexity analysis.

\subsection{Insertion}
\label{sec:insertanalysis}

Insertion of element $e=(k,v)$ can be analyzed as follows. As stated in \wref{lem:gapds}, we pay $O(\log(n/|\Delta_i|))$ time to locate the gap $\Delta_i$ that $e$ belongs into. We adjust the size of $\Delta_i$ and $n$ by one in the data structure of \wref{lem:gapds} in $O(\log(n/|\Delta_i|))$ time. By \wref{lem:numintervals}, there are $O(\log |\Delta_i|)$ intervals in gap $\Delta_i$, and so we spend $O(\log \log |\Delta_i|)$ time to do a binary search to find the interval $\mathcal{I}_{i,j}$ that $e$ belongs into. We increase the size of $\mathcal{I}_{i,j}$ by one and add one credit to $\mathcal{I}_{i,j}$ to ensure \wref{inv:credits}. Thus the total amortized cost of insertion\footnote{Note that although $\log \log |\Delta_i|$ can be $o(\log \log n)$, there is no difference between $O(\sum_{i=1}^{q+1} |\Delta_i| (\log (n/|\Delta_i| )+ \log \log n))$ and $O(\sum_{i=1}^{q+1} |\Delta_i| (\log(n/|\Delta_i|) + \log \log |\Delta_i|)$. When the $\log \log n$ term in a gap dominates, $|\Delta_i| = \Omega(n/ \log n)$, so $\log \log n = \Theta(\log \log |\Delta_i|)$.} (up to constant factors) is $\log(n/|\Delta_i|)+\log \log |\Delta_i| + 4+10 = O(\log(n/|\Delta_i|) + \log \log |\Delta_i|)$. Note that if the data structure for \wref{lem:gapds} supports operations in worst-case time, insertion complexity is also worst-case.
We show in \wref{sec:qbounds} that the bound $O(\log q)$ also holds.

We use the following lemma to show that \wref{inv:credits} holds on insertion.

\begin{lemma}[Insert maintains \wref{inv:credits}]
\label{lem:ruleAinvariant}
Updating side designations according to \wref{rule:left-right} after insertions preserves \wref{inv:credits}.
\end{lemma}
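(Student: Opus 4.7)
The plan is to verify Rule~\ref{inv:credits} on an interval-by-interval basis after a single insertion. Let $e$ be the inserted element, let $\Delta_i$ be its gap, and let $\mathcal{I}_{i,k}$ be the interval into which $e$ is placed. Intervals in gaps other than $\Delta_i$ are untouched: their content, credits, and side designations are all unaffected, so Rule~\ref{inv:credits} trivially continues to hold for them. Only the intervals of $\Delta_i$ require analysis, and these split naturally into two cases: $\mathcal{I}_{i,k}$ itself, and every other $\mathcal{I}_{i,j}$ with $j \ne k$.

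For $\mathcal{I}_{i,k}$, the insertion increases $|\mathcal{I}_{i,k}|$ by one and the algorithm deposits one credit on it, so $c(\mathcal{I}_{i,k})$ also grows by one, while $o(\mathcal{I}_{i,k})$ is unchanged because the new element lies inside the interval. The inequality $|\mathcal{I}_{i,k}| \le c(\mathcal{I}_{i,k}) + o(\mathcal{I}_{i,k})$ therefore continues to hold with the same slack as before. For the remaining intervals $\mathcal{I}_{i,j}$, neither $|\mathcal{I}_{i,j}|$ nor $c(\mathcal{I}_{i,j})$ changes, so it suffices to argue that $o(\mathcal{I}_{i,j})$ does not decrease. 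Let $a_j$ and $b_j$ denote the number of elements of $\Delta_i$ strictly to the left and strictly to the right of $\mathcal{I}_{i,j}$ respectively. Under Rule~\ref{rule:left-right}, the side of $\mathcal{I}_{i,j}$ is chosen so that $o(\mathcal{I}_{i,j}) = \min(a_j, b_j)$; the insertion increments exactly one of $a_j$ or $b_j$ (according to whether $k<j$ or $k>j$), and leaves the other fixed, so $\min(a_j, b_j)$ either stays the same or grows by one.

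The one point that needs real checking, and which I expect to be the main subtlety, is the boundary case where Rule~\ref{rule:left-right}'s side designation actually flips as a result of the insertion. Suppose for instance that $a_j = b_j$ beforehand, so $\mathcal{I}_{i,j}$ may be on either side and $o(\mathcal{I}_{i,j}) = a_j = b_j$, and that $e$ is inserted to the left of $\mathcal{I}_{i,j}$, making the new counts $(a_j{+}1, b_j)$. After the insertion the closer gap boundary is the one on the right, so Rule~\ref{rule:left-right} (re)assigns $\mathcal{I}_{i,j}$ to the right side, yielding $o(\mathcal{I}_{i,j}) = b_j$, which equals its old value; the symmetric case is identical. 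In every situation, therefore, updating side designations under Rule~\ref{rule:left-right} leaves $o(\mathcal{I}_{i,j})$ nondecreasing, and Rule~\ref{inv:credits} survives the insertion for every interval of $\Delta_i$.
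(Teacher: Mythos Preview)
Your proof is correct and follows essentially the same approach as the paper: both arguments hinge on the observation that a side flip can only occur when the element counts on the two sides of an interval were (nearly) equal just before the insertion, so redesignation leaves $o(\mathcal{I}_{i,j})$ unchanged rather than decreasing it. Your version is somewhat more expansive than the paper's, since you also fold in the analysis of the receiving interval $\mathcal{I}_{i,k}$ (which the paper handles just outside the lemma), and you make the $\min(a_j,b_j)$ formulation explicit; one small caveat is that $o(\mathcal{I}_{i,j}) = \min(a_j,b_j)$ holds only in $2$-sided gaps, but since side redesignations under Rule~\ref{rule:left-right} can only occur there, this does not affect the argument.
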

\begin{proof}
The insertion of additional elements may cause an interval $\mathcal{I}_{i,j'}$ in the middle of $\Delta_i$ to change sides. This occurs exactly when the number of elements on one side exceeds the number of elements on the other side. However, before this insertion occurred, \wref{inv:credits} held with an equal number of elements on both sides of $\mathcal{I}_{i,j'}$. Since we do not change the number of credits associated with $\mathcal{I}_{i,j'}$, in effect, $o(\mathcal{I}_{i,j'})$ just changes which side it refers to, monotonically increasing through all insertions. It follows \wref{inv:credits} holds according to redesignations via \wref{rule:left-right} after insertions.
\end{proof}

\wref{inv:credits} then holds on insertion due to \wref{lem:ruleAinvariant} and since $o(\mathcal{I}_{i,j'})$ only possibly increases for any interval $\mathcal{I}_{i,j'}$, $j' \neq j$, with $|\mathcal{I}_{i,j'}|$ remaining the same; recall that an extra credit was added to interval $\mathcal{I}_{i,j}$ to accommodate the increase in $|\mathcal{I}_{i,j}|$ by one.

Note that from an implementation standpoint, no work need be done for intervals $\mathcal{I}_{i,j'}$ on insertion, even if they change sides. Any readjustment can be delayed until the following query in gap $\Delta_i$.

\subsection{Query}
\label{sec:queryanalysis}

We now proceed with the analysis of a query. We split the analysis into several sections. We first assume the gap $\Delta_i$ in which the query falls is a $2$-sided gap. We show \wref{inv:credits} implies we can pay for the current query. We then show how to ensure \wref{inv:credits} holds after the query. Finally, we make the necessary adjustments for the analysis of queries in $0$-sided and $1$-sided gaps. Recall that our complexity goal to split a gap into gaps of size $x$ and $cx$ ($c \geq 1$)  is $O(\log n + x \log c)$ amortized time.

\subsubsection{Current Query}
\label{sec:current}

For the moment, we assume the gap in which the query rank $r$ satisfies $r \in \Delta_i$ is a $2$-sided gap. Further, assume the query rank $r$ falls left of the median of gap $\Delta_i$, so that the resulting gaps are a gap $\Delta'_i$ of size $x$ and a gap $\Delta'_{i+1}$ of size $cx$ ($c \geq 1$). A picture is given in \wref{fig:query}. The case of query rank $r$ falling right of the median of $\Delta_i$ is symmetric.

\begin{figure}[htbp]
	\def\intsep{0.2}
	\newcommand\drawint[2]{% 1:x 2:len
		\draw[intline,|-|] (#1,0) -- ++(#2-\intsep,0) ;
	}
	\scalebox{1.4}{%
	\begin{tikzpicture}[xscale=.12,yscale=.18]
	\scriptsize
		\tikzset{intline/.style={thin}}
		\drawint{0}{1}
		\drawint{1}{3}
		\drawint{4}{5}
		\drawint{9}{8}
		\drawint{17}{8}
		\drawint{25}{10}
		\drawint{35}{15}
		\drawint{50}{21}
		\drawint{71}{13}
		\drawint{84}{4}
		\drawint{88}{2}
		\drawint{90}{1}
		\drawint{91}{1}
		
		\node[scale=.7] at (9+4,1) {$\mathcal I_{i,j}$};
		
		\draw[thick,draw=gray] (15,1) -- ++(0,-4) node[below,scale=.7] {query};
		
		\foreach \f/\t/\y/\l in {%
				0/92-\intsep/9/{gap $\Delta_i$},%
				0/15-\intsep/4.5/{gap $\Delta_i'$\\$|\Delta_i'| = x$},%
				15/92-\intsep/4.5/{gap $\Delta_{i+1}'$\\$|\Delta_{i+1}'|=cx$}%
		} {
			\draw[decoration={brace,amplitude=5pt},decorate] (\f+.2,\y) -- 
			node[above=5pt,scale=.7,align=center] {\l} ++(\t-\f-.4,0);
		}
		
		\draw[decoration={brace,mirror,amplitude=2.5pt},decorate]
			(0+.2,-2) -- node[below=2pt,scale=.7,align=center] {
				intervals outside\\ of $\mathcal I_{i,j}$ in $\Delta_i$ 
				will\\ move to gap $\Delta_i'$
			} ++(9-.4,0) ;
		\draw[decoration={brace,mirror,amplitude=2.5pt},decorate]
			(17+.2,-2) -- node[below=2pt,scale=.7,align=center] {
				intervals on \\same side of $\mathcal I_{i,j}$,\\
				will move to $\Delta_{i+1}'$
			} ++(50-17-.4,0) ;
		\draw[stealth-,very thin] (35+15/2,0.2) to[out=70,in=270] ++(60:1.8) 
				node[above,align=center,scale=.6] {
					last interval on\\ left side of $\Delta_i$} ;
	\end{tikzpicture}
	}
  \caption{A query that splits $\mathcal{I}_{i,j}$ in gap $\Delta_i$.}
  \label{fig:query}
\end{figure}

It takes $O(\log(n/|\Delta_i|)) = O(\log n)$ time via the data structure of \wref{lem:gapds} to find the gap~$\Delta_i$. We then find the interval $\mathcal{I}_{i,j}$ such that $r \in \mathcal{I}_{i,j}$. By \wref{def:rankbasedquery}, answering the query on the set of unsorted elements $\mathcal{I}_{i,j}$ can be done in $O(|\mathcal{I}_{i,j}|)$ time. Splitting interval $\mathcal{I}_{i,j}$ as described in \wref{sec:queryalg} can also be done in $O(|\mathcal{I}_{i,j}|)$ time.

Updating the data structure of \wref{lem:gapds} with the addition of gaps $\Delta'_i$ and $\Delta'_{i+1}$ and removal of gap $\Delta_i$ can be done in $O(\log n)$ time. Similarly, the total number of intervals created from the current query is no more than 6, and no more than $O(\log |\Delta_i|)$ intervals existed in gap $\Delta_i$ prior to the query, again by \wref{lem:numintervals}. Thus, applying \wref{rule:merge} to gaps $\Delta'_i$ and $\Delta'_{i+1}$ after the query takes no more than $O(\log |\Delta_i|) = O(\log n)$ time, because merging two intervals can be done in $O(1)$ time. 

We next show that merging of intervals according to \wref{rule:merge} will preserve \wref{inv:credits}.
\begin{lemma}[Merge maintains \ref{inv:credits}]
	Suppose interval $\mathcal{I}_{i,j}$ is merged into interval $\mathcal{I}_{i,j'}$ (note $j' = j+1$ if $\mathcal{I}_{i,j}$ is on the left side and $j' = j-1$ if $\mathcal{I}_{i,j}$ is on the right side), according to \wref{rule:merge}. Then the interval $\mathcal{I}_{i,j'}$ after the merge satisfies \wref{inv:credits}.
	\label{ruleBinvariant}
\end{lemma}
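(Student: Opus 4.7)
My plan is to prove the lemma by reading the merge trigger condition directly off of \wpref{rule:merge} and observing that the merged interval inherits the ``outside'' count of the merged-in interval. I would handle the left-side case explicitly and invoke symmetry for the right-side case.

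First I would set up notation: assume without loss of generality that $\mathcal{I}_{i,j}$ is on the left side and (by \wref{rule:merge}) not the rightmost left-side interval, so $\mathcal{I}_{i,j+1}$ is also on the left side and $j' = j+1$. I would note that the merge is triggered only when
\[
  o(\mathcal{I}_{i,j}) \;\ge\; |\mathcal{I}_{i,j}| + |\mathcal{I}_{i,j+1}| .
\]
The merged interval occupies the same position in $\Delta_i$ as $\mathcal{I}_{i,j+1}$ did and contains exactly the union of the two old multisets, hence its size is $|\mathcal{I}_{i,j}| + |\mathcal{I}_{i,j+1}|$.

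Next I would argue that after the merge the new interval is still on the left side: its rank-position in $\Delta_i$ is at most that of the old $\mathcal{I}_{i,j+1}$, which was on the left side by the choice of $j$; in particular the closest query rank (or left gap boundary) remains to its left, so \wpref{rule:left-right} still classifies it as a left-side interval. Consequently, $o$ of the new interval equals the number of elements lying strictly to its left in $\Delta_i$, which is exactly $o(\mathcal{I}_{i,j})$ (none of the elements to the left of $\mathcal{I}_{i,j}$ moved, and the elements of $\mathcal{I}_{i,j}$ itself are now inside the merged interval and are not counted in $o$).

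Chaining these two observations, and using that the merged interval can be created with $c = 0$ credits (credits are allowed to be discarded on a merge per the rules), I would conclude
\[
  |\mathcal{I}_{i,j'}^{\text{new}}|
  \;=\; |\mathcal{I}_{i,j}| + |\mathcal{I}_{i,j+1}|
  \;\le\; o(\mathcal{I}_{i,j})
  \;=\; o(\mathcal{I}_{i,j'}^{\text{new}})
  \;\le\; c(\mathcal{I}_{i,j'}^{\text{new}}) + o(\mathcal{I}_{i,j'}^{\text{new}}),
\]
which is \wref{inv:credits} for the merged interval. The right-side case is symmetric. There is no real obstacle here; the only subtle point is confirming that the side designation of the merged interval does not flip (so that $o$ still measures elements on the same side), which is why I would spell out the ``not rightmost of the left-side intervals'' clause of \wref{rule:merge}.
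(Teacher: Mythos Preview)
Your proof is correct and follows essentially the same argument as the paper: from the merge trigger $o(\mathcal{I}_{i,j}) \ge |\mathcal{I}_{i,j}| + |\mathcal{I}_{i,j'}|$ you conclude that the merged interval already satisfies \wref{inv:credits} with zero credits. You spell out one point the paper leaves implicit---that the merged interval retains the same side designation, so its $o(\cdot)$ is indeed the old $o(\mathcal{I}_{i,j})$---which is a welcome bit of extra care but not a different approach.
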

\begin{proof}
	Suppose interval $\mathcal{I}_{i,j}$ is merged into interval $\mathcal{I}_{i,j'}$ according to \wref{rule:merge}. Then $o(\mathcal{I}_{i,j}) \geq |\mathcal{I}_{i,j}| + |\mathcal{I}_{i,j'}|$. This implies that after the merge,  $o(\mathcal{I}_{i,j'}) \geq |\mathcal{I}_{i,j'}|$, since elements outside the merged interval $\mathcal{I}_{i,j'}$ are outside both of the original intervals. Thus $\mathcal{I}_{i,j'}$ satisfies \wref{inv:credits} without any credits.
\end{proof}

In total, we pay $O(\log n + |\mathcal{I}_{i,j}|)$ actual time. As the $O(\log n)$ component is consistent with the $O(\log n)$ term in our desired query complexity, let us focus on the $O(|\mathcal{I}_{i,j}|)$ term. We have the following.

\begin{lemma}[Amortized splitting cost]
\label{lem:cancel}
Consider a query which falls in interval $\mathcal{I}_{i,j}$ and splits gap $\Delta_i$ into gaps of size $x$ and $cx$. Then $|\mathcal{I}_{i,j}| - c(\mathcal{I}_{i,j}) \leq x$.
\end{lemma}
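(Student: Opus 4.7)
The plan is to derive the bound directly from the credit invariant \wref{inv:credits}, which gives $|\mathcal{I}_{i,j}|\le c(\mathcal{I}_{i,j})+o(\mathcal{I}_{i,j})$, so that the lemma reduces to showing $o(\mathcal{I}_{i,j})\le x$.

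To establish this, I would let $a$ and $b$ denote, respectively, the number of elements of $\Delta_i$ strictly to the left and strictly to the right of $\mathcal{I}_{i,j}$, so that $a+|\mathcal{I}_{i,j}|+b=|\Delta_i|=(c+1)x$. Because we are in the 2-sided case assumed in \wref{sec:current}, \wref{rule:left-right} assigns $\mathcal{I}_{i,j}$ to the side whose edge of $\Delta_i$ is closer, i.e., the side with fewer outside elements. Hence $o(\mathcal{I}_{i,j})=\min(a,b)$ (with tie-breaking absorbed into Rule A), and it suffices to show $\min(a,b)\le x$.

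For the final step, since the query rank $r$ lies inside $\mathcal{I}_{i,j}$, we have $a<r\le a+|\mathcal{I}_{i,j}|$, and by the convention from \wref{sec:model} the left-hand resulting gap satisfies $|\Delta'_i|\in\{r-1,r\}$. If the smaller piece of size $x$ is $\Delta'_i$ itself, this gives $a\le r-1\le|\Delta'_i|=x$. If instead the smaller piece of size $x$ is $\Delta'_{i+1}$, then $|\Delta'_i|=cx$, and the symmetric computation about the right edge---using $a+|\mathcal{I}_{i,j}|\ge r$ together with $r\ge cx$---yields $b\le (c+1)x-r\le x$. In either case $\min(a,b)\le x$, and combining with the invariant closes the proof.

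The only real obstacle is the bookkeeping: keeping track of which of $\Delta'_i$ and $\Delta'_{i+1}$ is the smaller piece and absorbing the $\pm 1$ coming from the two admissible choices for $|\Delta'_i|$. The conceptual content is simply that exactly one of the two outside counts $a$ and $b$ must be bounded by the size of the smaller resulting gap, while $o(\mathcal{I}_{i,j})$ automatically picks out the smaller of the two via the side-assignment rule of \wref{rule:left-right}.
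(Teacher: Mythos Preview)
Your proposal is correct and follows essentially the same approach as the paper: invoke \wref{inv:credits} to reduce to $o(\mathcal{I}_{i,j})\le x$, observe that in a $2$-sided gap $o(\mathcal{I}_{i,j})=\min(a,b)$ via \wref{rule:left-right}, and use that the query rank lies inside $\mathcal{I}_{i,j}$ to bound the smaller outside count by the smaller resulting gap size. The paper's own proof is simply terser, asserting $o(\mathcal{I}_{i,j})\le x$ directly from $r\in\mathcal{I}_{i,j}$ without spelling out the case analysis and $\pm1$ bookkeeping that you (correctly) make explicit.
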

\begin{proof}
By \wref{inv:credits}, $|\mathcal{I}_{i,j}| - c(\mathcal{I}_{i,j}) \leq o(\mathcal{I}_{i,j})$. 
Now, since $\Delta_i$ is a $2$-sided gap, $o(\mathcal{I}_{i,j})$ is the lesser of the number of elements left or right of $\mathcal{I}_{i,j}$. Since the query rank $r$ satisfies $r \in \mathcal{I}_{i,j}$, this implies $o(\mathcal{I}_{i,j}) \leq x$ (See \wref{fig:query} for a visual depiction).
\end{proof}

We can apply amortized analysis with \wref{lem:cancel} as follows. Interval $\mathcal{I}_{i,j}$ is destroyed and intervals that are built from its contents have no credits. Thus, $4c(\mathcal{I}_{i,j})$ units of potential are released. By applying \wref{lem:cancel}, we can use $c(\mathcal{I}_{i,j})$ units of this released potential to bound the cost $|\mathcal{I}_{i,j}|$ with $x$. This gives an amortized cost thus far of $\log n + x - 3c(\mathcal{I}_{i,j})$. We will use the extra $3c(\mathcal{I}_{i,j})$ units of potential in the following section, ensuring \wref{inv:credits} holds for future operations.

\subsubsection[Ensuring Invariant (C)]{Ensuring \wref{inv:credits}}
\label{sec:ensure}

We must ensure \wref{inv:credits} holds on all intervals in gaps $\Delta'_i$ and $\Delta'_{i+1}$. Again, we will suppose $\Delta'_i$ is the smaller gap of the two, so that $\Delta'_i$ has $x$ elements and $\Delta'_{i+1}$ has $cx$ elements; the other case is symmetric.

Let us first consider gap $\Delta'_i$. This gap contains intervals from $\Delta_i$ outside of $\mathcal{I}_{i,j}$ as well as intervals made from the elements of $\mathcal{I}_{i,j}$. Observe (cf.\ \wref{fig:query}) that gap $\Delta'_i$ has in total $x$ elements. Therefore, we can trivially ensure \wref{inv:credits} holds by adding enough credits to each interval of $\Delta'_i$ to make it so, at total amortized cost at most $4x$. Let us do this after applying \wref{rule:merge} to $\Delta'_i$, so it is balanced and satisfies \wref{inv:credits}.

We now consider gap $\Delta'_{i+1}$ after rebalancing according to \wref{rule:merge}. The application of \wref{rule:merge} after the query may cause some intervals to change sides towards the query rank $r$ and subsequently merge. Intervals created from $\mathcal{I}_{i,j}$ may also merge (this may be because \wref{rule:merge} was applied lazily or even because the largest interval created from $\mathcal{I}_{i,j}$ may be on the opposite side of the rest of the intervals created from interval $\mathcal{I}_{i,j}$).
In total, the intervals of $\Delta'_{i+1}$ fall into four distinct categories. Recall that when we apply \wref{rule:merge}, we merge an interval $\mathcal{I}_{i,j'}$ into interval $\mathcal{I}_{i,j''}$, so we assume the identity of the merged interval as $\mathcal{I}_{i,j''}$, and interval $\mathcal{I}_{i,j'}$ ceases to exist.

We call the four categories $A$, $B$, $C$, and $D$, and show how to ensure \wref{inv:credits} on each of them. 
Category $A$ are intervals that are created from $\mathcal{I}_{i,j}$ that fall on the side of the query rank $r$ so as to become $\Delta'_{i+1}$ intervals after the query. Category $B$ are intervals on the same side as $\mathcal{I}_{i,j}$ before the query which were located inward from $\mathcal{I}_{i,j}$ in $\Delta_i$. Category $C$ are intervals that were on the opposite side of interval $\mathcal{I}_{i,j}$ before the query, but now switch sides due to the removal of the gap $\Delta'_i$. Finally, category $D$ are intervals that lie on the opposite side of interval $\mathcal{I}_{i,j}$ both before and after the query.
A picture is given in \wref{fig:newgap}.

\begin{figure}[htbp]
	\def\intsep{0.2}
	\newcommand\drawint[2]{% 1:x 2:len
		\draw[intline,|-|] (#1,0) -- ++(#2-\intsep,0) ;
	}
	\scalebox{1.5}{%
	\begin{tikzpicture}[xscale=.12,yscale=.18]
	\scriptsize
		\tikzset{intline/.style={thin}}
		\drawint{17}{4}
		\drawint{21}{5}
		\drawint{26}{7}
		\drawint{33}{10}
		\drawint{43}{16}
		\drawint{59}{15}
		\drawint{74}{10}
		\drawint{84}{4}
		\drawint{88}{2}
		\drawint{90}{1}
		\drawint{91}{1}
		\begin{scope}[shift={(0,2)},opacity=.5]
			\drawint{0}{8+9}
			\node[scale=.7] at (9+5,1) {old $\mathcal I_{i,j}$};
		\end{scope}
		\begin{scope}[opacity=.3]

			\drawint{0}{5}
			\drawint{5}{2}
			\drawint{7}{3}

		\end{scope}

		\drawint{10}{2}
		\drawint{12}{2}
		\drawint{14}{3}
		
		\draw[thick,draw=gray] (10,4) -- ++(0,-6) node[below,scale=.7] {query};
		
		\foreach \f/\t/\y/\l in {%
				10/92-\intsep/5/{gap $\Delta_{i+1}'$}%
		} {
			\draw[decoration={brace,amplitude=5pt},decorate] (\f+.2,\y) -- 
			node[above=5pt,scale=.7,align=center] {\l} ++(\t-\f-.4,0);
		}

		\foreach \f/\t/\l in {%
			10/17/{\textbf{\boldmath Cat.\,$A$}:\\new intervals \\from old $\mathcal I_{i,j}$},%
			17/43/{\textbf{\boldmath Category $B$}:\\old left-side intervals to the \\right of $\mathcal I_{i,j}$ that remain \\left-side intervals in $\Delta_{i+1}'$},%
			43/59/{\textbf{\boldmath Category $C$}:\\intervals that transitioned \\from right side to left side},%
			59/92/{\textbf{\boldmath Category $D$}:\\right-side intervals that \\remain on right side in $\Delta_{i+1}'$}%
		} {
			\draw[decoration={brace,mirror,amplitude=2.5pt},decorate]
				(\f+.2,-4) -- node[below=2pt,scale=.65,align=center] {\l} 
				++(\t-\f-.4,0) ;
		}
		\draw[stealth-,very thin] (33+5,0.2) to[out=70,in=270] ++(60:2) 
				node[above,align=center,scale=.6] 
						{last interval on\\left side of $\Delta_i$} ;
	\end{tikzpicture}
	}
  \caption{Gap $\Delta'_{i+1}$ after query within interval $\mathcal{I}_{i,j}$ of $\Delta_i$. The picture assumes $\mathcal{I}_{i,j}$ was a left-side interval.}
  \label{fig:newgap}
\end{figure}

 We proceed with ensuring \wref{inv:credits} on each category.

\begin{itemize}
  \item \textbf{\boldmath Category $D$}: Category $D$ intervals are easiest. These intervals are not affected by the query and thus still satisfy \wref{inv:credits} with no additional cost.
  
  \item \textbf{\boldmath Category $A$}: Now consider category $A$ intervals. 
  Three such intervals, $\mathcal I'_{i+1,1}$, $\mathcal I'_{i+1,2}$, and $\mathcal I'_{i+1,3}$, are created in the query algorithm stated in \wref{sec:queryalg}. The leftmost and middlemost intervals, $\mathcal I'_{i+1,1}$ and $\mathcal I'_{i+1,2}$, have size $\frac{1}{4}|R| \pm 1$, and the rightmost interval $\mathcal I'_{i+1,3}$ has size $\frac{1}{2}|R| \pm 1$ (the $\pm1$ addresses the case that $|R|$ is not divisible by $4$). 
  
  Up to one element, $\mathcal I'_{i+1,2}$ has at least as many elements outside of it as within it. Thus after giving it one credit, $\mathcal I'_{i+1,2}$ satisfies \wref{inv:credits}. Similarly, $\mathcal I'_{i+1,3}$ will remain on the same side in most cases, and thus will also have enough elements outside it from the other two intervals (potentially after giving it one credit, too).
  But we always have to assign credits to $\mathcal I'_{i+1,1}$.
  Moreover, if interval $\mathcal{I}_{i,j}$ was very large, then $\mathcal I'_{i+1,3}$ may actually switch sides in the new gap $\Delta'_{i+1}$. 
  
  In the worst case, we will require credits to satisfy \wref{inv:credits} on both $\mathcal I'_{i+1,1}$ and $\mathcal I'_{i+1,3}$. As their sizes total $\frac{3}{4}|R| + O(1)$, at $4$ units of potential per credit the amortized cost to do so is no more than $3|\mathcal{I}_{i,j}| + O(1)$. We can use the extra $3c(\mathcal{I}_{i,j})$ units of potential saved from \wref{sec:current} to pay for this operation. By applying \wref{lem:cancel} again, we can bound $3|\mathcal{I}_{i,j}| - 3c(\mathcal{I}_{i,j})$ with $3x$, bringing the amortized cost of satisfying \wref{inv:credits} on category $A$ intervals to $O(x)$.

  \item \textbf{\boldmath Categories $B$ and $C$}: We'll handle category $B$ and $C$ intervals together. First observe that since $x$ elements were removed with the query, we can bound the number of credits necessary to satisfy \wref{inv:credits} on a single interval in category $B$ or $C$ with either $x$ or the size of that interval. For category $C$ intervals, this follows because they had more elements on their left side prior to the query, thus upon switching sides after the query, $x$ credits will suffice to satisfy \wref{inv:credits}, similarly to the proof of \wref{lem:ruleAinvariant}. In the new gap $\Delta'_{i+1}$, let $j'$ be the smallest index such that $|\mathcal{I}_{i+1,j'}| \geq x$. We will handle category $B$ and $C$ intervals left of $\mathcal{I}_{i+1,j'}$ and right of $\mathcal{I}_{i+1,j'}$ differently.
  
  Let us first consider category $B$ and $C$ intervals left of interval $\mathcal{I}_{i+1,j'}$. All such intervals have size less than $x$. If there are less than two such intervals, we may apply $x$ credits to each to ensure \wref{inv:credits} at total cost $O(x)$. Otherwise, consider intervals $\mathcal{I}_{i+1,j'-2}$ and $\mathcal{I}_{i+1,j'-1}$. Due to application of \wref{rule:merge} after the query, intervals $\mathcal{I}_{i+1,j'-2}$ and $\mathcal{I}_{i+1,j'-1}$ make up more than half of the total number of elements left of interval $\mathcal{I}_{i+1,j'}$. Since $|\mathcal{I}_{i+1,j'-2}| < x$ and $|\mathcal{I}_{i+1,j'-1}| < x$, it follows there are no more than $4x$ elements located in intervals left of interval $\mathcal{I}_{i+1,j'}$ in gap $\Delta'_{i+1}$. For each such interval, we add at most the size of the interval in credits so that \wref{inv:credits} holds on all intervals left of $\mathcal{I}_{i+1,j'}$ in gap $\Delta'_{i+1}$. The total cost is $O(x)$.
  
  Now consider intervals right of $\mathcal{I}_{i+1,j'}$. If there are less than two such intervals, we may apply $x$ credits to each to ensure \wref{inv:credits} at total cost $O(x)$. Otherwise, consider intervals $\mathcal{I}_{i+1,j'+1}$ and $\mathcal{I}_{i+1,j'+2}$. By \wref{rule:merge} after the query, $|\mathcal{I}_{i+1,j'+1}| + |\mathcal{I}_{i+1,j'+2}| > x$, since interval $\mathcal{I}_{i+1,j'}$ is outside intervals $\mathcal{I}_{i+1,j'+1}$ and $\mathcal{I}_{i+1,j'+2}$ and $|\mathcal{I}_{i+1,j'}| \geq x$ by choice of $j'$. Similarly, if such intervals are category $B$ or $C$ intervals, then $|\mathcal{I}_{i+1,j'+3}| + |\mathcal{I}_{i+1,j'+4}| > 2x$ and $|\mathcal{I}_{i+1,j'+5}| + |\mathcal{I}_{i+1,j'+6}| > 4x$. In general, $|\mathcal{I}_{i+1,j'+2k-1}| + |\mathcal{I}_{i+1,j'+2k}| > 2^{k-1}x$ for any $k$ where intervals $\mathcal{I}_{i+1,j'+2k-1}$ and $\mathcal{I}_{i+1,j'+2k}$ are category $B$ or $C$ intervals. Since there are $cx$ total elements in gap $\Delta'_{i+1}$, it follows the number of category $B$ and $C$ intervals right of $\mathcal{I}_{i+1,j'}$ is $O(\log c)$. We may then apply $x$ credits to all such intervals and interval $\mathcal{I}_{i+1,j'}$ for a total cost of $O(x \log c)$. 

\end{itemize}

Altogether, we can ensure \wref{inv:credits} for future iterations at total $O(x \log c)$ amortized cost.

\subsubsection{0-Sided and 1-Sided Gaps}
\label{sec:onesided}

We proceed with a generalization of the previous two sections for when the gap $\Delta_i$ in which the query falls is a $0$-sided or $1$-sided gap. If gap $\Delta_i$ is $0$-sided, we spend $O(n)$ time to answer the query, according to \wref{def:rankbasedquery} on a set of $n$ unsorted elements. Since \wref{inv:credits} is satisfied prior to the query, $4n$ credits are released. Quantity $M$ does not change. Thus, $4n$ units of potential are released, giving amortized time $n - 4n = -3n$. All intervals in the data structure resulting from the query are category $A$ intervals. The analysis of the preceding section for category $A$ intervals applies. We can pay $O(x)$ to satisfy \wref{inv:credits} on the smaller gap, and the remaining $3n$ units of released potential are enough to guarantee \wref{inv:credits} holds on all intervals in the larger gap.

Now suppose $\Delta_i$ is $1$-sided. If the query rank $r$ is closer to the side of $\Delta_i$ on which queries have been performed, then the same analysis of the preceding sections suffices. Note that there will be neither category $C$ nor category $D$ intervals. The creation of $2$-sided gap $\Delta'_i$ out of elements of $1$-sided gap $\Delta_i$ will cause $10x$ additional units of potential to be released due to the decrease in $M$; these units are not used in this case.

We are left with the case $\Delta_i$ is $1$-sided and the query rank $r$ is closer to the side of $\Delta_i$ on which queries have not been performed; suppose without loss of generality that previously only the right endpoint
of $\Delta_i$ has been queried and $r$ is closer to the left endpoint.
In this case, the creation of $2$-sided gap $\Delta'_{i+1}$ out of elements of $1$-sided gap $\Delta_i$ will cause $10cx$ units of potential to be released due to the decrease in $M$. Since $c \geq 1$, this is at least $5|\Delta_i|$ units of potential. We use them as follows. Answering the query takes no more than $O(|\Delta_i|)$ time, and ensuring intervals satisfy \wref{inv:credits} in new gaps $\Delta'_i$ and $\Delta'_{i+1}$ after the query similarly takes no more than $|\Delta_i|$ credits, which costs $4|\Delta_i|$ units of potential. Thus, in total this takes no more than $|\Delta_i| + 4|\Delta_i| - 5|\Delta_i| = O(1)$ amortized time.

\bigskip\noindent
Putting the preceding three sections together, we may answer a query in $O(\log n + x \log c)$ time while ensuring \wref{inv:credits} for future operations.

\subsection{Deletion}

The analysis of deletion of $e=(k,v)$ pointed to by \texttt{ptr} is as follows. The element $e$ can be removed from the interval in which it resides in $O(1)$ time. Removing said interval lazily, if applicable, takes $O(1)$ time. If the gap in which $e$ resides also needs removal, \wref{lem:gapds} says doing so will take $O(\log n)$ time.

In any case, when element $e \in \Delta_i$ is deleted, we must ensure \wref{inv:credits} on the remaining intervals of $\Delta_i$. If $e$ was outside of an interval $\mathcal{I}_{i,j}$, $o(\mathcal{I}_{i,j})$ decreases by one. Thus, for any such intervals, we pay one credit to ensure \wref{inv:credits} remains satisfied. Thus in accordance with \wref{lem:numintervals}, this takes $O(\log |\Delta_i|)$ total credits.

The total amortized cost is thus no more than $O(\log n + \log |\Delta_i|) = O(\log n)$. If the data structure of \wref{lem:gapds} supports operations in worst-case time, this runtime is also worst-case.

\subsection{Change-Key}
\label{sec:changekeyanalysis}

We analyze the change-key operation as follows. Suppose \texttt{ptr} points to element $e=(k,v)$ and we change its key as described in \wref{sec:changekeyalg} to $k'$. If $k'$ falls outside gap $\Delta_i$, $O(\log n)$ complexity follows from deletion and re-insertion of $(k',v)$. Otherwise, the binary search in $\Delta_i$ takes $O(\log \log |\Delta_i|)$ time, again by \wref{lem:numintervals}. To ensure \wref{inv:credits} on the intervals of $\Delta_i$, as is the case for deletion, we must pay one credit per interval $e$ is no longer outside of. Thus, the key-change operation takes at most $O(\log |\Delta_i|)$ time; however, if we change the key of $e$ towards the nearest query rank, we can show \wref{inv:credits} is satisfied without spending any credits.

At any point in time, all intervals in $\Delta_i$ are classified as being on the left side or the right side according to the closest query rank, in accordance to \wref{rule:left-right}. Any element of a left-side interval can have its key decreased, while only increasing or keeping constant the number of elements outside of any other left-side interval. The same is true for key increases of elements in right side intervals. 

Now consider if $e \in \mathcal{I}_{i,j}$ and $\mathcal{I}_{i,j}$ is the rightmost interval on the left side. Then we can also increase the key of $e$ while keeping the same or increasing the number of elements outside of any interval in $\Delta_i$. The same is true of decreasing the key of an element in the leftmost interval on the right side. Since the median of $\Delta_i$ falls in either the leftmost interval of the right side or the rightmost interval of the left side, it follows that we can ensure \wref{inv:credits} as long as the element whose key changes moves closer to its nearest query rank. Note that this analysis holds even as intervals change side designations due to insertions; for a refresher of this analysis see the proof of \wref{lem:ruleAinvariant}. This is despite delaying the application of \wref{rule:merge} until the following query in gap~$\Delta_i$.

This proves our statement in \wref{thm:main} about change-key.
The dichotomy displayed therein between cheap and expensive key changes can be refined as follows.
Suppose $c \geq 2$ is such that $e$ is located between (gap-local) ranks $|\Delta_i|/c$ and $|\Delta_i|-|\Delta_i|/c$ in $\Delta_i$; then we can change its key \emph{arbitrarily} in $O(\log \log \Delta_i + \log c)$ time. 
This is because of the geometric nature of interval sizes. Intervals are highly concentrated close to the edges of gap $\Delta_i$ in order to support queries that increase $B$ very little, efficiently. Thus, we can support arbitrary key changes in $O(\log \log |\Delta_i|)$ time for the vast majority of the elements of gap $\Delta_i$, since ensuring \wref{inv:credits} will only require a constant number of credits,
and the performance smoothly degrades as the changed elements get closer to previous query ranks.

A second refinement is that we can change $e$ arbitrarily without paying any credits if an insertion closer to the endpoint of gap $\Delta_i$ has happened before said key-change, but after the query that created $\Delta_i$: such insertion increases the number of elements outside of all intervals that are potentially affected by moving $e$ closer to the middle of $\Delta_i$, thus no credits have to be supplied.
A similar argument shows that the time complexity of deletion is only $O(1)$ if an element was previously inserted closer to the gap endpoint than the deleted element.
We point out again that, from the perspective of the data structure, these savings are realized automatically and the data structure will always run as efficiently as possible;
the credits are only an aspect of the analysis, not our algorithms.

\bigskip\noindent
In the following section, we show that a bound on the number of created intervals can bound the number of pointers required of the data structure and the insertion and change-key complexities when the number of queries is small.

\subsection{Pointer Bound and Improved Insertion and Change-Key}
\label{sec:qbounds}

The preceding sections show insertion into gap $\Delta_i$ in $O(\log (n/|\Delta_i|) + \log \log |\Delta_i|)$ time and a change-key time complexity of $O(\log \log |\Delta_i|)$. A bound of $O(\log q)$ can also be made, which may be more efficient when $q$ is small. We also prove the bound stated in \wref{thm:qbounds} on the total number of pointers required of the data structure. We address the latter first.

\begin{proof}[Proof of \wref{thm:qbounds}{}]
	Each query (including \texttt{Split($r$)} queries) creates at most $6$ intervals, and no other operations create intervals. The number of pointers required of all interval data structures is linear in the number of total intervals created, bounded to at most $n$. This is because elements within an interval are contiguous (in the sense an expandable array is contiguous) unless the interval is a result of merged intervals, where we assume that intervals are implemented as linked lists of arrays. Each merged interval must have been created at some point in time, thus the bound holds. The number of pointers required in the data structure of \wref{lem:gapds} is linear in the number of gaps (or intervals, if the data structure operates directly over intervals), taking no more than $O(\min(q, n))$ pointers, as the number of intervals is $O(\min(q, n))$.
\end{proof}

The above proof shows that the number of intervals and gaps in the entire data structure can be bounded by $q$. This implies the binary searches during insertion (both in the data structure of \wref{lem:gapds} and in \wref{sec:insertalg}) and change-key operations take no more than $O(\log q)$ time. 
This gives a refined insertion time bound of $O(\min(\log (n/|\Delta_i|) + \log \log |\Delta_i|, \log q))$ and a change-key time bound of $O(\min(\log q, \log \log |\Delta_i|))$.
To guarantee an $O(\log q$) time bound in the gap data structure, we can maintain
all gaps additionally in a standard balanced BST, with pointers between corresponding nodes
in both data structures. A query can alternatively advance from the root in both structures,
succeeding as soon as one search terminates.
Updates must be done on both structures, but the claimed $O(\log n)$ time bounds (for queries, delete, split, and merge) permit this behavior.

\section{Bulk Update Operations}
\label{sec:bulk}

Lazy search trees can support binary search tree bulk-update operations. We can split a lazy search tree at a rank $r$ into two lazy search trees $T_1$ and $T_2$ of $r$ and $n-r$ elements, respectively, such that for all $x \in T_1$, $y \in T_2$, $x \leq y$. We can also support a merge of two lazy search trees $T_1$ and $T_2$ given that for all $x \in T_1$, $y \in T_2$, $x \leq y$.

We state this formally in \wref{lem:bulk}.

\begin{lemma}
\label{lem:bulk}
Operation \texttt{Split($r$)} can be performed on a lazy search tree in time the same as \texttt{RankBasedQuery($r$)}. Operation \texttt{Merge($T_1$,\,$T_2$)} can be performed on lazy search trees in $O(\log n)$ worst-case time.
\end{lemma}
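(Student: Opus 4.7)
My plan is to reduce both bulk operations to a single structural operation on the top-level gap data structure of \wref{lem:gapds}, using the fact that globally-biased $2,b$ trees support split and concatenation in $O(\log n)$ worst-case time (this is established in the same framework of~\cite{Bent85} that gives the weighted-access bound we already use).

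For \texttt{Split($r$)}: first, execute the query procedure of \wref{sec:queryalg} on rank $r$. This locates the gap $\Delta_i$ that contains $r$, partitions the active interval $\mathcal{I}_{i,j}$ into smaller intervals, and replaces $\Delta_i$ in the gap data structure by two consecutive gaps $\Delta'_i$ and $\Delta'_{i+1}$ meeting at the split rank. By \wref{sec:queryanalysis} the amortized cost of this step is $O(x \log c + \log n)$. Second, split the biased top-level tree between $\Delta'_i$ and $\Delta'_{i+1}$ in $O(\log n)$ worst-case time, yielding two gap data structures that become the top-level trees of $T_1$ and $T_2$. No additional intervals or gaps are created, so the sum $\sum c(\mathcal{I}_{i,j})$ partitions cleanly between $T_1$ and $T_2$, and every gap interior to $T_1$ or $T_2$ retains the same side designation and query history it had in $T$. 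Only the boundary gaps $\Delta'_i, \Delta'_{i+1}$ could change classification, but these are the very gaps the query analysis just created, so their contribution to $\Phi$ is already paid. Hence $\Phi(T_1) + \Phi(T_2) \le \Phi(T)$ and the total amortized cost is $O(x \log c + \log n)$, matching \texttt{RankBasedQuery($r$)}.

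For \texttt{Merge($T_1, T_2$)}: concatenate the top-level biased trees of $T_1$ and $T_2$, placing all gaps of $T_1$ before all gaps of $T_2$; the assumption $x \le y$ for all $x \in T_1, y \in T_2$ guarantees this concatenation respects the total order on gaps. Globally-biased $2,b$ trees admit concatenation in $O(\log n)$ worst-case time, so this is the actual cost of the operation. No interval is created, destroyed, or altered, and each gap keeps the left/right classification determined by the queries that created its original boundaries, so $\sum c$ and $M$ are both additive across the merge. Therefore $\Phi(T) = \Phi(T_1) + \Phi(T_2)$ and the amortized cost equals the $O(\log n)$ actual cost.

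The main subtlety I expect is that after a merge, the rightmost gap of $T_1$ and the leftmost gap of $T_2$ may both remain 1-sided with respect to their original query histories, producing adjacent 1-sided gaps in the interior of the merged tree. This is permissible because the definitions of 0-, 1-, and 2-sided gaps and of \wpref{inv:credits} depend only on per-gap quantities $c(\mathcal{I}_{i,j})$, $o(\mathcal{I}_{i,j})$, and the side designation of each interval within its own gap; no argument in \wref{sec:analysis} requires that interior gaps be 2-sided. Hence subsequent insertions, deletions, key-changes, and queries carry through unchanged, and the invariants needed for the amortized bounds continue to hold after both bulk operations.
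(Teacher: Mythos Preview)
Your proposal is correct and follows essentially the same approach as the paper: perform the rank-$r$ query first, then split the biased gap tree between the two freshly created gaps; for merge, concatenate the biased gap trees using the $O(\log n)$ bound from~\cite{Bent85}. You supply additional detail the paper omits, namely the check that the potential $\Phi$ partitions (resp.\ sums) across the two trees and the observation that adjacent interior $1$-sided gaps arising after a merge do not invalidate any per-gap argument in \wref{sec:analysis}; the paper simply asserts that ``future per-operation costs are adjusted'' without working this out.
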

\begin{proof}
To perform operation \texttt{Split($r$)}, we first query for rank $r$ in the lazy search tree. We then split the data structure of \wref{lem:gapds} at the separation in gaps induced by the query for rank $r$. Two lazy search trees result, with their own future per-operation costs according to the number of elements and gaps that fall into each tree. Using a globally-biased $2, b$ tree~\cite{Bent85} with weights as in the proof of \wref{lem:gapds}, the split takes $O(\log n)$ worst-case time (Theorem 10 of~\cite{Bent85}). The overall time complexity is dominated by the query for rank $r$ in the original tree, since queries take at least $\Omega(\log n)$ time.

To perform operation \texttt{Merge($T_1$,\,$T_2$)}, we perform a merge on the data structures of \wref{lem:gapds} associated with each lazy search tree. Future per-operation costs are adjusted according to the union of all gaps and totaling of elements in the two lazy search trees that are combined. Using a globally-biased $2, b$ tree~\cite{Bent85} with weights as in the proof of \wref{lem:gapds}, the merge takes $O(\log n)$ worst-case time or better (Theorem 8 of~\cite{Bent85}).
\end{proof}

\wref{lem:bulk} completes the analysis for the final operations given in \wref{thm:main}.

\section{Average Case Insertion and Change-Key}
\label{sec:average}

Our time bounds from \wref{thm:main} are an additive $O(\log \log n)$ away from the optimal
time of insertion and change-key;
it turns out that in certain average-case scenarios, we can indeed reduce this time
to an optimal \emph{expected} amortized time.
The essential step will be to refine the binary search within a gap to an exponential search.

\subsection{Insert}

Recall that we store intervals in a sorted array.
We modify the insertion algorithm of the interval data structure in \wref{sec:insertalg} 
so that we instead perform a \emph{double binary search} 
(also called \emph{exponential search}~\cite{Bentley76}), 
outward from the last interval on the left side and first interval on the right side. 
This is enough to prove the following result.

\begin{theorem}[Average-case insert]
\label{thm:average-insert}
	Suppose the intervals within a gap are balanced using \wref{rule:merge} and
	further suppose insertions follow a distribution such that the gap in which an inserted element 
	falls can be chosen adversarially, but amongst the elements of that gap, 
	its rank is chosen uniformly at random. 
	Then insertion into gap $\Delta_i$ takes expected time $O(\log(n/|\Delta_i|))$.
\end{theorem}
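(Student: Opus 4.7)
The plan is to show that, under the stated distributional assumption, the cost of the within-gap binary search reduces from the worst-case $O(\log \log |\Delta_i|)$ to an expected $O(1)$, so that the overall insertion cost is dominated by the $O(\log(n/|\Delta_i|))$ cost of locating and updating $\Delta_i$ via the gap data structure of \wref{lem:gapds}. The algorithm is modified to perform an exponential (``double binary'') search over the intervals of $\Delta_i$, starting from the center of the gap --- the boundary between the rightmost left-side interval and the leftmost right-side interval --- and marching outward, rather than a plain binary search over all $O(\log|\Delta_i|)$ intervals.

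The main structural ingredient is a geometric decay of interval sizes moving away from the center. The proof of \wref{lem:numintervals} already shows that when \wref{rule:merge} is in force, the cumulative size of the $k$ outermost intervals on either side is at least $(\sqrt 2)^{k+2}$, with the extremal sequence of sizes being $1,1,1,2,2,4,4,8,8,\ldots$. Dualizing this cumulative lower bound gives a per-interval upper bound: if $I_j$ denotes the $j$-th interval from the center (on either side), then $|I_j| \le C\cdot |\Delta_i| \cdot 2^{-j/2}$ for a constant $C$. Intuitively, if $|I_j|$ were larger, the pair $(I_{j-1}, I_j)$ would already violate \wref{rule:merge}, because the total number of elements outside would exceed $|I_{j-1}|+|I_j|$. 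We are entitled to invoke this invariant freely because the theorem's hypothesis assumes that the intervals are balanced according to \wref{rule:merge}.

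Under the distributional assumption, the inserted element's rank within $\Delta_i$ is uniform, so the probability that it lies inside interval $I_j$ equals $|I_j|/(|\Delta_i|+1) = O(2^{-j/2})$. Exponential search from the center reaches $I_j$ in $O(\log(j+1))$ comparisons, and the expected within-gap search cost is therefore
\[
  \sum_{j \ge 1} O(2^{-j/2}) \cdot O(\log(j+1)) \;=\; O(1).
\]
Combining this with the $O(\log(n/|\Delta_i|))$ cost of locating $\Delta_i$ and updating its weight in the gap data structure of \wref{lem:gapds}, and the $O(1)$ amortized cost of the credit bookkeeping already verified in \wref{sec:insertanalysis}, yields total expected amortized insertion time $O(\log(n/|\Delta_i|))$.

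The main obstacle is cleanly turning the cumulative size lower bound from \wref{rule:merge} into the per-interval upper bound indexed from the center; the cleanest route is to argue in consecutive pairs, mirroring the pairwise-doubling extremal sequence, so that every other step at least doubles the outside-elements count and gives the required $2^{-\Omega(j)}$ tail. Once that is in place, the rest is a standard expected-time analysis of exponential search against a geometric tail, and the two sides of the gap are handled symmetrically.
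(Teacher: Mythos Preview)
Your proposal is correct and follows essentially the same approach as the paper: replace the binary search over intervals by an exponential search from the center, use the geometric decay of interval sizes guaranteed by \wref{rule:merge} (the pairwise doubling $S_{j+1}>2S_{j-1}$ you identify is exactly the mechanism), and bound the expected within-gap cost by the convergent series $\sum_j O(\log j)\cdot 2^{-j/2}$. The paper phrases the decay as ``the number of elements in the remaining outward intervals shrinks by a constant factor per step'' rather than your per-interval bound $|I_j|\le C\,|\Delta_i|\,2^{-j/2}$, but these are equivalent reformulations of the same estimate.
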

\begin{proof}
First note that the double binary search during insertion finds an interval that is 
$k$ intervals from the middlemost intervals in time $O(\log k)$;
apart from constant factors, this is never worse than the $O(\log \ell_i)$ of a binary search.

The assumption on insertion ranks implies that the probability to insert into interval $\mathcal I_{i,j}$
(out of the possible $\ell_i$ intervals in gap $\Delta_i$) is $|\mathcal I_{i,j}|/ |\Delta_i| \pm O(1/|\Delta_i|)$,
\ie, proportional to its size.
Recall that in a gap $\Delta_i$ satisfying \wref{lem:numintervals}, interval sizes grow at least
like $(\sqrt2)^k$; that implies the largest (middlemost) intervals contain 
a constant fraction of the elements in $\Delta_i$; 
for these, insertion takes $O(1)$ time.
The same applies recursively:
With every outward step taken, the insertion procedure takes $O(1)$ more time, 
while the number of elements that fall in these intervals decreases by a constant factor. 
The expected insertion time in the interval data structure is proportional to
\[
	\sum_{k=1}^\infty \frac{\log k}{(\sqrt 2)^k} \;\leq\; 
	\sum_{k=1}^\infty \frac{k}{(\sqrt 2)^k} \;=\; 
	4 + 3 \sqrt 2,
\]
\ie, constant overall.
Adding the $O(\log(n/|\Delta_i|))$ time to find the gap yields the claim.
\end{proof}

Observe that walking from the largest intervals outward, instead of performing an exponential search~\cite{Bentley76}, is sufficient for the above analysis. However, the exponential search also satisfies the worst case $O(\log \log n)$ bound (more precisely $O(\min(\log \log |\Delta_i|, \log q))$) described in \wref[Sections]{sec:insertalg} and~\ref{sec:insertanalysis}.

\begin{remark}[Fast insertion without arrays]
We can achieve the same effect if intervals are stored in another biased search tree so that interval $\mathcal{I}_{i,j}$ receives weight $|\Delta_i|/\ell_i + |\mathcal{I}_{i,j}|$.
\end{remark}

\wref{thm:average-insert} assumes that intervals are balanced according to \wref{rule:merge}. 
In \wref{sec:ds}, we described balancing according to \wref{rule:merge} lazily. 
Keeping \ref{rule:merge} balanced while insertions or change-key operations occur, 
in the required time complexity, is nontrivial. 
We show it can be done in $O(1)$ amortized time below.

\begin{lemma}[Strict merging]
\label{lem:Bcheck}
Given a gap $\Delta_i$, we can keep intervals in $\Delta_i$ balanced according 
to within a constant factor of the guarantee of \wref{rule:merge} in $O(1)$ 
amortized time per insertion into $\Delta_i$.
\end{lemma}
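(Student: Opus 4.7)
The plan is to relax \wref{rule:merge} to a softer rule, Rule B$'$, which triggers a merge of $\mathcal{I}_{i,j}$ into $\mathcal{I}_{i,j+1}$ only once $o(\mathcal{I}_{i,j}) \geq c(|\mathcal{I}_{i,j}| + |\mathcal{I}_{i,j+1}|)$ for a fixed constant $c > 1$. Under Rule B$'$, an argument analogous to \wref{lem:numintervals} still yields geometric growth of interval sizes (with a smaller base), so the number of intervals per gap remains $O(\log |\Delta_i|)$ and intervals stay balanced to within a constant factor of Rule B, as required by the lemma statement.

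Next, I would describe an incremental maintenance scheme. The key observation is that an insertion into interval $\mathcal{I}_{i,j}$ on the left side changes slacks in a very structured way: the slacks of $\mathcal{I}_{i,j-1}$ and $\mathcal{I}_{i,j}$ increase by $c$ (since $|\mathcal{I}_{i,j}|$ grows), while the slacks of all intervals $\mathcal{I}_{i,k}$ with $k > j$ on the left side decrease uniformly by one (since $o(\mathcal{I}_{i,k})$ increases by one). The right side is symmetric. This uniform-range-decrement structure lets us avoid updating slacks individually: per side of each gap, I would maintain a global offset counter together with stored slack values relative to it. An insertion then updates only the offset and the stored slacks of at most two specific intervals, all in $O(1)$ time.

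With effective slacks available in $O(1)$, I would maintain on each side a pointer to the interval with the smallest effective slack. After each insertion we consult this pointer, and if the indicated interval's effective slack has dropped to or below zero, we merge it into its neighbor toward the middle, destroy it, and advance to the next-smallest, repeating until no Rule B$'$ violation remains. The amortized $O(1)$ per-insertion bound then follows from two observations: per-insertion bookkeeping (offset update, two stored-slack updates, and one pointer check) is $O(1)$ worst-case; and each merge destroys one interval permanently, so the total merge cost over the lifetime of the data structure is bounded by the total number of intervals ever created, which is $O(n + q)$ by the interval-creation analysis in \wref{sec:queryalg}. Amortizing this total cost over insertions yields $O(1)$ per insertion.

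The main obstacle I anticipate is maintaining the min-slack pointer cheaply under the cascading effects of merges and under side transitions, since an interval may switch sides by \wref{lem:ruleAinvariant} as elements accumulate near it. Because there are only $O(\log |\Delta_i|)$ intervals per gap and because the offset trick preserves the relative order of stored slacks among intervals not directly updated, the pointer can be rebuilt lazily after each merge, charging its rebuild cost against the destroyed interval; side transitions are folded into the same potential argument, with at most one reclassification per insertion amortized by the same merge-cost budget. The same algorithm handles change-key operations, since a key change within $\Delta_i$ is structurally equivalent to a deletion plus an insertion inside the same gap and affects slacks in the same range-decrement manner.
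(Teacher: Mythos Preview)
Your relaxation of \wref{rule:merge} to a constant-factor version (your Rule~B$'$) is exactly what the paper does, and your observation that the geometric growth of \wref{lem:numintervals} survives this relaxation is correct.

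The gap is in the offset trick. A single global offset per side simulates a uniform decrement of \emph{all} stored slacks on that side, but an insertion into $\mathcal{I}_{i,j}$ decrements only the slacks of intervals with index $k>j$; the slacks of intervals with $k<j-1$ are unchanged. If you increment the global offset and then individually correct only the two intervals $\mathcal{I}_{i,j-1}$ and $\mathcal{I}_{i,j}$, every interval with $k<j-1$ now carries an effective slack that is one too small. Repairing those requires $j-2$ further updates, and $j$ can be $\Theta(\log|\Delta_i|)$. A suffix decrement simply cannot be encoded by one scalar offset plus $O(1)$ pointwise corrections. This inaccuracy then propagates to the min-slack pointer, whose correctness depends on having true effective slacks; the lazy-rebuild charging you sketch does not address slacks that drift without any merge occurring.

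The paper avoids slack bookkeeping altogether with a much simpler scheduled-checking argument. By \wref{lem:numintervals}, the $k$th interval from a boundary has size $\Omega\bigl((\sqrt2)^k\bigr)$, so Rule~B$'$ at that position can only become violated after $\Omega\bigl((\sqrt2)^k\bigr)$ further insertions. Hence it suffices to check the outermost interval at every insertion, the next one every second insertion, and in general the $k$th interval only every $\Theta\bigl((\sqrt2)^k\bigr)$ insertions. Over $t$ insertions the total number of checks is $\sum_k t\cdot(\sqrt2)^{-k}=O(t)$, giving $O(1)$ amortized per insertion with no auxiliary data structure for slacks at all.
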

\begin{proof}
We utilize the exponentially-increasing interval sizes due to \wref{lem:numintervals}. We check the outermost intervals about every operation and exponentially decrease checking frequency as we move inwards. The number of intervals checked over $k$ operations is $O(k)$. The guarantee of \wref{rule:merge} is changed so that the number of elements left of $\mathcal{I}_{i,j}$ in $\Delta_i$ is no more than a constant times $|\mathcal{I}_{i,j}| + |\mathcal{I}_{i,j+1}|$ (reflected for right side intervals), to which previous analysis holds.
\end{proof}

\subsection{Change-Key}

If we apply \wref{lem:Bcheck}, we can also support improved average-case change-key operations
in the following sense.

\begin{theorem}[Average-case change-key]
\label{thm:average-changekey}
If a \textup{\texttt{ChangeKey(ptr,\,$k'$)}} operation is performed such that the element pointed to by \texttt{ptr}, $e=(k,v)$, moves closer to its closest query rank within its gap and the rank of $k'$ is selected uniformly at random from valid ranks, it can be supported in $O(1)$ expected time.
\end{theorem}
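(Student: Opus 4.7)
The plan is to mirror the proof of \wref{thm:average-insert}. The cost of a \texttt{ChangeKey(ptr,\,$k'$)} operation, as analyzed in \wref{sec:changekeyanalysis}, splits into three parts: (a) locating the new interval $\mathcal{I}_{i,j'}$ within gap $\Delta_i$ containing the new rank of $e$, (b) removing $e$ from its old interval $\mathcal{I}_{i,j}$ and inserting it into $\mathcal{I}_{i,j'}$, and (c) restoring \wref{inv:credits}. Step (b) takes $O(1)$ time with the linked-list-of-arrays representation, and step (c) requires no credits by \wref{sec:changekeyanalysis} under the hypothesis that $e$ moves toward its closest query rank. So the only non-trivial work is step (a), which I will argue takes $O(1)$ expected time.

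First I would invoke \wref{lem:Bcheck} so that intervals in $\Delta_i$ obey \wref{rule:merge} up to a constant factor after every operation, not only after queries. Under this strict invariant, the interval sizes $s_1, s_2, \ldots$ counted outward from the closest query-rank boundary of $\Delta_i$ grow at least like $(\sqrt{2})^u$, by the same recurrence used in the proof of \wref{lem:numintervals}. Suppose $e$ currently resides in $\mathcal{I}_{i,j}$ and, without loss of generality, its closest query rank is to the left. The valid ranks for $k'$ are the gap-local ranks strictly between the left boundary of $\Delta_i$ and the current rank of $e$; write this set as $V$ with $N \ce |V|$. Then
\[
N \;=\; \sum_{u=1}^{j-1} s_u \;-\; O(1) \;=\; \Theta(s_{j-1}),
\]
since the sum is geometric and dominated by its last term.

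To execute step (a), I would run a one-sided exponential search starting at $\mathcal{I}_{i,j-1}$ and proceeding toward $\mathcal{I}_{i,1}$. If the target interval is $\mathcal{I}_{i,j-t}$ for some $t \in \{1, \ldots, j-1\}$, the exponential search locates it in $O(\log t)$ time. Because the new rank is drawn uniformly from $V$, the probability of targeting $\mathcal{I}_{i,j-t}$ equals $s_{j-t}/N = \Theta((\sqrt{2})^{-(t-1)})$, so the expected search cost is
\[
\sum_{t \geq 1} O(\log t) \cdot \Theta\!\left((\sqrt{2})^{-(t-1)}\right) \;=\; O(1),
\]
exactly as in the average-case insert proof. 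Combined with the $O(1)$ contributions from steps (b) and (c), this yields the claim.

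The main obstacle is ensuring that the uniform distribution on $V$ really does translate into per-interval probabilities proportional to interval sizes, and that $s_u = \Theta((\sqrt{2})^u)$ holds \emph{strictly} rather than only after lazy rebalancing; this is precisely why \wref{lem:Bcheck} must be invoked and why the hypothesis of the theorem is stated in the context of \wref{sec:average}. A secondary subtlety is that the geometric series extends only up to $u = j-1$, so the dominant probability mass concentrates on the outermost permissible interval $\mathcal{I}_{i,j-1}$; this is exactly what keeps the $t=1$ case at constant probability and makes the weighted logarithmic sum finite. Once these two points are handled, the rest of the analysis is routine.
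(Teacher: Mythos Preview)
Your proposal is correct and follows the same approach as the paper: an exponential search outward from the current interval, combined with the geometric interval-size structure guaranteed by \wref{rule:merge}, yields $O(1)$ expected search cost exactly as in \wref{thm:average-insert}. One small imprecision worth tightening: \wref{lem:numintervals} guarantees geometric growth of the \emph{cumulative} sizes $T_u=s_1+\cdots+s_u$, not of the individual $s_u$, so the clean inequality is $P(\text{distance}\ge t)=T_{j-t}/N\le T_{j-t}/T_{j-1}=O\bigl((\sqrt2)^{-t}\bigr)$, which is all you need for the geometric sum.
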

\begin{proof}
We again perform a double binary search (exponential search~\cite{Bentley76}) for the new interval of $e$;
this time we start at the interval $\mathcal{I}_{i,j}$ in which $e$ currently resides and move outwards from there. 
The analysis follows similarly to \wref{thm:average-insert}.
\end{proof}

When used as a priority queue, \wref{thm:average-changekey} 
improves the average-case complexity of decrease-key to $O(1)$.

\section{Randomized-Selection Variant}
\label{sec:random}

We can improve the practical efficiency of lazy search trees by replacing exact median-finding in the query procedure with randomized pivoting. Specifically, after finding sets $L$ and $R$ as described in \wref{sec:queryalg}, we then partition $L$ into sets $L_l$ and $L_r$ by picking a random element $p \in L$ and pivoting so that all elements less than $p$ are placed in set $L_l$ and all elements greater than $p$ are placed in set $L_r$. To avoid biasing when elements are not unique, elements equal to $p$, should be split between $L_l$ or $L_r$. 
We then repeat the procedure one more time on set $L_r$. We do the same, reflected, for set $R$.

\begin{remark}[Partitioning with equal keys]
	In our analysis, we assume for simplicity that the number of 
	elements with same key as $p$, including $p$ itself, 
	that are assigned to the left segment is chosen uniformly at random from the number of copies.
	That implies overall a uniform distribution for the size of the segments.
	Partitioning procedures as used in standard implementations of quicksort~\cite{Sedgewick1978} 
	actually lead to slightly more balanced splits~\cite{Sedgewick1977a}; they will only perform better.
	For practical implementations of lazy search trees, choosing the partitioning element $p$
	as the median of a small sample is likely to improve overall performance.
\end{remark}

%Similarly, an implementation of lazy search trees is likely to use an efficient implementation of quickselect/randomized selection instead of the slower worst-case linear time selection.
Changing the query algorithm in this way requires a few changes in our analysis. The analysis given in \wref{sec:analysis} is amenable to changes in constant factors in several locations. Let us generalize the potential function as follows, where $\alpha$ is a set constant, such as $\alpha = 4$ in \wref{sec:analysis}. One can see from \wref{sec:onesided} that this will imply the constant in front of $M$ must be at least $2(\alpha+1)$.
\[
\Phi \;\;=\;\; 2(\alpha+1)M \;+\; \alpha\sum_{\mathclap{\substack{1 \leq i \leq m,\\1 \leq j \leq \ell_i}}} \: c(\mathcal{I}_{i,j}).
\]

Insertion still takes $O(\min(\log(n/|\Delta_i|) + \log \log |\Delta_i|, \log q))$ time. As before, splitting into sets $L$ and $R$ can typically be done in $O(|\mathcal{I}_{i,j}|)$ deterministic time via the result of the query, but if not, quickselect can be used for $O(|\mathcal{I}_{i,j}|)$ expected (indeed with high probability) time performance~\cite{Hoare61,FloydRivest1975,Kiwiel2005}. The modified pivoting procedure described above for $L_l$ and $L_r$ is repeated in total 4 times. We can thus bound the complexity of these selections at $O(|\mathcal{I}_{i,j}|)$, regardless of the randomization used.

Then by application of \wref{lem:cancel}, we reduce the current amortized time to split $\mathcal{I}_{i,j}$ to $O(x)$, leaving $(\alpha-1)c(\mathcal{I}_{i,j})$ units of potential to handle ensuring \wref{inv:credits} on category A intervals in \wref{sec:ensure}.

The number of credits necessary to satisfy \wref{inv:credits} on category $A$ intervals is now a random variable. %
Recall the arguments given in \wref{sec:current} and \wref{sec:ensure} regarding category $A$ intervals. As long as the (expected) number of credits to satisfy \wref{inv:credits} on category $A$ intervals is at most a constant fraction $\gamma$ of $|\mathcal{I}_{i,j}|$, we can set $\alpha = \frac{1}{1-\gamma}$ and the amortized analysis carries through. 

We have the following regarding the expected number of credits to satisfy \wref{inv:credits} on category $A$ intervals using the randomized splitting algorithm.

\begin{lemma}
\label{lem:expectedcost}
	Suppose a query falls in interval $\mathcal{I}_{i,j}$ and the intervals built from the elements of $\mathcal{I}_{i,j}$ are constructed using the randomized splitting algorithm. The expected number of credits necessary to satisfy \wref{inv:credits} on category $A$ intervals after a query is no more than $\frac{143}{144}|\mathcal{I}_{i,j}| + O(1)$.
\end{lemma}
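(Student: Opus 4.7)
My plan is to upper-bound the expected credits needed by each of the three Category~$A$ intervals individually and sum them. Assume without loss of generality that the query rank lies left of the median of $\mathcal{I}_{i,j}$, so that $R$ (the right part of $\mathcal{I}_{i,j}$) becomes part of $\Delta'_{i+1}$ as three intervals $R_{ll}, R_{lr}, R_r$ in left-to-right order: $R$ is pivoted into $R_l$ and $R_r$ by a uniform random element, and then $R_l$ is pivoted into $R_{ll}$ and $R_{lr}$ by a second uniform random element. The symmetric case where the query falls right of the median is handled analogously.

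First, I would express the worst-case credit requirement of each of the three intervals as a function of its random size. Since $R_{ll}$ is the leftmost interval of $\Delta'_{i+1}$ and is a left-side interval, $o(R_{ll}) = 0$ and it needs exactly $|R_{ll}|$ credits. Since $R_{lr}$ is left-side with $o(R_{lr}) = |R_{ll}|$, it needs at most $\max(|R_{lr}| - |R_{ll}|, 0)$ credits. For $R_r$, we use the worst case in which $R_r$ lies on the right side of $\Delta'_{i+1}$ with no Category~$B$, $C$, or $D$ elements beyond it, requiring up to $|R_r|$ credits.

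Next, I would use the distribution of the random sizes under uniform pivoting. Letting $m = |R|$, the size $|R_l|$ is uniform on $\{1, \dots, m\}$ (assigning the pivot to $R_l$), and conditional on $|R_l|$, the size $|R_{ll}|$ is uniform on $\{1, \dots, |R_l|\}$. A direct calculation gives $E[|R_{ll}|] = (m+3)/4$, $E[|R_r|] = (m-1)/2$, and, by conditioning on $|R_l|$ and summing $(|R_l| - 2|R_{ll}|)$ over the indices where this quantity is positive, $E[\max(|R_{lr}| - |R_{ll}|, 0)] = (m+1)/8 + O(1)$. Adding these three expectations, the expected total number of credits is at most $(7m+2)/8 + O(1)$, and since $m \le |\mathcal{I}_{i,j}|$, this gives $E[\text{credits}] \le \frac{7}{8}|\mathcal{I}_{i,j}| + O(1) \le \frac{143}{144}|\mathcal{I}_{i,j}| + O(1)$; the slack between $\frac{7}{8} = \frac{126}{144}$ and $\frac{143}{144}$ absorbs the $O(1)$ discretization terms.

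The main subtlety is handling $R_r$ uniformly across cases: depending on the sizes of Categories~$B$, $C$, $D$ already present in $\Delta'_{i+1}$, $R_r$ may end up on either the left side of $\Delta'_{i+1}$ (needing only $\max(|R_r| - |R_l|, 0)$ credits) or the right side (needing up to $|R_r|$ credits). The clean resolution is to use the worst-case upper bound $|R_r|$ in both cases, since this still leaves enough slack to reach the stated $143/144$ constant. Degenerate cases such as $|R| \le 1$ (where $R_l$ or one of its children is empty), and ties involving the pivot key under the assumed tie-breaking rule, contribute only to the $O(1)$ additive term.
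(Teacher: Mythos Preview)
Your approach is genuinely different from the paper's and aims for a tighter constant. The paper does \emph{not} compute the expectation directly; instead it isolates a single ``good event'' --- the rightmost Category~$A$ interval has size in $[\tfrac13|R|,\tfrac23|R|]$ and the leftmost takes between $\tfrac12$ and $\tfrac34$ of what remains --- which occurs with probability $\tfrac1{12}-O(1/|R|)$ and guarantees at most $\tfrac{11}{12}|R|$ credits, then uses the trivial bound $|R|$ on the complement. That gives exactly $\tfrac{11}{12}+\tfrac{1}{12}\cdot\tfrac{11}{12}=\tfrac{143}{144}$. The virtue of the paper's cruder argument is that it never has to track which side an interval lands on: the good event is chosen so that the middle interval needs no credits regardless of side.

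Your direct-expectation route would yield the sharper constant $\tfrac78$, but there is a genuine gap in the bound for $R_{lr}$. You assert $R_{lr}$ is a left-side interval with $o(R_{lr})=|R_{ll}|$, but this fails whenever $|R_{ll}|$ exceeds the total number of elements to the right of $R_{lr}$ in $\Delta'_{i+1}$ (namely $|R_r|$ plus any Category~$B$/$C$/$D$ elements). In that event $R_{lr}$ is a right-side interval with $o(R_{lr})\le |R_r|$, which can be strictly smaller than $|R_{ll}|$. Concretely, with $|R|=100$, no $B$/$C$/$D$ elements, $|R_r|=1$, $|R_{ll}|=90$, $|R_{lr}|=9$, the true credit requirement is $90+8+1=99$, whereas your pointwise formula $\max(|R_{ll}|,|R_{lr}|)+|R_r|$ gives only $91$. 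Since this bad case occurs with nonvanishing probability (roughly $1-\ln 2\approx 0.307$ when there are no $B$/$C$/$D$ elements), your expectation is not an upper bound as written.

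The fix is easy: use the side-agnostic estimate $o(R_{lr})\ge \min(|R_{ll}|,|R_r|)$, so that $R_{lr}$ needs at most $\max\bigl(0,\,|R_{lr}|-\min(|R_{ll}|,|R_r|)\bigr)$ credits. Recomputing the expectation with this correction still gives a constant strictly below~$1$ (and hence below $\tfrac{143}{144}$), so your overall strategy survives --- but the step as you wrote it does not.
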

\begin{proof}
	We prove the loose bound considering only one random event in which a constant fraction of $|\mathcal{I}_{i,j}|$ credits are necessary, which happens with constant probability.
	
	We orient as in \wref{sec:ensure}, assuming the larger new gap, $\Delta'_{i+1}$, is right of the smaller new gap, $\Delta'_i$. We must consider the number of credits necessary to satisfy \wref{inv:credits} on the three category $A$ intervals $\mathcal{I'}_{i+1,1}$, $\mathcal{I'}_{i+1,2}$, and $\mathcal{I'}_{i+1,3}$ of new gap $\Delta'_{i+1}$. The rightmost interval $\mathcal{I'}_{i+1,3}$ has size drawn uniformly at random in $1, \ldots, |R|$, the leftmost, $\mathcal{I'}_{i+1,1}$, takes size uniformly at random from the remaining elements, and the middlemost interval $\mathcal{I'}_{i+1,2}$ takes whatever elements remain.
	
	Suppose the rightmost interval $\mathcal{I'}_{i+1,3}$ comprises a fraction of
	$x = |\mathcal I'_{i+1,3}| / |R| \in \bigl[\frac{1}{3},\,\frac{2}{3}\bigr]$ of all elements in $R$, and further suppose the leftmost interval $\mathcal{I'}_{i+1,1}$ takes between $1/2$ and $3/4$ of the remaining elements, \ie,
	a fraction $y = |\mathcal I'_{i+1,1}|/|R| \in \bigl[\frac{1}{2}(1-x),\,\frac{3}{4}(1-x)\bigr]$ of the overall elements in $R$. In this case, it is guaranteed we require no credits to satisfy \wref{inv:credits} on the middlemost interval. The number of credits to satisfy \wref{inv:credits} on the rightmost and leftmost intervals is $(x+y)|R|$, which is maximized at $\frac{11}{12}|R|$. This event happens with probability $\frac{1}{3} \cdot \frac{1}{4} - O\left({1}/{|R|}\right) = \frac{1}{12} - O\left({1}/{|R|}\right) $, where we include the $O\left({1}/{|R|}\right)$ term to handle rounding issues with integer values of $|R|$. As we never require more than $|R|$ credits in any situation and $|R|\le |\mathcal I_{i,j}|$, we can then bound the expected number of necessary credits at $\frac{11}{12}\cdot |\mathcal{I}_{i,j}| + \frac{1}{12} \cdot \frac{11}{12}|\mathcal{I}_{i,j}| + O(1) = \frac{143}{144}|\mathcal{I}_{i,j}| + O(1)$.
\end{proof}

With \wref{lem:expectedcost}, we can set $\alpha = 144$ and use the remaining $143c(\mathcal{I}_{i,j})$ credits from destroying $\mathcal{I}_{i,j}$ and bound $143|\mathcal{I}_{i,j}|-143c(\mathcal{I}_{i,j})$ with $143x$ via \wref{lem:cancel}. All other query analysis in \wref{sec:ensure} is exactly as before. This gives total expected amortized query time $O(\log n + x \log c)$ on $2$-sided gaps. With a constant of $2(\alpha+1)$ in front of $M$ in the generalized potential function, the analysis for $0$ and $1$-sided gaps in \wref{sec:onesided} carries through.

Putting it all together, we get the following result.

\begin{theorem}[Randomized splitting]
\label{thm:expected}
If partitioning by median in the query algorithm is replaced with splitting on random pivots, lazy search trees satisfy the same time bounds, in worst-case time, as in \wref{thm:main}, except that \texttt{RankBasedQuery($r$)} and \texttt{Split($r$)} now take $O(\log n + x \log c)$ expected amortized time.
\end{theorem}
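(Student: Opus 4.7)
The plan is to recast the amortized analysis of Section~\ref{sec:analysis} with a parametrized potential function and then verify that every step goes through when median selection is replaced by random pivoting, tracking where expectation enters. Concretely, I would introduce
\[
\Phi \;=\; 2(\alpha+1) M \;+\; \alpha \sum_{\substack{1\le i\le m \\ 1\le j\le \ell_i}} c(\mathcal{I}_{i,j}),
\]
with a parameter $\alpha\ge 4$ to be fixed at the end, and re-examine each operation to see how its amortized cost depends on~$\alpha$. Insertion, deletion, change-key, and merge do not invoke the partitioning subroutine at all, so the arguments of \wref{sec:insertanalysis}, and the corresponding deletion/change-key/merge analyses go through verbatim, yielding the same worst-case bounds stated in \wref{thm:main}; only the constants tucked into the $O(\cdot)$ notation depend on $\alpha$. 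In particular, the $0$-sided and $1$-sided query analysis of \wref{sec:onesided} still balances because the coefficient $2(\alpha+1)$ of $M$ releases enough potential to cover both answering the query on $|\Delta_i|$ unsorted elements and re-credit up to $\alpha\cdot|\Delta_i|$ units needed by \wref{inv:credits}.

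The single place where randomness enters is the treatment of category~$A$ intervals in \wref{sec:ensure}. I would follow the exact same structure as before: the cost of physically splitting $\mathcal{I}_{i,j}$ (including the random pivoting applied at most four times) is $O(|\mathcal{I}_{i,j}|)$ deterministically, and \wref{lem:cancel} together with destroying $\mathcal{I}_{i,j}$ gives $\alpha\,c(\mathcal{I}_{i,j})$ units of released potential. One credit of this pays for the split (via the bound $|\mathcal{I}_{i,j}|-c(\mathcal{I}_{i,j}) \le x$), leaving $(\alpha-1)\,c(\mathcal{I}_{i,j})$ units to recredit the at most three category~$A$ intervals. The categories $B$, $C$, $D$, and the top-level gap bookkeeping are unaffected by how $\mathcal{I}_{i,j}$ was split internally, so their cost remains $O(x\log c)$ amortized as before. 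Hence it suffices to show that the \emph{expected} number of credits needed to restore \wref{inv:credits} on the three new $A$-intervals is at most $\gamma\,|\mathcal{I}_{i,j}|+O(1)$ for some constant $\gamma<1$.

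This is precisely the content of \wref{lem:expectedcost}, which gives $\gamma = \tfrac{143}{144}$. Choosing $\alpha=144$ (so that $1/(1-\gamma)=\alpha$), the expected recrediting cost $\tfrac{143}{144}|\mathcal{I}_{i,j}|+O(1)$ is dominated, after using \wref{lem:cancel} once more to replace $|\mathcal{I}_{i,j}|$ by $x+c(\mathcal{I}_{i,j})$, by $143\,x + 143\,c(\mathcal{I}_{i,j}) + O(1)$, which is exactly covered by the $(\alpha-1)\,c(\mathcal{I}_{i,j}) = 143\,c(\mathcal{I}_{i,j})$ released credits plus an additional $O(x)$ absorbed into the query's amortized bound. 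Summing yields expected amortized time $O(\log n + x\log c)$ per \texttt{RankBasedQuery}($r$) (and hence \texttt{Split}($r$) via \wref{lem:bulk}), while all other operations retain their worst-case guarantees. The main obstacle in this plan is establishing the constant-fraction savings with constant probability that powers \wref{lem:expectedcost}; everything else is bookkeeping to propagate the new constant through the already-established potential argument.
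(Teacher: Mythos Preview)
Your proposal is correct and follows essentially the same approach as the paper: you introduce the parametrized potential $\Phi = 2(\alpha+1)M + \alpha\sum c(\mathcal{I}_{i,j})$, isolate category~$A$ intervals as the only place where randomness enters, invoke \wref{lem:expectedcost} to obtain $\gamma=\tfrac{143}{144}$, and set $\alpha=144$ so that the released $(\alpha-1)c(\mathcal{I}_{i,j})$ credits combined with \wref{lem:cancel} absorb the expected recrediting cost. The paper's argument is the same in structure and in the choice of constants.
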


Note that another possible approach is to change \wref{inv:credits} to something like $c(\mathcal{I}_{i,j}) + 2o(\mathcal{I}_{i,j}) \geq |\mathcal{I}_{i,j}|$, which gives further flexibility in the rest of the analysis. This is, however, not necessary to prove \wref{thm:expected}.

\section{Lazy Splay Trees}
\label{sec:splay}

Splay trees~\cite{Sleator85} are arguably the most suitable choice of a biased search tree in practice;
we thereby explore their use within lazy search trees in this section.
We show that an amortized-runtime version of \wref{lem:gapds} can indeed be obtained using splay trees.
We also show that by using a splay tree, the efficient access theorems of the splay tree are achieved automatically
by the lazy search tree. This generalizes to any biased search tree that is used as the data structure of \wref{lem:gapds}.

\subsection{Splay Trees For The Gap Data Structure}

We show that splay trees can be used as the gap data structure.

\begin{lemma}[Splay for Gaps]
\label{lem:gapds-amortized}
	Using \textit{splay trees} as the data structure for the set of gaps $\{\Delta_i\}$ allows support of all operations listed in \wref{lem:gapds}, where the time bounds are satisfied as \textit{amortized} runtimes over the whole sequence of operations.
\end{lemma}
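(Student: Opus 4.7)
The plan is to equip each gap $\Delta_i$ with splay-tree weight $w_i = |\Delta_i|$, so that the total weight is $W = n$, and to key the splay tree by the interval separators of the $\Delta_i$ (each node carries the two splitting keys $k_l,k_r$ bounding its gap). Every node is augmented with its subtree weight and any aggregate required by the definition of aggregate function; these augmentations update in $O(1)$ per rotation and hence survive splaying. The runtime analysis will use the standard Sleator--Tarjan potential $\Phi = \sum_v \log s(v)$, where $s(v)$ is the weight of the subtree rooted at $v$.

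For operation~1, I would perform the usual predecessor-style search on keys to locate the node corresponding to the gap $\Delta_i$ with $k \in \Delta_i$, and then splay that node to the root. By the weighted access lemma, this has amortized cost $O(\log(W/w_i)+1) = O(\log(n/|\Delta_i|))$, matching the bound of \wref{lem:gapds}. For operation~2 (adjusting $|\Delta_i|$ by $\pm 1$), I would first splay $\Delta_i$ to the root at amortized cost $O(\log(n/|\Delta_i|))$, and only then perform the weight update. Once $\Delta_i$ is the root, the only subtree containing it is its own, so increasing $w_i$ by $1$ changes exactly one term of $\Phi$, namely $\log s(\Delta_i) = \log W$, by $\log((n\pm 1)/n) = O(1/n)$. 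The amortized cost of operation~2 therefore remains $O(\log(n/|\Delta_i|))$.

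Operations~3 and~4 use the standard splay-based split and join primitives. To remove $\Delta_i$, splay it to the root, detach it, then join its left and right subtrees $L,R$ by splaying the maximum of $L$ to its root and attaching $R$ as its right child. To insert a new gap $\Delta$, perform the usual key-based access splaying the predecessor (or successor) to the root, then rearrange the root and one of its children to make $\Delta$ the new root with the appropriate subtrees. In both cases the initial splay costs $O(\log n)$ amortized because $\log(W/w) \le \log n$ for any weight $w \ge 1$, and the link/cut steps change $\Phi$ by at most $O(\log n)$. The aggregate information at all touched nodes can be recomputed bottom-up in time proportional to the number of rotations, so no extra cost is incurred.

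The main obstacle is the one addressed above: a weight change at an arbitrary node $v$ generally perturbs $s(u)$ at every ancestor $u$ and can raise $\Phi$ by $\Theta(\log n)$, which would destroy the fine-grained $O(\log(n/|\Delta_i|))$ bound for operation~2. Splaying $\Delta_i$ to the root before reweighting is the crucial trick that localises the potential change to a single root term and makes the access-lemma bound survive the reweighting. A secondary subtlety is that the search in operation~1 terminates at whichever node is identified by the interval separators as containing $k$; since the access lemma bounds the amortized cost of splaying the \emph{last} node on the search path, and that node is precisely $\Delta_i$, the desired bound follows without modification.
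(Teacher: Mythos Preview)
Your proposal is correct and follows essentially the same approach as the paper: assign weight $w_i=|\Delta_i|$, invoke the weighted Access Lemma for operation~1, splay to the root before reweighting for operation~2, and use the standard splay insert/delete for operations~3 and~4. The only cosmetic difference is that the paper dispatches operations~2--4 by citing Sleator and Tarjan's Update Lemma, whereas you compute the potential change at the root explicitly; both arrive at the same bounds.
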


\begin{proof}
We use a splay tree~\cite{Sleator85} and weigh gap $\Delta_i$ with $w_i = |\Delta_i|$. 
The sum of weights, $W$, is thus equal to $n$. 
Operation 1 can be supported by searching with $e=(k,v)$ into the tree 
until gap $\Delta_i$ is found and then splayed. 
According to the Access Lemma~\cite{Sleator85}, this is supported in 
$O(\log(n/|\Delta_i|))$ amortized time. 
Operation 2 requires a weight change on gap $\Delta_i$. By first accessing gap $\Delta_i$, 
so that it is at the root, and then applying a weight change, 
this operation can be completed in time proportional to the access. 
According to the Access Lemma~\cite{Sleator85} and the Update Lemma~\cite{Sleator85}, 
this will then take $O(\log(n/|\Delta_i|))$ amortized time. 
Note that for our use of operation 2, the element will already have just been accessed, 
so the additional access is redundant. 
Operations 3 and 4 are supported in $O(\log n)$ time by the Update Lemma~\cite{Sleator85}.
Note that when the gap data structure is used in a lazy search tree, it always starts empty
and more gaps are added one by one when answering queries. Hence any sequence of operations
arising in our application will access every element in the splay tree at least once.
\end{proof}

Note that a bound of $O(\log q)$ amortized cost for all operations also holds by using 
equal weights in the analysis above (recall that in splay trees, the node weights are solely a means
for analysis and do not change the data structure itself). 

\subsection{Efficient Access Theorems}
\label{sec:efficient-access}

We now specify a few implementation details to show how lazy search trees can perform accesses as fast as the data structure of \wref{lem:gapds} (resp.\ \wref{lem:gapds-amortized}).%

If an element $e$ is the result of a query for a second time, then during that second access, $e$ is the largest element in its gap.
Instead of destroying that gap, we can assume the identity of the gap $e$ falls into after the query to be the same gap in which $e$ previously resided (depending on implementation, this may require a key change in the data structure of \wref{lem:gapds}, but the relative ordering of keys does not change). In this way, repeated accesses to elements directly correspond to repeated accesses to nodes in the data structure of \wref{lem:gapds}. 
Further, implementation details should ensure that no restructuring occurs in the interval data structure when an element previously accessed is accessed again. This is implied by the algorithms in \wref{sec:ds}, but care must be taken in the case of duplicate elements. This will ensure accessing a previously-accessed element will take $O(1)$ time in the interval data structure.

With these modifications, the lazy search tree assumes the efficient access properties of the data structure of \wref{lem:gapds}. We can state this formally as follows.

\begin{theorem}[Access Theorem]
	\label{thm:efficientaccess}
	Given a sequence of element accesses, lazy search trees perform the access sequence in time no more than an additive linear term from the data structure of \wref{lem:gapds}, disregarding the time to access each element for the first time.%
\end{theorem}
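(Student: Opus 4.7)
The plan is to establish a near-bijection between the cost of an access in the lazy search tree and the cost of the corresponding access in the underlying gap data structure, and to charge the residual work (interval restructuring on first accesses) to the additive linear term.

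First, I would set up an accounting scheme with two buckets: a \emph{first-access bucket}, into which the full cost of each first-time access of an element $e$ is placed, and a \emph{repeated-access bucket}, which collects the cost of every subsequent access. The first-access bucket can be bounded as follows. Each first access performs work in three places: the gap data structure of \wref{lem:gapds}, the interval data structure within the located gap, and the query-induced restructuring described in \wref{sec:queryalg}. The first of these is exactly what we are comparing against, and so is paid for by the benchmark. The second and third, by the amortized analysis in \wref{sec:queryanalysis}, can be bounded using the credits released from $\mathcal{I}_{i,j}$ together with $O(\log n + x \log c)$ amortized time. Because distinct first-time accesses correspond to distinct elements, the total contribution of the $O(\log n + x \log c)$ terms across all first accesses is $O(B + n)$ by the same telescoping argument used in the overall proof of \wref{thm:main}; this is $O(n)$ extra on top of the comparable $O(B)$-style cost already paid by the benchmark, giving the additive linear term allowed in the statement.

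Next, for the repeated-access bucket, I would use the implementation conventions described just above the theorem: after a first access to $e$, the gap containing $e$ retains its identity, so the node in the gap data structure that corresponds to ``the gap of $e$'' persists across future accesses of $e$. Thus when $e$ is accessed a second (or later) time, locating its gap amounts to exactly one access to the associated node of the gap data structure, which costs precisely what the benchmark would pay for the corresponding access. What remains is to argue that the work done inside the interval data structure and in all bookkeeping is $O(1)$ per repeated access. Here the key points are: (i) after the first access, $e$ is the extreme element of its gap (the largest, by the stated convention), so it can be identified in $O(1)$ time from a pointer maintained with the gap; (ii) a repeat access does not split any interval, does not trigger \wref{rule:merge}, and does not add or remove any gaps, because the gap of $e$ is unchanged. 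Consequently no potential is charged, and the actual work outside the gap data structure is a constant. Summing over all repeated accesses gives at most an additive $O(\text{number of accesses})$ overhead, which is absorbed into the additive linear term.

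The main obstacle I expect is the careful treatment of duplicate keys and of accesses whose ``rank'' is not uniquely determined (\eg, multiple elements with the same key or two gaps that could validly host $e$). Without care, such an access could trigger restructuring of an interval whose content has not actually changed, inflating the repeated-access cost beyond $O(1)$. The remedy, as hinted in the paragraph preceding the theorem, is an implementation rule that, whenever the algorithm has freedom in choosing which gap or interval a repeated element belongs to, always picks the one that keeps $e$ as the extremum of its existing gap and avoids any split; this must be verified case-by-case against the algorithms in \wref{sec:ds}, and in particular checked to be consistent with the query-rank tie-breaking from \wref{sec:prelim} and with \wpref{rule:left-right}. Once this is in place, the two buckets sum to (gap-data-structure cost) plus $O(n)$, which is exactly the statement of the theorem.
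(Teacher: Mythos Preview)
Your repeated-access analysis is correct and matches the paper's (much terser) proof: once an element $e$ has been accessed, its gap persists, locating it is exactly one access in the gap data structure, and no interval restructuring is triggered, so each repeated access costs the benchmark's cost plus $O(1)$. That $O(1)$ per access is the additive linear term.

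The genuine problem is your treatment of first accesses. The theorem explicitly \emph{disregards} the time of first accesses; you are not supposed to bound that time at all, let alone fold it into the additive linear term. Your attempted bound is also wrong: the restructuring cost over all first accesses is $O(B+n)$, but once every element has been accessed each gap is a singleton, so $B=\Theta(n\log n)$. The benchmark (the gap data structure of \wref{lem:gapds}) has no ``comparable $O(B)$-style cost'' to cancel this against~-- it is just a biased search tree whose per-access cost is governed by the access lemma, not by the multiple-selection bound $B$. Indeed, the paper notes right after the theorem that first-access costs total $O(n\log n)$, which is precisely why they must be excluded rather than absorbed. The fix is simply to delete your first-access paragraph and state that first accesses are excluded by hypothesis; the entire additive linear term then comes from the $O(1)$ overhead on repeated accesses.
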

\begin{proof}
Once every item has been accessed at least once, the data structures are the same, save for an extra $O(1)$ time per access in the interval data structure. The cost of the first access may be larger in lazy search trees due to necessary restructuring.%
\end{proof}

While we would ideally like to say that lazy search trees perform within a constant factor of splay trees on any operation sequence, this is not necessarily achieved with the data structure as described here. Time to order elements on insertion is delayed until queries, implying on most operation sequences, and certainly in the worst case, that lazy search trees will perform within a constant factor of splay trees, often outperforming them by more than a constant factor. However, if, say, elements $1, 2, \ldots, n$ are inserted in order in a splay tree, then accessed in order $n, n-1, \ldots, 1$, splay trees perform the operation sequence in $O(n)$ time, whereas lazy search trees as currently described will perform the operation sequence in $O(n \log n)$ time.

\wref{thm:efficientaccess} shows using a splay tree for the gap data structure (\wref{lem:gapds-amortized}) allows lazy search trees to achieve its efficient-access theorems. Observing that the initial costs of first access to elements total $O(n \log n)$, we achieve \wref{cor:splayaccess} below.

\begin{corollary}
\label{cor:splayaccess}
Suppose a splay tree is used as the gap data structure. Then lazy search trees achieve the efficient access theorems of the splay tree, including static optimality, static finger, dynamic finger, working set, scanning theorem, and the dynamic optimality conjecture~\cite{Sleator85,Cole2000a,Cole2000b,Elmasry04}.
\end{corollary}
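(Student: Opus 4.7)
The plan is to combine \wref{thm:efficientaccess} with \wref{lem:gapds-amortized} and then argue that the extra terms incurred by the lazy search tree are dominated by each of the access-theorem bounds.

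First, I would instantiate \wref{thm:efficientaccess} with the gap data structure supplied by \wref{lem:gapds-amortized}, so that the cost of every \emph{repeated} access to an element in the lazy search tree is, up to an additive $O(1)$ per access, the cost of the corresponding access in a splay tree whose keys are the current gaps (with multiplicities). In particular, once an element has been accessed at least once, it sits alone in its own gap, so the sequence of repeated accesses in the lazy search tree is realized verbatim as a sequence of splay operations on $n$ splay-tree nodes. All of the classical splay-tree access theorems (static optimality, static finger, dynamic finger, working set, scanning theorem, and what is implied by the dynamic optimality conjecture) are statements about the total cost of such sequences; hence they transfer directly to the repeated-access part of the lazy search tree's work.

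Next, I would bound the ``first access'' overhead separately. Each of the $n$ elements is inserted once and then first-accessed at most once; by \wref{thm:main} each insertion costs $O(\log n)$ and each first access costs $O(\log n + x\log c)$ amortized, which telescopes with the remainder of the sequence into the $O(B+n)$ term already accounted for, and in any event is $O(n\log n)$ in total. I would then observe that every one of the access theorems in the statement gives a bound that is $\Omega(m)$ on a sequence of $m \ge n$ accesses and, whenever each element is accessed at least once, is itself at least $\Omega(n\log n)$ or includes this as an additive term under the same framing used for splay trees (for example, static optimality is stated with the convention that every element is accessed at least once, exactly as in \wref{def:statico}). Therefore the additive $O(n\log n)$ first-access overhead is absorbed into each theorem's bound up to constant factors.

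Finally, I would verify the list of theorems one by one, essentially as a checklist: each follows by applying the splay-tree version to the sequence of repeated-access operations seen by the gap data structure, then adding the $O(n\log n)$ first-access slack. The main obstacle, and the only point that requires genuine care, is ensuring that the bijection between ``repeated accesses in the lazy search tree'' and ``accesses in the underlying splay tree'' is clean: this requires the implementation choices described just before \wref{thm:efficientaccess} (re-using the identity of the gap of a previously queried element, and not restructuring the interval data structure on a repeated access), together with handling duplicate keys so that repeated accesses to duplicates still cost $O(1)$ in the interval layer. Once these implementation hooks are in place, the corollary reduces to quoting the splay-tree access theorems and absorbing the first-access term.
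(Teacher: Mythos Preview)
Your proposal is correct and follows essentially the same approach as the paper: instantiate \wref{thm:efficientaccess} with the splay-tree gap structure of \wref{lem:gapds-amortized}, then observe that the first-access costs total $O(n\log n)$. The paper's own justification is a single sentence preceding the corollary and does not spell out the absorption argument you give. One small caveat: your claim that every listed bound is $\Omega(n\log n)$ (or includes such a term) does not literally hold for the scanning theorem, whose splay-tree bound is $O(n)$; the paper is equally informal on this point and should be read as ``up to an additive $O(n\log n)$ first-access cost,'' which is consistent with the explicit counterexample the paper gives in the paragraph just before the corollary.
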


\section{Conclusion and Open Problems}
\label{sec:conclude}

We have discussed a data structure that improves the insertion time of binary search trees, when possible. Our data structure generalizes the theories of efficient priority queues and binary search trees, providing powerful operations from both classes of data structures. As either a binary search tree or a priority queue, lazy search trees are competitive. From a theoretical perspective, our work opens the door to a new theory of insert-efficient order-based data structures.

This theory is not complete. Our runtime can be as much as an additive $O(n \log \log n)$ term from optimality in the model we study, providing $O(\log \log n)$ time insert and decrease-key operations as a priority queue when $O(1)$ has been shown to be possible~\cite{Fredman87}. Further room for improvement is seen in our model itself, where 
delaying insertion work further can yield improved runtimes on some operation sequences. We see several enticing research directions around improving these shortcomings and extending our work. We list them as follows:

\begin{enumerate}
	\item Extend our model and provide a data structure so that the order of operations performed is significant. A stronger model would ensure that the number of comparisons performed on an inserted element depends only on the queries performed after that element is inserted.
	\item Within the model we study, improve the additive $O(n \log \log n)$ term in our analysis to worst-case $O(n)$, or give a lower bound that shows this is not possible while supporting all the operations we consider.
	\item Explore and evaluate competitive implementations of lazy search trees. In the priority queue setting, evaluations should be completed against practically efficient priority queues such as binary heaps~\cite{Williams64}, Fibonacci heaps~\cite{Fredman87}, and pairing heaps~\cite{Fredman86}. On binary search tree workloads with infrequent or non-uniformly distributed queries, evaluations should be completed against red-black trees~\cite{Bayer72}, AVL trees~\cite{Adelson62}, and splay trees~\cite{Sleator85}.
	\item Support efficient general merging of unordered data. Specifically, it may be possible to support $O(1)$ or $O(\log n)$ time merge of two lazy search trees when both are used as either min or max heaps.
	\item Although the complexity of a rank-based query must be $\Omega(n)$ when the query falls in a gap of size $|\Delta_i| = \Omega(n)$, the per-operation complexity of \texttt{RankBasedQuery($r$)} could potentially be improved to $O(x \log c + \log n)$ worst-case time instead of amortized time, with $x$ and $c$ defined as in \wref{thm:main}.
	\item Develop an external memory version of lazy search trees for the application of replacing B-trees~\cite{Bayer72}, $B^\epsilon$ trees~\cite{Brodal03}, or log-structured merge trees~\cite{ONeil96} in a database system.
	\item Investigate multidimensional geometric data structures based off lazy search trees. Range trees~\cite{Bentley79}, segment trees~\cite{Bentley77}, interval trees~\cite{Edelsbrunner80,McCreight80}, kd-trees~\cite{Bentley75}, and priority search trees~\cite{McCreight85} are all based on binary search trees. By building them off lazy search trees, more efficient runtimes as well as space complexity may be possible.
\end{enumerate}

Regarding point 3, we have implemented a proof-of-concept version of a lazy search tree in C$++$, taking no effort to optimize the data structure. Our implementation is roughly 400 lines of code not including the gap data structure, to which we use a splay tree~\cite{Sleator85}. Intervals are split via randomized pivoting, as described in \wref{sec:random}. The optimization to support $O(1)$ time average case insertion into the interval data structure is implemented, and the data structure also satisfies the $O(\min(q, n))$ pointer bound by representing data within intervals in a linked list of C$++$ vectors.

Our implementation has high constant factors for both insertion and query operations. For insertion, this is likely due to several levels of indirection, going from a gap, to an interval, to a linked list node, to a dynamically-sized vector. For query, this is likely due to poor memory management. Instead of utilizing swaps, as in competitive quicksort routines, our implementation currently emplaces into the back of C$++$ vectors, a much slower operation. The current method of merging also suggests some query work may be repeated, which although we have shown does not affect theoretical analysis, may have an effect in practice.

Still, initial experiments are promising. Our implementation outperforms both the splay tree which our implementation uses internally as well as C$++$ \texttt{set}, for both low query load scenarios and clustered queries. To give a couple data points, on our hardware, with $n = 1\,000\,000$, our implementation shaves about 30\% off the runtime of the splay tree when no queries are performed and remains faster for anything less than about $2\,500$ uniformly distributed queries. When $n=10\,000\,000$, our implementation shaves about 60\% off the runtime of the splay tree when no queries are performed and remains faster for anything less than about $20\,000$ uniformly distributed queries. The C$++$ \texttt{set} has runtime about 30\% less than our splay tree on uniformly distributed query scenarios. Our experiments against C$++$ STL \texttt{priority\_queue} show that our current implementation is not competitive.

Finally, regarding points 2 and 4, since this article was written we have succeeded in devising a solution using very different techniques that removes the $O(\log \log n)$ terms and supports constant time priority queue merge. The new solution requires sophisticated data structures and is not based on arrays, so the approach discussed herein is likely to be more practical.

%we have made substantial progress since this article was written
%in devising a solution that removes the $O(\log \log n)$ 
%terms and supports constant time priority queue merge.

%we have succeeded in devising a solution that removes the $O(\log \log n)$ factors of the approach discussed herein; the new solution also supports constant time priority queue merge.

%

%

%

%

\subsection*{Acknowledgements}

The authors of this paper would like to thank Ian Munro, Kevin Wu, Lingyi Zhang, Yakov Nekrich, and Meng He for useful discussions on the topics of this paper. We would also like to thank the anonymous reviewers for their helpful suggestions.

\bibliographystyle{alphaurl}
\let\oldthebibliography\thebibliography
\renewcommand\thebibliography[1]{%
	\oldthebibliography{#1}%
	\pdfbookmark[1]{References}{}%
}

\bibliography{ref}

\newcommand{\etalchar}[1]{$^{#1}$}
\begin{thebibliography}{BGRSS15}

\bibitem[AA15]{Aydin15}
Ahmet~Arif Aydin and Kenneth~M. Anderson.
\newblock Incremental sorting for large dynamic data sets.
\newblock In {\em International Conference on Big Data Computing Service and
  Applications}. {IEEE}, 2015.
\newblock \href {https://doi.org/10.1109/bigdataservice.2015.35}
  {\path{doi:10.1109/bigdataservice.2015.35}}.

\bibitem[AM78]{Allen78}
Brian Allen and Ian Munro.
\newblock Self-organizing binary search trees.
\newblock {\em Journal of the ACM}, 25(4):526--535, 1978.
\newblock \href {https://doi.org/10.1145/322092.322094}
  {\path{doi:10.1145/322092.322094}}.

\bibitem[AMT02]{Ar02}
Sigal Ar, Gil Montag, and Ayellet Tal.
\newblock Deferred, self-organizing {BSP} trees.
\newblock {\em Computer Graphics Forum}, 21(3):269--278, September 2002.
\newblock \href {https://doi.org/10.1111/1467-8659.t01-1-00586}
  {\path{doi:10.1111/1467-8659.t01-1-00586}}.

\bibitem[And89]{Andersson89}
Arne Andersson.
\newblock Improving partial rebuilding by using simple balance criteria.
\newblock In {\em Workshop on Algorithms and Data Structures}, 1989.

\bibitem[AR91]{Aggarwal91}
Alok Aggarwal and Prabhakar Raghavan.
\newblock Deferred data structure for the nearest neighbor problem.
\newblock {\em Information Processing Letters}, 40(3):119--122, November 1991.
\newblock \href {https://doi.org/10.1016/0020-0190(91)90164-d}
  {\path{doi:10.1016/0020-0190(91)90164-d}}.

\bibitem[AVL62]{Adelson62}
Georgy Adelson-Velsky and Evgenii Landis.
\newblock An algorithm for the organization of information.
\newblock {\em Proceedings of the USSR Academy of Sciences}, 146:263--266,
  1962.

\bibitem[Bar19]{Barbay2019}
J{\'{e}}r{\'{e}}my Barbay.
\newblock Optimal prefix free codes with partial sorting.
\newblock {\em Algorithms}, 13(1):12, December 2019.
\newblock \href {https://doi.org/10.3390/a13010012}
  {\path{doi:10.3390/a13010012}}.

\bibitem[Bay72]{Bayer72b}
Rudolf Bayer.
\newblock Symmetric binary b-trees: Data structure and maintenance algorithms.
\newblock {\em Acta Informatica}, 1(4):290--306, 1972.

\bibitem[BC60]{Booth60}
A.~D. Booth and A.~J.~T. Colin.
\newblock On the efficiency of a new method of dictionary construction.
\newblock {\em Information and Control}, 3:327--334, 1960.

\bibitem[BCDI07]{Badoiu06}
Mihai B{\u{a}}doiu, Richard Cole, Erik~D. Demaine, and John Iacono.
\newblock A unified access bound on comparison-based dynamic dictionaries.
\newblock {\em Theoretical Computer Science}, 382(2):86--96, August 2007.
\newblock \href {https://doi.org/10.1016/j.tcs.2007.03.002}
  {\path{doi:10.1016/j.tcs.2007.03.002}}.

\bibitem[BCI{\etalchar{+}}20]{Bose2020}
Prosenjit Bose, Jean Cardinal, John Iacono, Grigorios Koumoutsos, and Stefan
  Langerman.
\newblock Competitive online search trees on trees.
\newblock In {\em Symposium on Discrete Algorithms (SODA)}, pages 1878--1891.
  Society for Industrial and Applied Mathematics, January 2020.
\newblock \href {https://doi.org/10.1137/1.9781611975994.115}
  {\path{doi:10.1137/1.9781611975994.115}}.

\bibitem[Ben75]{Bentley75}
Jon~L. Bentley.
\newblock Multidimensional binary search trees used for associative searching.
\newblock {\em Communications of the ACM}, 1975.

\bibitem[Ben77]{Bentley77}
Jon~L. Bentley.
\newblock Solutions to klee's rectangle problems.
\newblock Technical report, Carnegie Mellon University, 1977.

\bibitem[Ben79]{Bentley79}
Jon~Louis Bentley.
\newblock Decomposable searching problems.
\newblock {\em Information Processing Letters}, 8(5):244--251, 1979.

\bibitem[BF03]{Brodal03}
Gerth~St{\o}lting Brodal and Rolf Fagerberg.
\newblock Lower bounds for external memory dictionaries.
\newblock In {\em Symposium on Discrete Algorithms (SODA)}, pages 546--554.
  SIAM, 2003.

\bibitem[BFP{\etalchar{+}}73]{Blum73}
Manuel Blum, Robert~W. Floyd, Vaughan Pratt, Ronald~L. Rivest, and Robert~E.
  Tarjan.
\newblock Time bounds for selection.
\newblock {\em Journal of Computer and System Sciences}, 7(4):448--461, 1973.

\bibitem[BGJ{\etalchar{+}}13]{Barbay13}
J{\'e}r{\'e}my Barbay, Ankur Gupta, Seungbum Jo, Srinivasa Rao~Satti, and
  Jonathan Sorenson.
\newblock Theory and implementation of online multiselection algorithms.
\newblock In {\em European Symposium on Algorithms (ESA)}, pages 109--120.
  Springer, 2013.
\newblock \href {https://doi.org/10.1007/978-3-642-40450-4_10}
  {\path{doi:10.1007/978-3-642-40450-4_10}}.

\bibitem[BGJS11]{Brodal11}
Gerth Brodal, Beat Gfeller, Allan~Gr{\o}nlund J{\o}rgensen, and Peter Sanders.
\newblock Towards optimal range medians.
\newblock {\em Theoretical Computer Science}, 412(24):2588--2601, 2011.

\bibitem[BGLY81]{Borodin81}
A.~Borodin, L.~J. Guibas, N.~A. Lynch, and A.~C. Yao.
\newblock Efficient searching using partial ordering.
\newblock {\em Information Processing Letters}, 12(2):71--75, 1981.

\bibitem[BGRSS15]{Barbay15}
J{\'{e}}r{\'{e}}my Barbay, Ankur Gupta, Srinivasa Rao~Satti, and Jon Sorenson.
\newblock Dynamic online multiselection in internal and external memory.
\newblock In {\em WALCOM: Algorithms and Computation}, pages 199--209.
  Springer, 2015.
\newblock \href {https://doi.org/10.1007/978-3-319-15612-5_18}
  {\path{doi:10.1007/978-3-319-15612-5_18}}.

\bibitem[BGSS16]{BARBAY16}
J{\'e}r{\'e}my Barbay, Ankur Gupta, Srinivasa~Rao Satti, and Jon Sorenson.
\newblock Near-optimal online multiselection in internal and external memory.
\newblock {\em Journal of Discrete Algorithms}, 36:3--17, 2016.
\newblock WALCOM 2015.

\bibitem[BHM09]{Bose13}
Prosenjit Bose, John Howat, and Pat Morin.
\newblock A history of distribution-sensitive data structures.
\newblock In {\em Space-Efficient Data Structures, Streams, and Algorithms},
  pages 133--149. Springer, 2009.
\newblock \href {https://doi.org/10.1007/978-3-642-40273-9_10}
  {\path{doi:10.1007/978-3-642-40273-9_10}}.

\bibitem[BLT12]{Brodal12}
Gerth~St{\o}lting Brodal, George Lagogiannis, and Robert~E. Tarjan.
\newblock Strict {Fibonacci} heaps.
\newblock In {\em Symposium on Theory of Computing (STOC)}. {ACM}, 2012.
\newblock \href {https://doi.org/10.1145/2213977.2214082}
  {\path{doi:10.1145/2213977.2214082}}.

\bibitem[BM72]{Bayer72}
R.~Bayer and E.~M. McCreight.
\newblock Organization and maintenance of large ordered indexes.
\newblock {\em Acta Informatica}, 1(3):173--189, 1972.

\bibitem[BOS17]{Barbay17}
J{\'e}r{\'e}my Barbay, Carlos Ochoa, and Srinivasa~Rao Satti.
\newblock Synergistic solutions on multisets.
\newblock In {\em Annual Symposium on Combinatorial Pattern Matching (CPM)},
  volume~78 of {\em LIPIcs}. Schloss Dagstuhl, 2017.
\newblock \href {https://doi.org/10.4230/LIPIcs.CPM.2017.31}
  {\path{doi:10.4230/LIPIcs.CPM.2017.31}}.

\bibitem[Bro78]{Brown78}
Mark~R. Brown.
\newblock Implementation and analysis of binomial queue algorithms.
\newblock {\em SIAM Journal on Computing}, 7(3):298--319, 1978.

\bibitem[Bro96]{Brodal96}
Gerth Brodal.
\newblock Worst-case efficient priority queues.
\newblock In {\em Symposium on Discrete Algorithms (SODA)}. SIAM, 1996.

\bibitem[Bro13]{Brodal09}
Gerth~St{\o}lting Brodal.
\newblock A survey on priority queues.
\newblock In {\em Space-Efficient Data Structures, Streams, and Algorithms},
  pages 150--163. Springer, 2013.
\newblock \href {https://doi.org/10.1007/978-3-642-40273-9_11}
  {\path{doi:10.1007/978-3-642-40273-9_11}}.

\bibitem[BST85]{Bent85}
Samuel~W. Bent, Daniel~D. Selator, and Robert~E. Tarjan.
\newblock Biased search trees.
\newblock {\em SIAM Journal on Computing}, 14(3):545--568, 1985.

\bibitem[BY76]{Bentley76}
Jon~L. Bentley and Andrew~C. Yao.
\newblock An almost optimal algorithm for unbounded searching.
\newblock {\em Information Processing Letters}, 1976.

\bibitem[CGK{\etalchar{+}}15]{Chalermsook15}
Parinya Chalermsook, Mayank Goswami, L{\'a}szl{\'o} Kozma, Kurt Mehlhorn, and
  Thatchaphol Saranurak.
\newblock Pattern-avoiding access in binary search trees.
\newblock In {\em Symposium on Foundations of Computer Science (FOCS)}, pages
  410--423. {IEEE}, 2015.
\newblock \href {https://doi.org/10.1109/focs.2015.32}
  {\path{doi:10.1109/focs.2015.32}}.

\bibitem[Cha09]{Chan09}
Timothy~M. Chan.
\newblock Quake heaps: A simple alternative to {Fibonacci} heaps.
\newblock In {\em Space-Efficient Data Structures, Streams, and Algorithms},
  pages 27--32. Springer, 2009.
\newblock \href {https://doi.org/10.1007/978-3-642-40273-9_3}
  {\path{doi:10.1007/978-3-642-40273-9_3}}.

\bibitem[CM89]{Cunto89}
Walter Cunto and J.~Ian Munro.
\newblock Average case selection.
\newblock {\em Journal of the ACM}, 36(2):270--279, 1989.
\newblock \href {https://doi.org/10.1145/62044.62047}
  {\path{doi:10.1145/62044.62047}}.

\bibitem[CMS90]{CHING90}
Yu-Tai Ching, Kurt Mehlhorn, and Michiel~H.M. Smid.
\newblock Dynamic deferred data structuring.
\newblock {\em Information Processing Letters}, 35(1):37 -- 40, 1990.

\bibitem[CMSS00]{Cole2000a}
Richard Cole, Bud Mishra, Jeanette Schmidt, and Alan Siegel.
\newblock On the dynamic finger conjecture for splay trees. part {I}: Splay
  sorting $\log n$-block sequences.
\newblock {\em SIAM Journal on Computing}, 30(1):1--43, 2000.

\bibitem[Col00]{Cole2000b}
Richard Cole.
\newblock On the dynamic finger conjecture for splay trees. part {II}: The
  proof.
\newblock {\em SIAM Journal on Computing}, 30(1):44--85, 2000.

\bibitem[DHI{\etalchar{+}}09]{Demaine09}
Erik~D. Demaine, Dion Harmon, John Iacono, Daniel Kane, and Mihai Patrascu.
\newblock The geometry of binary search trees.
\newblock In {\em Symposium on Discrete Algorithms (SODA)}, pages 496--505.
  SIAM, 2009.

\bibitem[DHIP07]{Demaine07}
Erik~D. Demaine, Dion Harmon, John Iacono, and Mihai Patrascu.
\newblock Dynamic optimality--almost.
\newblock {\em Siam Journal of Computing}, 37(1):240--251, 2007.

\bibitem[DM81]{Dobkin81}
David Dobkin and J.~Ian Munro.
\newblock Optimal time minimal space selection algorithms.
\newblock {\em Journal of the Association for Computing Machinery},
  28(3):454--461, 1981.

\bibitem[Dou59]{Douglas59}
A.~S. Douglas.
\newblock Techniques for the recording of, and reference to data in a computer.
\newblock {\em The Computer Journal}, 2:1, 1959.

\bibitem[Dum19]{Dumitrescu19}
Adrian Dumitrescu.
\newblock A selectable sloppy heap.
\newblock {\em Algorithms}, 12(3):58, 2019.
\newblock \href {https://doi.org/10.3390/a12030058}
  {\path{doi:10.3390/a12030058}}.

\bibitem[Ede80]{Edelsbrunner80}
H.~Edelsbrunner.
\newblock Dynamic data structures for orthogonal intersection queries.
\newblock Technical report, Tech. Univ. Graz, 1980.

\bibitem[EEK12]{Edelkamp12}
Stefan Edelkamp, Amr Elmasry, and Jyrki Katajainen.
\newblock The weak-heap data structure: Variants and applications.
\newblock {\em Journal of Discrete Algorithms}, 16:187--205, October 2012.
\newblock \href {https://doi.org/10.1016/j.jda.2012.04.010}
  {\path{doi:10.1016/j.jda.2012.04.010}}.

\bibitem[Elm04]{Elmasry04}
Amr Elmasry.
\newblock On the sequential access theorem and deque conjecture for splay
  trees.
\newblock {\em Theoretical Computer Science}, 314(3):459--466, 2004.

\bibitem[Elm09]{Elmasry09}
Amr Elmasry.
\newblock Pairing heaps with {$O(\log \log n)$} decrease cost.
\newblock In {\em Symposium on Discrete Algorithms (SODA)}. SIAM, 2009.
\newblock \href {https://doi.org/10.1137/1.9781611973068.52}
  {\path{doi:10.1137/1.9781611973068.52}}.

\bibitem[FKS84]{Fredman84}
Michael~L. Fredman, J{\'{a}}nos Koml{\'{o}}s, and Endre Szemer{\'{e}}di.
\newblock Storing a sparse table with $o(1)$ worst case access time.
\newblock {\em Journal of the {ACM}}, 31(3):538--544, June 1984.
\newblock \href {https://doi.org/10.1145/828.1884}
  {\path{doi:10.1145/828.1884}}.

\bibitem[FLN03]{Fagin2003}
Ronald Fagin, Amnon Lotem, and Moni Naor.
\newblock Optimal aggregation algorithms for middleware.
\newblock {\em Journal of Computer and System Sciences}, 66(4):614--656, June
  2003.
\newblock \href {https://doi.org/10.1016/s0022-0000(03)00026-6}
  {\path{doi:10.1016/s0022-0000(03)00026-6}}.

\bibitem[Fox11]{Fox2011}
Kyle Fox.
\newblock Upper bounds for maximally greedy binary search trees.
\newblock In {\em Lecture Notes in Computer Science}, pages 411--422. Springer
  Berlin Heidelberg, 2011.
\newblock \href {https://doi.org/10.1007/978-3-642-22300-6_35}
  {\path{doi:10.1007/978-3-642-22300-6_35}}.

\bibitem[FR75]{FloydRivest1975}
Robert~W. Floyd and Ronald~L. Rivest.
\newblock Expected time bounds for selection.
\newblock {\em Communications of the ACM}, 18(3):165--172, 1975.
\newblock \href {https://doi.org/10.1145/360680.360691}
  {\path{doi:10.1145/360680.360691}}.

\bibitem[Fre93]{Frederickson93}
Greg~N. Frederickson.
\newblock An optimal algorithm for selection in a min-heap.
\newblock {\em Information and Computation}, 104(2):197--214, 1993.
\newblock \href {https://doi.org/10.1006/inco.1993.1030}
  {\path{doi:10.1006/inco.1993.1030}}.

\bibitem[FSST86]{Fredman86}
Michael~L. Fredman, Robert Sedgewick, Daniel~D. Sleator, and Robert~E. Tarjan.
\newblock The pairing heap: A new form of self-adjusting heap.
\newblock {\em Algorithmica}, 1(1):111--129, 1986.

\bibitem[FT87]{Fredman87}
Michael Fredman and Robert~E. Tarjan.
\newblock Fibonacci heaps and their uses in improved network optimization
  algorithms.
\newblock {\em Journal of the Association for Computing Machinery},
  34(3):596--615, 1987.

\bibitem[GL01]{Gum01}
Ben Gum and Richard Lipton.
\newblock Cheaper by the dozen: Batched algorithms.
\newblock In {\em International Conference on Data Mining (ICDM)}. SIAM, 2001.
\newblock \href {https://doi.org/10.1137/1.9781611972719.23}
  {\path{doi:10.1137/1.9781611972719.23}}.

\bibitem[GR93]{Galperin93}
Igal Galperin and Ronald~L. Rivest.
\newblock Scapegoat trees.
\newblock In {\em Proceedings of the Fourth Annual ACM-SIAM Symposium on
  Discrete Algorithms}, 1993.

\bibitem[Hib62]{Hibbard62}
T.~Hibbard.
\newblock Some combinatorial properties of certain trees.
\newblock {\em Assoc. Comp. Mach.}, 9:13, 1962.

\bibitem[HKTZ17]{Hansen15}
Thomas~Dueholm Hansen, Haim Kaplan, Robert~E. Tarjan, and Uri Zwick.
\newblock Hollow heaps.
\newblock {\em ACM Transactions on Algorithms}, 13(3):1--27, 2017.
\newblock \href {https://doi.org/10.1145/3093240} {\path{doi:10.1145/3093240}}.

\bibitem[Hoa61]{Hoare61}
C.~A.~R. Hoare.
\newblock Algorithm 65: Find.
\newblock {\em Communications of the ACM}, 4(7):321--322, 1961.

\bibitem[HST11]{Haeupler11}
Bernhard Haeupler, Siddhartha Sen, and Robert~E. Tarjan.
\newblock Rank-pairing heaps.
\newblock {\em SIAM Journal on Computing}, 40(6):1463--1485, 2011.

\bibitem[Iac01]{Iacono01}
John Iacono.
\newblock Alternatives to splay trees with $o(\log n)$ worst-case access time.
\newblock In {\em Symposium on Discrete Algorithms (SODA)}, pages 516--522.
  SIAM, 2001.

\bibitem[IL16]{Iacono16}
John Iacono and Stefan Langerman.
\newblock Weighted dynamic finger in binary search trees.
\newblock In {\em Symposium on Discrete Algorithms (SODA)}. SIAM, 2016.
\newblock \href {https://doi.org/10.1137/1.9781611974331.ch49}
  {\path{doi:10.1137/1.9781611974331.ch49}}.

\bibitem[Kiw05]{Kiwiel2005}
Krzysztof~C. Kiwiel.
\newblock On floyd and rivest's {SELECT} algorithm.
\newblock {\em Theoretical Computer Science}, 347(1-2):214--238, November 2005.
\newblock \href {https://doi.org/10.1016/j.tcs.2005.06.032}
  {\path{doi:10.1016/j.tcs.2005.06.032}}.

\bibitem[KMMS05]{KMMS}
Kanela Kaligosi, Kurt Mehlhorn, J.~Ian Munro, and Peter Sanders.
\newblock Towards optimal multiple selection.
\newblock In {\em International Colloquium on Automata, Languages and
  Programming (ICALP)}, pages 103--114. Springer, 2005.
\newblock \href {https://doi.org/10.1007/11523468_9}
  {\path{doi:10.1007/11523468_9}}.

\bibitem[KMR88]{Karp88}
Richard~M. Karp, Rajeev Motwani, and Prabhakar Raghavan.
\newblock Deferred data structuring.
\newblock {\em SIAM Journal on Computing}, 17(5):883--902, 1988.

\bibitem[Knu73]{Knuth73}
Donald Knuth.
\newblock {\em The Art of Computer Programming}, volume 3, Sorting and
  Searching.
\newblock Addison-Wesley, 1973.

\bibitem[KS19]{Kozma2019}
L{\'{a}}szl{\'{o}} Kozma and Thatchaphol Saranurak.
\newblock Smooth heaps and a dual view of self-adjusting data structures.
\newblock {\em {SIAM} Journal on Computing}, pages STOC18--45--STOC18--93,
  November 2019.
\newblock \href {https://doi.org/10.1137/18m1195188}
  {\path{doi:10.1137/18m1195188}}.

\bibitem[LT19]{Levy19}
Caleb Levy and Robert~E. Tarjan.
\newblock A new path from splay to dynamic optimality.
\newblock In {\em Symposium on Discrete Algorithms (SODA)}, pages 1311--1330.
  SIAM, 2019.
\newblock \href {https://doi.org/10.1137/1.9781611975482.80}
  {\path{doi:10.1137/1.9781611975482.80}}.

\bibitem[McC80]{McCreight80}
E.~M. McCreight.
\newblock Efficient algorithms for enumerating intersecting intervals and
  rectangles.
\newblock Technical report, Xerox Palo Alto Research Center, 1980.

\bibitem[McC85]{McCreight85}
Edward McCreight.
\newblock Priority search trees.
\newblock {\em SIAM Journal on Scientific Computing}, 14(2):257--276, 1985.

\bibitem[MP05]{Mortensen05}
Christian~W. Mortensen and Seth Pettie.
\newblock The complexity of implicit and space efficient priority queues.
\newblock In {\em Workshop on Algorithms and Data Structures (WADS)}, pages
  49--60. Springer, 2005.
\newblock \href {https://doi.org/10.1007/11534273_6}
  {\path{doi:10.1007/11534273_6}}.

\bibitem[NP08]{Navarro08}
Gonzalo Navarro and Rodrigo Paredes.
\newblock Quickheaps: Simple, efficient, and cache-oblivious, 2008.

\bibitem[NP10]{Navarro10}
Gonzalo Navarro and Rodrigo Paredes.
\newblock On sorting, heaps, and minimum spanning trees.
\newblock {\em Algorithmica}, 57(4):585--620, March 2010.
\newblock \href {https://doi.org/10.1007/s00453-010-9400-6}
  {\path{doi:10.1007/s00453-010-9400-6}}.

\bibitem[NR72]{NIEVER72}
J.~Nievergelt and E.~M. Reingold.
\newblock Binary search trees of bounded balance.
\newblock In {\em Proceedings of the Fourth Annual ACM Symposium on Theory of
  Computing}, STOC '72, pages 137--142, New York, NY, USA, 1972. ACM.

\bibitem[OCGO96]{ONeil96}
Patrick O'Neil, Edward Cheng, Dieter Gawlick, and Elizabeth O'Neil.
\newblock The log-structured merge-tree ({LSM}-tree).
\newblock {\em Acta Informatica}, 33(4):351--385, 1996.

\bibitem[PN06]{Paredes06}
Rodrigo Parades and Gonzalo Navarro.
\newblock Optimal incremental sorting.
\newblock In {\em Workshop on Algorithm Engineering and Experiments (ALENEX)}.
  SIAM, January 2006.
\newblock \href {https://doi.org/10.1137/1.9781611972863.16}
  {\path{doi:10.1137/1.9781611972863.16}}.

\bibitem[PR04]{Pagh04}
Rasmus Pagh and Flemming~Friche Rodler.
\newblock Cuckoo hashing.
\newblock {\em Journal of Algorithms}, 51(2):122--144, 2004.

\bibitem[RP15]{Regla15}
Erik Regla and Rodrigo Paredes.
\newblock Worst-case optimal incremental sorting.
\newblock In {\em Conference of the Chilean Computer Science Society (SCCC)}.
  {IEEE}, November 2015.
\newblock \href {https://doi.org/10.1109/sccc.2015.7416566}
  {\path{doi:10.1109/sccc.2015.7416566}}.

\bibitem[SA96]{Seidel96}
Raimund Seidel and Cecilia~R. Aragon.
\newblock Randomized search trees.
\newblock {\em Algorithmica}, 1996.

\bibitem[Sed77]{Sedgewick1977a}
Robert Sedgewick.
\newblock Quicksort with equal keys.
\newblock {\em SIAM Journal on Computing}, 6(2):240--267, 1977.

\bibitem[Sed78]{Sedgewick1978}
Robert Sedgewick.
\newblock Implementing {Quicksort} programs.
\newblock {\em Communications of the {ACM}}, 21(10):847--857, 1978.

\bibitem[Smi89]{Smid89}
Michiel Smid.
\newblock {\em Dynamic Data Structures on Multiple Storage Media}.
\newblock PhD thesis, University of Amsterdam, 1989.

\bibitem[SRA94]{Shao94}
Zhong Shao, John~H. Reppy, and Andrew~W. Appel.
\newblock Unrolling lists.
\newblock In {\em Proceedings of the 1994 ACM conference on LISP and functional
  programming}, 1994.

\bibitem[ST83]{Sleator83}
Daniel~D. Sleator and Robert~E. Tarjan.
\newblock A data structure for dynamic trees.
\newblock {\em Journal of Computer and System Sciences}, 26(3):362--391, 1983.

\bibitem[ST85]{Sleator85}
Daniel~D. Sleator and Robert~E. Tarjan.
\newblock Self-adjusting binary seach trees.
\newblock {\em Journal of the ACM}, 32(3):652--686, 1985.

\bibitem[Tar85]{Tarjan85}
Robert~E. Tarjan.
\newblock Amortized computational complexity.
\newblock {\em SIAM Journal on Algebraic Discrete Methods}, 6(2), 1985.

\bibitem[Vui78]{Vuillemin78}
Jean Vuillemin.
\newblock A data structure for manipulating priority queues.
\newblock {\em Communications of the ACM}, 21(4):309--315, 1978.

\bibitem[Wil64]{Williams64}
J.~W.~J. Williams.
\newblock Algorithm 232 - heapsort.
\newblock {\em Communications of the ACM}, 7(6):347--348, 1964.

\bibitem[Wil89]{Wilber89}
Robert~E. Wilber.
\newblock Lower bounds for accessing binary search trees with rotations.
\newblock {\em Siam Journal of Computing}, 18(1):56--69, 1989.

\bibitem[Win60]{Windley60}
P.~F. Windley.
\newblock Trees, forests, and rearranging.
\newblock {\em The Computer Journal}, 3:84, 1960.

\end{thebibliography}

\end{document}